\documentclass[11pt]{article}

\usepackage{graphicx,amssymb,amsmath}
\usepackage{amsfonts}
\usepackage{booktabs}
\usepackage{cite}
\usepackage{color}
\usepackage{enumerate}
\usepackage{float}

\usepackage[pdftex, plainpages = false, pdfpagelabels,
                 bookmarks=false,
                 bookmarksopen = true,
                 bookmarksnumbered = true,
                 breaklinks = true,
                 linktocpage,
                 pagebackref,
                 colorlinks = true,  
                 linkcolor = blue,
                 urlcolor  = cyan,
                 citecolor = red,
                 anchorcolor = green,
                 hyperindex = true,
                 hyperfigures
                 ]{hyperref}

\usepackage{mathrsfs}
\usepackage{subfig}
\usepackage{tabularx}
\usepackage{authblk}

\usepackage[in]{fullpage}
\usepackage[top=0.8in, bottom=0.9in, left=0.9in, right=0.9in]{geometry}
\def\bbR{\mathbb{R}}
\def\calA{\mathcal{A}}
\def\calD{\mathcal{D}}
\newtheorem{observation}{Observation}

\newtheorem{corollary}{Corollary}

\newtheorem{lemma}{Lemma}
\newtheorem{theorem}{Theorem}
\newenvironment{proof}{\noindent {\textbf{Proof:}}\rm}{\hfill $\Box$ \rm\bigskip}







\title{A Deterministic Partition Tree and Applications\thanks{A preliminary version of this paper will appear in {\em Proceedings of the 33rd Annual European Symposium on Algorithms (ESA 2025)}.}
}

\author{
Haitao Wang
}
\affil{Kahlert School of Computing\\
University of Utah, Salt Lake City, UT 84112, USA
\\ {\tt haitao.wang@utah.edu}}




\begin{document}

\pagestyle{plain}
\pagenumbering{arabic}
\setcounter{page}{1}
\date{}

\thispagestyle{empty}
\maketitle

\vspace{-0.35in}
\begin{abstract}
In this paper, we present a deterministic variant of Chan's randomized partition tree [Discret. Comput. Geom., 2012].
This result leads to numerous applications. In particular, for $d$-dimensional simplex range counting (for any constant $d \ge 2$), we construct a data structure using $O(n)$ space and $O(n^{1+\epsilon})$ preprocessing time, such that each query can be answered in $o(n^{1-1/d})$ time (specifically, $O(n^{1-1/d} / \log^{\Omega(1)} n)$ time), thereby breaking an $\Omega(n^{1-1/d})$ lower bound known for the semigroup setting.
Notably, our approach does not rely on any bit-packing techniques. 
We also obtain deterministic improvements for several other classical problems, including simplex range stabbing counting and reporting, segment intersection detection, counting and reporting, ray-shooting among segments, and more.
Similar to Chan's original randomized partition tree, we expect that additional applications will emerge in the future, especially in situations where deterministic results are preferred.
\end{abstract}


{\em Keywords:} partition trees, simplex range searching, segment intersection queries, ray-shootings, multi-level data structures

\section{Introduction}
\label{sec:intro}

Simplex range searching is a fundamental problem in computational geometry. Given a set $P$ of $n$ points in the $d$-dimensional space $\bbR^d$ for a constant $d\geq 2$, the goal is to build a data structure so that points of $P$ inside a query simplex can be found efficiently. The problem has been extensively studied (see \cite{ref:AgarwalRa17,ref:AgarwalSi17,ref:MatousekGe94} for some excellent surveys).
For solving the problem with small space (e.g., near linear), one powerful technique is {\em partition trees}, e.g., ~\cite{ref:ChanOp12,ref:EdelsbrunnerHa86,ref:HausslerEp87,ref:MatousekEf92,ref:MatousekRa93,ref:WillardPo82,ref:YaoA83,ref:YaoPa89}. In particular, using a partition tree Matou\v{s}ek~\cite{ref:MatousekEf92} built a data structure of $O(n)$ space in $O(n\log n)$ time and each simplex range query can be answered in $O(n^{1-1/d}\log^{O(1)} n)$ time. Subsequently Matou\v{s}ek~\cite{ref:MatousekRa93} gave another more complicated data structure of $O(n)$ space with $O(n^{1+\epsilon})$ preprocessing time and $O(n^{1-1/d})$ query time; throughout the paper let $\epsilon$ represent an arbitrarily small positive constant. Chazelle~\cite{ref:ChazelleLo89} proved that $\Omega(n^{1-1/d}/\log n)$ (and $\Omega(\sqrt{n})$ for $d=2$) is a lower bound on the query time for an $O(n)$-space data structure; it is widely believed that the $\log n$ factor is an artifact of the proof. Although the result of \cite{ref:MatousekRa93} seems to achieve optimal query time with linear space, it is not quite satisfactory. One reason is that partition trees are often used as backbone for designing multi-level data structures and certain properties of the result of \cite{ref:MatousekRa93} makes this challenging, e.g., it has a special root of $O(n^{1/d}\log n)$ degree whose children may have overlapping cells and it does not guarantee the optimal crossing number except at the bottom most level of the tree. To address these issues,
Chan~\cite{ref:ChanOp12} later proposed a randomized partition tree of $O(n)$ space that can be built in $O(n\log n)$ expected time and the query time is bounded by $O(n^{1-1/d})$ w.h.p.\footnote{With high probability, i.e., with probability at least $1-1/n^{\delta}$ for an arbitrarily large constant $\delta$.}  Comparing to Matou\v{s}ek's result~\cite{ref:MatousekRa93}, Chan's partition tree has many nice properties, e.g., each node has $O(1)$ children, crossing number is optimal (with w.h.p) at almost all layers (except the top few layers), and children’s cells at each node are pairwise disjoint. These properties make Chan's partition tree quite amenable to multi-level data structures~\cite{ref:ChanOp12,ref:ChanHo23,ref:ChanSi23}.

Simplex range searching has several versions: (1) Counting: compute the number of points of $P$ inside the query simplex $\Delta$; (2) reporting: report all points of $P$ inside $\Delta$; (3) semigroup query (which generalizes the counting query): compute the sum of the weights of the points of $P$ inside $\Delta$, assuming that each point of $P$ is assigned a weight from a semigroup. The above results~\cite{ref:ChanOp12,ref:MatousekEf92,ref:MatousekRa93} are applicable to semigroup queries and can also be modified to solve range reporting with an additive term $k$ in the query time, where $k$ is the output size. In particular, Chazelle's lower bound~\cite{ref:ChazelleLo89} is for the semigroup setting.

Since Chan's partition tree is randomized, for those who need deterministic algorithms, Matou\v{s}ek's partition trees~\cite{ref:MatousekEf92,ref:MatousekRa93} are still the main resort. In this paper, we ``partially'' derandomize Chan's partition tree and obtain a deterministic tool, especially for designing multi-level data structures. More specifically, our partition tree is similar to Chan's (e.g., $O(n)$ space, optimal crossing number at all levels except the top few levels, disjointness of children's cells of each node); however, the degree of each node is logarithmic instead of constant (that is why we used ``partially'' above).
Albeit this drawback, our tree is still powerful enough to have many applications, as discussed below.


\paragraph{Simplex range counting.}
For simplex range counting, we construct a data structure of $O(n)$ space that can answer each query in $O(n^{1-1/d}/\log^{\Omega(1)} n)$ time.
Note that this does not violate Chazelle's lower bound~\cite{ref:ChazelleLo89} as it is for the more general semigroup queries. The preprocessing time is $O(n^{1+\epsilon})$.

To achieve the result, we construct our partition tree so that each leaf has $O(\log^{\tau} n)$ points of $P$ for an arbitrarily small constant $\tau>0$. Because this number is small, we can afford to preprocess all leaves in $O(n)$ time by considering all possible configurations for queries; in this way, each query on a single leaf $v$ can be answered in $O(\log\log n)$ time. While this kind of technique may not be quite surprising, it has never been used in simplex range searching, perhaps because previous work has been focusing on the semigroup queries while this technique does not work in that setting.
Note that our above result is also applicable to simplex range emptiness queries.

It should be noted that very recently Chan and Zheng~\cite{ref:ChanSi23} obtained similar randomized results (i.e., $O(n)$ words of space and $o(n^{1-1/d})$ query time w.h.p.) for other data structures and also mentioned a possibility of achieving such result for simplex range counting. One difference is that their technique uses 
bit-packing by assuming each word has $\Omega(\log n)$ bits (note that the $\log n$-bit word RAM is also a conventional computational model), while ours does not use bit-packing. 
In addition, their result is randomized while ours is deterministic.

\paragraph{Simplex range stabbing.}
Given a set of $n$ simplices in $\bbR^d$, the {\em simplex range stabbing counting problem} is to build a data structure to compute the number of simplices containing a query point. Using Chan's partition tree, Chan and Zheng~\cite{ref:ChanSi23} built a randomized data structure of $O(n\log\log n)$ space in $O(n\log n)$ expected preprocessing time and the query time is $O(n^{1-1/d})$ w.h.p. Using our new  deterministic partition tree, we build a deterministic data structure of $O(n\log\log n)$ space in $O(n^{1+\epsilon})$ preprocessing time and the query time is $O(n^{1-1/d}/\log^{\Omega(1)}n)$. Previously, the best deterministic results~\cite{ref:MatousekEf92,ref:MatousekRa93} have $O(n^{1+\epsilon})$ preprocessing time, $O(n\log^{O(1)}n)$ space, and $O(n^{1-1/d}\log^{O(1)}n)$ query time; or $O(n2^{\sqrt{\log n}})$ preprocessing time and space, and $n^{1-1/d}\cdot 2^{O(\sqrt{\log n})}$ query time.

For the {\em reporting problem} (i.e., report all simplices containing a query point), the randomized data structure of \cite{ref:ChanSi23} has the same performance as above except that the query time becomes $O(k+n^{1-1/d})$ w.h.p., where $k$ is the output size. We also obtain the same deterministic result as above with $O(k+n^{1-1/d}/\log^{\Omega(1)} n)$ query time.

Using bit-packing tricks, Chan and Zhen~\cite{ref:ChanSi23} further reduce the space of their data structure to $O(n)$ words of space. Again, all our results in this paper use the conventional computational model, without allowing bit-packing techniques.

\paragraph{Segment intersection searching.}
Given a set of $n$ (possibly intersecting) line segments in the plane, the {\em segment intersection counting problem} is to build a data structure to compute the number of segments intersecting a query segment. Chan and Zhen~\cite{ref:ChanSi23} built a randomized data structure of $O(n\log\log n)$ space (which again can be reduced to $O(n)$ words of space if bit-packing tricks are allowed) in $O(n\log n)$ expected preprocessing time and the query time is $O(\sqrt{n})$ w.h.p. Using our new deterministic partition tree, we build a deterministic data structure of $O(n\log\log n)$ space in $O(n^{1+\epsilon})$ preprocessing time and the query time is $O(\sqrt{n}/\log^{\Omega(1)} n)$.
The previously best deterministic result~\cite{ref:Bar-YehudaVa94} built a data structure of $O(n\log^2 n)$ space in $O(n^{3/2})$ time that can answer each query in $O(\sqrt{n}\log n)$ time.

For the {\em reporting problem} (i.e., report all segments intersecting a query segment), the randomized data structure of \cite{ref:ChanSi23} has the same performance as above except that the query time becomes $O(k+\sqrt{n})$ w.h.p., where $k$ is the output size. We also obtain the same deterministic result as above with $O(k+\sqrt{n}/\log^{\Omega(1)} n)$ query time.



\paragraph{Segment intersection detection.}
Given a set of $n$ (possibly intersecting) line segments in the plane, the problem is to build a data structure to determine whether a query line intersects any segment~\cite{ref:ChengAl92,ref:WangAl20}. The previously best deterministic result~\cite{ref:WangAl20} builds an $O(n)$-space data structure in $O(n^{3/2})$ time and each query can be answered in $O(\sqrt{n}\log n)$ time. Using our new partition tree, we build an $O(n)$-space data structure with $O(\sqrt{n}/\log^{\Omega(1)} n)$ query time and $O(n^{1+\epsilon})$ preprocessing time.

\paragraph{Ray-shooting among non-intersecting segments.}
Given a set of $n$ non-intersecting line segments in the plane, the problem is to build a data structure to find the first segment hit by a query ray~\cite{ref:AgarwalRa93,ref:Bar-YehudaVa94,ref:OvermarsSt90}.  The previously best deterministic result~\cite{ref:WangAl20} builds an $O(n)$-space data structure in $O(n^{3/2})$ time and each query can be answered in $O(\sqrt{n}\log n)$ time. Using our new partition tree, we build an $O(n)$-space data structure with $O(\sqrt{n}/\log^{\Omega(1)} n)$ query time and $O(n^{1+\epsilon})$ preprocessing time.

If the segments are allowed to intersect, then the problem has also been studied~\cite{ref:AgarwalRa93,ref:AgarwalAp93,ref:Bar-YehudaVa94,ref:ChanOp12,ref:ChengAl92,ref:GuibasIn88,ref:OvermarsSt90}. The previously best deterministic result~\cite{ref:WangAl20} builds an $O(n\log n)$-space data structure in $O(n^{3/2})$ time and each query can be answered in $O(\sqrt{n}\log n)$ time. Chan and Zheng~\cite{ref:ChanSi23} built a randomized data structure of $O(n\log\log n)$ space (which again can be reduced to $O(n)$ words of space if bit-packing tricks are allowed) in $O(n\log n)$ expected preprocessing time and the query time is $O(\sqrt{n})$ w.h.p.



\paragraph{Outline.}
The rest of the paper is organized as follows. After introducing a basic tool (i.e., cuttings) in Section~\ref{sec:pre}, we present our partition tree in Section~\ref{sec:partitiontree}. The simplex range counting problem is treated in Section~\ref{sec:simcount}. The simplex range stabbing and the segment intersection searching problems are discussed in Section~\ref{sec:simstab}. Subsequenctly, we solve the segment intersection detection and the ray-shooting problems in Sections~\ref{sec:segdetect} and \ref{sec:rayshoot}, respectively.

\section{Preliminaries}
\label{sec:pre}

Let $H$ be a set of $n$ hyperplanes in $\bbR^d$. We use $\calA(H)$ to denote the arrangement of $H$, which can be computed in $O(n^d)$ time~\cite{ref:EdelsbrunnerCo86}.
For a compact region $R\in \bbR^d$, we use
$H_R$ to denote the subset of hyperplanes of $H$ that intersect the relative interior of $R$ but does not contain $R$ (we also say that these hyperplanes {\em cross} $R$).
A {\em cutting} for $H$ is a collection $\Xi$ of closed cells (each of which is a simplex, possibly unbounded) with disjoint interiors, which together cover the entire space $\bbR^d$~\cite{ref:ChazelleCu93,ref:MatousekRa93}.
The {\em size} of $\Xi$ is the number of cells of $\Xi$. For a parameter $r$ with $1\leq r\leq n$, a {\em $(1/r)$-cutting} for $H$ is a cutting $\Xi$ satisfying $|H_{\sigma}|\leq n/r$ for every cell $\sigma\in \Xi$.



For any $1\leq r\leq n$, a $(1/r)$-cutting of size $O(r^d)$ for $H$ can be computed in $O(nr^{d-1})$ time~\cite{ref:ChazelleCu93}. We further have the following lemma (similar results were mentioned before~\cite{ref:AgarwalPs05,ref:ChanOp12,ref:WangUn23}).

\begin{lemma}\label{lem:cutting}{\em (Cutting Lemma)}
Let $H$ be a set of $n$ hyperplanes and $\Delta$ a simplex in $\bbR^d$. For any $1\leq r\leq n$, we can compute a $(1/r)$-cutting of $O(K\cdot (r/n)^d+r^{d-1+\beta})$ cells for $H$ whose union is $\Delta$ in $O(K\cdot (r/n)^{d-1}+nr^{d-2+\beta})$ time, where $K$ is the number of vertices of $\calA(H)$ inside $\Delta$ and $\beta$ is an arbitrarily small positive constant.
\end{lemma}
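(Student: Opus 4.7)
I would construct the cutting by a top-down hierarchical refinement that only ever produces cells meeting $\Delta$. Let $\rho$ be a sufficiently large constant and set $\ell=\lceil\log_\rho r\rceil$. Starting from $\Xi_0=\{\Delta\}$, I build a sequence $\Xi_0,\Xi_1,\ldots,\Xi_\ell$ in which $\Xi_i$ is the restriction to $\Delta$ of a $(1/\rho^i)$-cutting of $H$. To obtain $\Xi_{i+1}$ from $\Xi_i$, for each surviving cell $\sigma\in\Xi_i$ I invoke Chazelle's standard $(1/\rho)$-cutting subroutine on $H_\sigma$, clip each of the $O(\rho^d)$ resulting sub-cells to $\sigma\cap\Delta$, and discard any that no longer touch $\Delta$. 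The final $\Xi_\ell$ is then a $(1/r)$-cutting for $H$ whose union is $\Delta$.

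\textbf{Counting cells.} The crux is the output-sensitive estimate $|\Xi_i|=O(K(\rho^i/n)^d+\rho^{\,i(d-1+\beta')})$ for an arbitrarily small constant $\beta'>0$, proved by splitting surviving cells into \emph{interior} cells (those lying entirely inside $\Delta$) and \emph{boundary} cells (those crossing $\partial\Delta$). For the interior cells, a standard charging argument assigns each cell to the arrangement vertices of $\calA(H)$ it contains, yielding $O(K(\rho^i/n)^d)$ cells. For the boundary cells, every such cell lies in the zone of at least one of the $d+1$ hyperplanes supporting the facets of $\Delta$; a standard zone-complexity estimate for a $(1/\rho^i)$-cutting then gives $O(\rho^{\,i(d-1+\beta')})$ boundary cells, where the $\beta'$ absorbs the sub-polynomial overhead intrinsic to zone bounds for hierarchical cuttings.

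\textbf{Time and main difficulty.} Processing a single cell $\sigma$ at level $i$ costs $O(|H_\sigma|\,\rho^{d-1})=O(n\rho^{d-1-i})$, so the per-level cost is $O(K(\rho^i/n)^{d-1}+n\rho^{\,i(d-2+\beta')})$. Summing the resulting geometric series over $i=0,\ldots,\ell$, which, because $\rho$ is a large constant, is dominated by its last term, yields the claimed total time $O(K(r/n)^{d-1}+nr^{d-2+\beta})$ as long as $\beta'\le\beta$; the cell-count bound follows by an identical computation. The main obstacle is controlling the boundary term: one has to show that the combined zone complexity of $\Delta$'s $d+1$ bounding hyperplanes in a $(1/\rho^i)$-cutting does not accumulate across the $\Theta(\log r)$ levels of the hierarchy. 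Choosing $\rho$ large enough as a function of $\beta$ keeps the corresponding geometric series dominated by the top level and confines the accumulated overhead to the $r^{\beta}$ budget.
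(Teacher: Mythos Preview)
Your high-level construction—running Chazelle's hierarchical refinement with $\Xi_0=\{\Delta\}$ and summing the per-level work as a geometric series—is exactly what the paper does. The paper's proof is terse: it says to follow Chazelle's analysis verbatim but replace the global vertex bound $\sum_{s\in C_{k-1}}v(H_{|s};s)\le\binom{n}{d}$ by $\le K$ (valid because every cell now lies in $\Delta$), which turns Chazelle's recurrence into
\[
|C_k|\le c\Bigl(\tfrac{r_0^k\log r_0}{n}\Bigr)^d K \;+\; c\,r_0^{d-1}(\log r_0)^d\,|C_{k-1}|,
\]
and then solves it by induction to get $|C_k|\le r_0^{d(k+1)}K/n^d + r_0^{(k+1)(d-1+\beta)}$.

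The gap in your write-up is the cell-count analysis via the interior/boundary split. Your boundary bound via the zone of $\partial\Delta$ is fine, but the claimed charging for interior cells—``assign each cell to the arrangement vertices of $\calA(H)$ it contains''—does not work as stated: an interior cell need not contain any vertex of $\calA(H)$, so there is nothing to charge it to. (The vertices of a cell in the hierarchy are vertices of \emph{sampled} sub-arrangements and intersections with parent-cell boundaries, not necessarily vertices of $\calA(H)$.) The two terms in the correct bound do not correspond to interior versus boundary cells; they come from the two terms of Chazelle's per-level refinement estimate—one additive term controlled by $\sum_\sigma v(H_\sigma;\sigma)\le K$, and one multiplicative $r_0^{d-1}\,\mathrm{polylog}(r_0)$ blow-up per existing cell—and it is solving that recurrence, not a geometric dichotomy, that yields $O\bigl(K(r/n)^d+r^{d-1+\beta}\bigr)$. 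If you replace your interior/boundary argument with this recurrence (which is precisely what ``replace $\binom{n}{d}$ by $K$'' means), the rest of your outline, including the time analysis, goes through unchanged.
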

\begin{proof}
We apply Chazelle's algorithm~\cite{ref:ChazelleCu93} but only on the region inside $\Delta$ (i.e., starting with $C_0=\Delta$ following the notation of \cite{ref:ChazelleCu93}). A detailed analysis for the 2D case is given in~\cite[Appendix A]{ref:WangUn23}. Below we sketch how to modify the analysis in~\cite{ref:ChazelleCu93} accordingly by following the notation there.

The analysis follows the same approach except that we use $K$ to replace $\binom{n}{d}$ in the formula $\sum_{s\in C_{k-1}}v(H_{|s};s)\leq \binom{n}{d}$ in \cite[Page 153]{ref:ChazelleCu93}. As such, the subsequent formula becomes
$$|C_k|\leq c\left(\frac{r_0^k\log r_0}{n}\right)^d\cdot K+cr_0^{d-1}(\log r_0)^d|C_{k-1}|.$$
Then, one can prove by induction that $|C_k|\leq r_0^{d(k+1)}\cdot K/n^d+r_0^{(k+1)\cdot (d-1+\beta)}$, for an arbitrarily small constant $\beta>0$.
Therefore, the total number of cells of the cutting inside $\Delta$ is as stated in the lemma.

For the time analysis, following the same formula $\sum_{0\leq k\leq \lceil \log_{r_0}r\rceil}\frac{n}{r_0^k}|C_k|$ in \cite[Page 153]{ref:ChazelleCu93} and using the above inequality for $|C_k|$, we can derive the time complexity as stated in the lemma. 
\end{proof}

Throughout the paper, $\beta$ always refers to the one in the above lemma.

\section{Deterministic partition tree}
\label{sec:partitiontree}

In this section, we present our deterministic partition tree.
We start with the following lemma, which derandomizes Chan's partition refinement theorem (i.e., Theorem 3.1~\cite{ref:ChanOp12}).

\begin{lemma}\label{lem:partition}
Let $P$ be a set of $n$ points and $H$ a set of $m$ hyperplanes in $\bbR^d$. Suppose there are $t$ interior-disjoint cells whose union covers $P$, such that each cell contains at most $2n/t$ points of $P$ and each hyperplane crosses at most $\kappa$ cells. Then, for any $b\geq 4$, we can divide every cell into $O(b)$ disjoint subcells each containing at most $2n/(bt)$ points of $P$, for a total of at most $bt$ subcells, such that the total number of subcells crossed by any hyperplane in $H$ is bounded by
\begin{align}\label{equ:bound}
    O((b\cdot t)^{1-1/d}+b^{1-1/(d-1+\beta)}\cdot \kappa\cdot \log t + b\cdot \log m).
\end{align}
\end{lemma}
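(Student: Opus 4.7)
The plan is to mirror the structure of Chan's randomized proof of Theorem 3.1 in \cite{ref:ChanOp12}, replacing the random-sample step by a deterministic cutting from Lemma~\ref{lem:cutting}. Because Lemma~\ref{lem:cutting} produces a cutting of size $O(r^{d-1+\beta})$ rather than the optimal $O(r^d)$, the per-cell crossing guarantee weakens from Chan's $b^{1-1/(d-1)}$ to $b^{1-1/(d-1+\beta)}$, explaining the exponent in the second term of the claimed bound.

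Concretely, I would process each of the $t$ input cells $\Delta$ independently. Writing $H_\Delta$ for the hyperplanes of $H$ crossing $\Delta$, I would apply Lemma~\ref{lem:cutting} to $H_\Delta$ inside $\Delta$, choosing $r = \Theta(b^{1/(d-1+\beta)})$ so that the resulting $(1/r)$-cutting $\Xi_\Delta$ has $\Theta(b)$ simplicial subcells with disjoint interiors (the $K_\Delta$ term in Lemma~\ref{lem:cutting} only forces $r$ smaller when $K_\Delta$ is large, which only helps). Each subcell of $\Xi_\Delta$ is crossed by at most $|H_\Delta|/r$ hyperplanes of $H_\Delta$. By a double-counting argument across all $t$ cells, the total number of (subcell, hyperplane) incidences is $O\!\left(\sum_\Delta |H_\Delta| \cdot b/r\right) = O\!\left(\kappa \cdot m \cdot b^{1-1/(d-1+\beta)}\right)$, so the \emph{average} hyperplane crosses $O(\kappa \cdot b^{1-1/(d-1+\beta)})$ subcells in total. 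The $\log t$ factor in the second term, I anticipate, arises from converting this average-case bound into a uniform worst-case bound, likely through a weighted rebalancing that adjusts the parameter $r$ used in each $\Delta$ according to how many of the ``troublesome'' hyperplanes cross $\Delta$.

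The cutting-based partition does not by itself satisfy the point-count constraint of at most $2n/(bt)$ points per subcell. To enforce it, I would further subdivide each too-heavy $\sigma \in \Xi_\Delta$ along a carefully chosen small family of auxiliary hyperplanes. One deterministic option is to compute an $\epsilon$-approximation/test set of size $O(\log m)$ for $H$ via the standard derandomization (e.g., \cite{ref:MatousekEf92}) and to split each overfull subcell in a binary-space-partition fashion along these test hyperplanes until the point count drops below the threshold. Because each of the $O(\log m)$ test hyperplanes contributes at most $O(b)$ extra crossings per hyperplane of $H$, the total extra crossings per hyperplane is $O(b\log m)$, matching the third term.

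The leading $O((bt)^{1-1/d})$ term would be obtained via a direct global observation: the construction above can always be dominated, on a per-hyperplane basis, by applying Lemma~\ref{lem:cutting} globally to $H$ with $r=(bt)^{1/d}$, which produces $O(bt)$ cells and guarantees that every hyperplane crosses $O((bt)^{1-1/d})$ of them. The hardest step, I expect, will be the deterministic point-balancing inside $\Xi_\Delta$: the randomized canonical partition of Chan delivers disjoint simplicial subcells with the correct point-count essentially ``for free,'' whereas the deterministic test-set refinement must simultaneously preserve simplicial shape, interior disjointness, the $2n/(bt)$ point-count, and the weak crossing guarantee of $\Xi_\Delta$ --- a balancing act that will likely require an iterated construction in the spirit of Matou\v{s}ek's deterministic partition tree~\cite{ref:MatousekEf92}.
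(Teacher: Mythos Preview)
Your proposal misses the core mechanism of both Chan's proof and the paper's derandomization: the \emph{multiplicative-weights update}. In the paper (as in Chan), the cells are processed \emph{sequentially}, and after subdividing each $\Delta_i$ one multiplies the weight of every hyperplane $h$ by $(1+1/b)^{\lambda_i(h)}$. The per-hyperplane crossing bound then follows from $\lambda(h)\le\log_{1+1/b}(|H_0|/C)=O(b\log(|H_0|/C))$, and the entire analysis reduces to bounding $\sum_i\alpha_i$ where $\alpha_i=|H_i(\Delta_i)|/(r_i|H_i|)$. Processing the cells independently, as you propose, gives only the \emph{average} incidence count $O(\kappa m\, b^{1-1/(d-1+\beta)})$; there is no known way to turn that into a worst-case bound by ``adjusting $r$ per cell'' without the weights framework.

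Consequently, your attributions of the three terms are each off. The $b\log m$ term has nothing to do with point balancing or test sets; it is simply the $\log m$ in $\log(|H_0|/C)$ coming from the initial total weight $Cm$. Point balancing in the paper is done by plain vertical cuts and only affects the constant in the subcell count. The $(bt)^{1-1/d}$ term does not come from a separate global cutting; it arises because for the cells where the vertex-count term $A_\Delta$ dominates, interior-disjointness gives $\sum_\Delta N_\Delta(H_i)\le |H_i|^d$, so the greedy choice makes $\alpha_i=O((bi)^{-1/d})$ and summing yields $O(t^{1-1/d}/b^{1/d})$. Finally, the $\log t$ factor is the heart of the derandomization: Chan picks $\Delta_i$ uniformly at random from the $i$ remaining cells, giving $\mathbb{E}[\alpha_i]=O(\kappa/(i\,b^{1/(d-1+\beta)}))$ on the $B_\Delta$-dominated side; the paper replaces this by a deterministic greedy rule (split the remaining cells into $S_{i1},S_{i2}$ by which of $A_\Delta,B_\Delta$ dominates, then take the minimizer from the larger half), and the $\log t$ is exactly the harmonic sum $\sum_{i\le t}1/i$ that this greedy argument incurs.
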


\paragraph{Remark.}
Comparing to Chan's partition refinement theorem, there is an extra $\log t$ in the second term of \eqref{equ:bound}. As will be seen later, due to this extra factor, we have to make each node of our partition tree have logarithmically many children instead of constant.

\subsection{Proving Lemma~\ref{lem:partition}}
This subsection is devoted to the proof of Lemma~\ref{lem:partition}.

Let $S$ denote the set of all $t$ given cells in Lemma~\ref{lem:partition}.
Let $H'$ be a multiset containing $C$ copies of each hyperplane in $H$, for a parameter $C$ that is a sufficiently large power of $b$ (so that all future multiplicities are integers; the actual value of $C$ is not important). For any multiset $H''$ of $H$,
the {\em size} of $H''$, denoted by $|H''|$, is the sum of the multiplicities of the hyperplanes in $H''$. For a cell $\Delta$, let $N_{\Delta}(H'')$ denote the number of vertices of the arrangement $\calA(H'')$ of $H''$ inside $\Delta$, counting multiplicities (note that the multiplicity of a vertex is the product of the multiplicities of its defining hyperplanes); let $H''(\Delta)$ denote the (multi-)subset of hyperplanes of $H''$ crossing $\Delta$.

We process the $t$ cells of $S$ iteratively one by one. The order they will be processed is carefully chosen (in contrast, a random order is used in the proof of \cite{ref:ChanOp12}, which is a major difference between our proof and that in \cite{ref:ChanOp12}). We will assign indices to the cells following the reverse order they are processed.
Suppose we have processed $t-i$ cells, which have been assigned indices and denoted by $\Delta_t,\Delta_{t-1},\ldots,\Delta_{i+1}$.
In the next iteration, we will find a cell among the unprocessed cells of $S$ and process it (and the cell will be assigned index $i$ as $\Delta_i$).
Suppose we now have a multiset $H_i$ after cell $\Delta_{i+1}$ is processed (initially let $H_{t+1}=H'$).
Let $S_i$ denote the subset of unprocessed cells of $S$. Hence, $|S_i|=i$.

For each cell $\Delta\in S_i$, define
$$A_{\Delta}=\left(\frac{N_{\Delta}(H_i)}{b}\right)^{1/d},\ \  B_{\Delta}=\frac{|H_i(\Delta)|}{b^{1/(d-1+\beta)}}.$$
Define $S_{i1}$ as the subset of cells $\Delta\in S_i$ with $A_{\Delta}\geq B_{\Delta}$. Let $S_{i2}=S\setminus S_{i1}$. If $|S_{i1}|\geq i/2$, then we define $\Delta_i$ as the cell $\Delta$ of $S_{i1}$ that minimizes the value $A_{\Delta}$; otherwise define $\Delta_i$ as the cell $\Delta$ of $S_{i2}$ that minimizes $B_{\Delta}$. Note that the above way of defining $\Delta_i$ is a key difference from Chan's approach~\cite{ref:ChanOp12}, where $\Delta_i$ is chosen from $S_i$ randomly.
We now process $\Delta_i$ in the following three steps (which is similar to Chan's approach).

\begin{enumerate}
\item
Construct a $(1/r_i)$-cutting for $H_i$ inside $\Delta_i$ with
$$r_i=c\cdot \min\left\{|H_i(\Delta_i)|\cdot \left(\frac{b}{N_{\Delta_i}(H_i)}\right)^{1/d},b^{1/(d-1+\beta)}\right\},$$
for some constant $c$. By the cutting lemma, the number of subcells inside $\Delta_i$ is $O(N_{\Delta_i}(H_i)\cdot (r_i/|H_i(\Delta_i)|)^d+r_i^{d-1+\beta})$, which can be made at most $b/4$ for a sufficiently small $c$.

\item
We further subdivide each subcell of $\Delta_i$ (e.g., using vertical cuts) so that each subcell contains at most $2n/(tb)$ points of $P$. The number of extra cuts is $O(b)$ as $\Delta_i$ contains at most $2n/t$ points of $P$. The total number of extra cuts for processing all $t$ cells of $S$ is at most $\frac{n}{2n/(tb)}+t=bt/2+t$, and thus the total number of subcells after processing all $t$ cells is at most $bt/4 + bt/2+t=3bt/4+t$, which is at most $bt$ for $b\geq 4$.

\item
For each distinct hyperplane $h\in H_i$, multiply the multiplicity of $h$ in $H_i$ by $(1+1/b)^{\lambda_i(h)}$, where $\lambda_i(h)$ is the number of subcells of $\Delta_i$ crossed by $h$.
Let $H_{i-1}$ be the resulting multiset after this.
\end{enumerate}

To prove Lemma~\ref{lem:partition}, it remains to prove Bound \eqref{equ:bound}.

In the third step, since each of the $O(b)$ subcells of $\Delta_i$ is crossed by at most $|H_i(\Delta_i)|/r_i$ hyperplanes of $H_i$, we have $\sum_{h\in H_i}\lambda_i(h)=O(b\cdot |H_i(\Delta_i)|/r_i)$. Since $\lambda_i(h)=O(b)$, we have
\begin{align*}
    |H_{i-1}| = \sum_{h\in H_i}(1+1/b)^{\lambda_i(h)}=\sum_{h\in H_i}(1+O(\frac{\lambda_i(h)}{b})) = |H_i| + \sum_{h\in H_i}O(\frac{\lambda_i(h))}{b}).
\end{align*}
We thus obtain
\begin{align*}
|H_{i-1}|-|H_i| =O(\frac{1}{b}\cdot \sum_{h\in H_i}\lambda_i(h)) = O(|H_i(\Delta_i)|/r_i) = O(\alpha_i)\cdot |H_i|,
\end{align*}
where
$$\alpha_i=\frac{|H_i(\Delta_i)|}{r_i\cdot |H_i|}=O\left(\frac{1}{|H_i|}\cdot \max\left\{\left(\frac{N_{\Delta_i}(H_i)}{b}\right)^{1/d},\frac{|H_i(\Delta_i)|}{b^{1/(d-1+\beta)}}\right\}\right)=O\left(\frac{1}{|H_i|}\cdot \max\left\{A_{\Delta_i},B_{\Delta_i}\right\}\right).$$

After all $t$ cells of $S$ are processed, we have a multiset $H_0$. Recall that $H_{t+1}=H'$ and $|H'|=C\cdot m$. According to the above analysis, we have
\begin{align}\label{equ:20}
|H_0|=|H_{t+1}|\cdot \Pi_{i=1}^t(1+O(\alpha_i))\leq C\cdot m\cdot \exp\left(O(\sum_{i=1}^t\alpha_i)\right).
\end{align}

The following lemma gives an upper bound for $\sum_{i=1}^t\alpha_i$.

\begin{lemma}\label{lem:30}
$\sum_{i=1}^t\alpha_i=O(\frac{t^{1-1/d}}{b^{1/d}}+\frac{\kappa\cdot \log t}{b^{1/(d-1+\beta)}})$.
\end{lemma}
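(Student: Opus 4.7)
The plan is to split the sum $\sum_{i=1}^t \alpha_i$ according to which branch of the selection rule picks $\Delta_i$, and in each branch to bound the relevant minimum ($A_{\Delta_i}$ or $B_{\Delta_i}$) over the candidate set by its average, exploiting the fact that the candidate set always has size at least $i/2$. Let $I_1 = \{i : |S_{i1}| \geq i/2\}$ and $I_2 = \{1,\ldots,t\} \setminus I_1$. For $i \in I_1$, the rule forces $\Delta_i \in S_{i1}$, so $A_{\Delta_i} \geq B_{\Delta_i}$ and $\max\{A_{\Delta_i}, B_{\Delta_i}\} = A_{\Delta_i}$; for $i \in I_2$ the maximum equals $B_{\Delta_i}$. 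These two contributions will match the two terms in the target bound.

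For the $I_1$ contribution, I would first establish the global bound $\sum_{\Delta \in S_i} N_\Delta(H_i) = O(|H_i|^d)$, which follows because the cells of $S_i$ are interior-disjoint and the multiplicity-weighted vertex count of $\calA(H_i)$ is at most $\binom{|H_i|}{d}$. This gives $\sum_{\Delta \in S_i} A_\Delta^d = O(|H_i|^d/b)$, and since $\Delta_i$ minimizes $A_\Delta$ over at least $i/2$ cells, averaging on $A_\Delta^d$ yields $A_{\Delta_i}^d \leq \frac{2}{i} \cdot O(|H_i|^d/b)$, i.e., $\alpha_i = O(1/(bi)^{1/d})$. Summing gives $O(t^{1-1/d}/b^{1/d})$. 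For the $I_2$ contribution, the key observation is that inflating multiplicities never changes which cells a hyperplane crosses, so the hypothesis that each hyperplane of $H$ crosses at most $\kappa$ cells of $S$ transfers verbatim to $H_i$; hence $\sum_{\Delta \in S_i} |H_i(\Delta)| \leq \kappa \cdot |H_i|$, giving $\sum_{\Delta \in S_i} B_\Delta \leq \kappa \cdot |H_i|/b^{1/(d-1+\beta)}$. The same averaging over at least $i/2$ candidates yields $\alpha_i = O(\kappa/(i \cdot b^{1/(d-1+\beta)}))$, and summing produces a harmonic series, which is precisely the source of the extra $\log t$ factor flagged in the remark after Lemma~\ref{lem:partition}.

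The step I expect to be the main obstacle is the $I_1$ averaging: one has to handle $N_\Delta(H_i)$ with multiplicities correctly (a vertex defined by $d$ hyperplanes carries weight equal to the product of their multiplicities) and to choose a tie-breaking rule so that vertices lying on shared boundaries of interior-disjoint cells are not double-counted. Modulo this bookkeeping, the argument is a deterministic analogue of the random-order analysis of~\cite{ref:ChanOp12}, with the deterministic minimizer rule providing the same ``typical-cell'' guarantee that the random choice gives in expectation, at the cost of the additional $\log t$ factor in the second term.
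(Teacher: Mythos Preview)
Your proposal is correct and follows essentially the same approach as the paper's proof: both split the sum by the two branches of the selection rule, bound the chosen minimizer by the average over the $\geq i/2$ candidate cells, and invoke the disjointness of cells (for the $N_\Delta$ sum bounded by $O(|H_i|^d)$) and the crossing-number hypothesis (for the $|H_i(\Delta)|$ sum bounded by $\kappa|H_i|$) to obtain the two terms. The boundary-vertex bookkeeping you flag is indeed not addressed explicitly in the paper either, but it is harmless since any consistent assignment of boundary vertices keeps the total at most $\binom{|H_i|}{d}$.
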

\begin{proof}
Recall that each cell $\Delta_i$ is either from $S_{i1}$ or from $S_{i2}$.
Define $I_1$ (resp., $I_2$) as the set of indices $i\in [1,t]$ such that cell $\Delta_i$ is from $S_{i1}$ (resp., $S_{i2}$).
In what follows, we first prove $\sum_{i\in I_1}\alpha_i=O(\frac{t^{1-1/d}}{b^{1/d}})$ and then prove
$\sum_{i\in I_2}\alpha_i=O(\frac{\kappa\cdot \log t}{b^{1/(d-1+\beta)}})$, which will lead to the lemma.

\paragraph{Proving $\boldsymbol{\sum_{i\in I_1}\alpha_i=O(\frac{t^{1-1/d}}{b^{1/d}})}$.}
For each $i\in I_1$, since $\Delta_i$ is from $S_{i1}$, by the definition of $S_{i1}$, we have $A_{\Delta_i}\geq B_{\Delta_i}$. Therefore, $\alpha_i=(N_{\Delta_i}(H_i)/b)^{1/d}/|H_i|$. By definition, $\Delta_i$ is a cell $\Delta$ of $S_{i1}$ minimizing $N_{\Delta}(H_i)$ and $|S_{i1}|\geq i/2$. Because cells of $S_{i}$ are interior-disjoint, we have
$$N_{\Delta_i}(H_i)\leq \frac{\sum_{\Delta\in S_{i1}}N_{\Delta}(H_i)}{i/2}=O\left(\frac{|H_i|^d}{i}\right).$$
As such, we have
$$\alpha_i=\frac{(N_{\Delta_i}(H_i)/b)^{1/d}}{|H_i|}=O\left(\left(\frac{1}{b\cdot i}\right)^{1/d}\right),$$
and
$$\sum_{i\in I_1}\alpha_i=O\left(\sum_{i\in I_1}\left(\frac{1}{b\cdot i}\right)^{1/d}\right)=O\left(\frac{1}{b^{1/d}}\cdot \sum_{i=1}^t\frac{1}{i^{1/d}}\right).$$

It can be proved (e.g., by induction) that $\sum_{i=1}^t\frac{1}{i^{1/d}}\leq 2\cdot t^{1-1/d}$ holds for $d\geq 2$. Hence, $\sum_{i\in I_1}\alpha_i=O(\frac{t^{1-1/d}}{b^{1/d}})$.

\paragraph{Proving $\boldsymbol{\sum_{i\in I_2}\alpha_i=O(\frac{\kappa\cdot \log t}{b^{1/(d-1+\beta)}})}$.}
For each $i\in I_2$, since $\Delta_i$ is from $S_{i2}$, by the definition of $S_{i2}$, we have $A_{\Delta_i}< B_{\Delta_i}$. Therefore, $\alpha_i=(1/|H_i|)\cdot |H_i(\Delta_i)|/b^{1/(d-1+\beta)}$. By definition, $\Delta_i$ is a cell $\Delta$ of $S_{i2}$ minimizing $|H_i(\Delta)|$ and $|S_{i2}|\geq i/2$. Hence, we can obtain
$$|H_i(\Delta_i)|\leq \frac{1}{|S_{i2}|}\cdot \sum_{\Delta\in S_{i2}}|H_i(\Delta)|\leq \frac{2}{i}\cdot \sum_{\Delta\in S_{i2}}|H_i(\Delta)|\leq \frac{2}{i}\cdot \sum_{\Delta\in S}|H_i(\Delta)|.$$
Since each hyperplane of $H$ crosses at most $\kappa$ cells of $S$, we have $\sum_{\Delta\in S}|H_i(\Delta)|\leq \kappa\cdot |H_i|$. We thus obtain $|H_i(\Delta_i)|\leq 2\cdot \kappa\cdot |H_i|/i$, and $\alpha_i\leq 2\cdot \kappa/(i\cdot b^{1/(d-1+\beta)})$, which leads to $$\sum_{i\in I_2}\alpha_i\leq \sum_{i\in I_2}\frac{2\kappa}{i\cdot b^{1/(d-1+\beta)}}=\frac{2\kappa}{b^{1/(d-1+\beta)}}\cdot \sum_{i\in I_2}\frac{1}{i}\leq \frac{2\kappa}{b^{1/(d-1+\beta)}}\cdot \sum_{i=1}^t\frac{1}{i}=O\left(\frac{\kappa\cdot \log t}{b^{1/(d-1+\beta)}}\right).$$
The lemma thus follows.
\end{proof}

By Lemma~\ref{lem:30} and \eqref{equ:20}, we have
\begin{equation}\label{equ:25}
|H_0|\leq C\cdot m\cdot \exp\left(O\left(\frac{t^{1-1/d}}{b^{1/d}}+\frac{\kappa\cdot \log t}{b^{1/(d-1+\beta)}}\right)\right).    \end{equation}
For any hyperplane $h\in H$, let $\lambda(h)$ be the total number of subcells crossed by $h$. Our goal is to prove that $\lambda(h)$ is bounded by \eqref{equ:bound}. By the way $H_0$ is produced, we have $C\cdot (1+1/b)^{\lambda(h)}\leq |H_0|$. Hence, $$\lambda(h)\leq \log_{1+1/b}\frac{|H_0|}{C}=O(b\cdot \log \frac{|H_0|}{C})=O(b\log m+ (bt)^{1-1/d}+b^{1-1/(d-1+\beta)}\cdot \kappa\cdot \log t).$$

This proves Lemma~\ref{lem:partition}.

\subsection{Constructing the partition tree}

We now construct the partition tree using Lemma~\ref{lem:partition}. Our main goal is to prove the following lemma.

\begin{lemma}\label{lem:partitiontree}
Given a set $P$ of $n$ points, a set $H$ of $m=n^{O(1)}$ hyperplanes in $\bbR^d$, and a parameter $r$ with $r_0\leq r\leq n$ for a sufficiently large constant $r_0$, for $b=\log^\rho r$ for a sufficiently large constant $\rho>0$, there exists a sequence  $\Pi_0,\Pi_1,\ldots,\Pi_{k+1}$ with $k=\lfloor \log_b r\rfloor$ and $b'=\lceil r/b^{k}\rceil$, where each $\Pi_i$ is a collection of disjoint simplicial cells (i.e., each cell is a simplex) whose union covers $P$ with the following properties:
\begin{enumerate}
  \item $\Pi_0$ has only one cell that is $\bbR^d$ and the number of cells of $\Pi_i$ is $O(b'\cdot b^{i-1})$ for $i\geq 1$ (in particular, $\Pi_{k+1}$ has $O(r)$ cells; the total number of cells in all collections is also $O(r)$).
  \item Each cell of $\Pi_i$ contains at least one point and at most $2n/(b'\cdot b^{i-1})$ points of $P$ for all $i\geq 1$ (in particular, each cell of $\Pi_{k+1}$ contains at most $2n/r$ points).
  \item Each cell in $\Pi_{i+1}$ is contained in a single cell of $\Pi_i$.
  \item Each cell of $\Pi_i$ has $O(b)$ cells of $\Pi_{i+1}$ for $i\geq 1$ and the cell of $\Pi_0$ contains $O(b')$ cells of $\Pi_1$.
  \item Any hyperplane in $H$ crosses at most $O((b'\cdot b^{i-1})^{1-1/d}+\log^{O(1)}m)$ cells of $\Pi_i$ for each $i\geq 1$.
\end{enumerate}
All collections of cells can be constructed in $O(m^d\cdot (r^{2-2/d}+\log^{O(1)}n)+n\log n+r^{3-1/d}+r^2\cdot \log^{O(1)}n+m\cdot r^{3-2/d})$ time.
\end{lemma}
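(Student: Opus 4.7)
The plan is to build $\Pi_0, \Pi_1, \ldots, \Pi_{k+1}$ iteratively by applying the refinement theorem (Lemma~\ref{lem:partition}) once per level. I initialize $\Pi_0 = \{\bbR^d\}$. The first refinement $\Pi_0 \to \Pi_1$ invokes Lemma~\ref{lem:partition} with $t=1$ (and $\kappa = 1$) and the parameter $b$ there chosen to be $b'$, producing $b'$ cells each containing at most $2n/b'$ points. Every subsequent refinement $\Pi_i \to \Pi_{i+1}$ for $i \ge 1$ applies the lemma with $t = |\Pi_i| = b' b^{i-1}$ and parameter $b$, using as input the crossing bound $\kappa_i$ just established. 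After each step I discard empty cells to enforce the ``at least one point'' condition of property~2; this only reduces cell counts and preserves every other guarantee. Properties~1--4 then fall out automatically: the cell counts multiply by $b$ (or $b'$ at the first step), point counts shrink by the same factor, and each parent spawns $O(b)$ (resp.\ $O(b')$) children that are by construction contained in it.

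The substance of the proof is property~5, the crossing bound $\kappa_i = O((b'b^{i-1})^{1-1/d} + \log^{O(1)} m)$, which I prove by induction on $i$. The base case $i=1$ is clean because $\log t = \log 1 = 0$ kills the middle term of \eqref{equ:bound}, leaving $O((b')^{1-1/d} + b' \log m)$; since $b' \le b = \log^\rho r$ and $m = n^{O(1)}$, the second summand is absorbed into $O(\log^{O(1)} m)$. For the inductive step, substituting the inductive bound on $\kappa_i$ together with $\log t_i = O(\log r)$ into \eqref{equ:bound} produces three kinds of contributions to $\kappa_{i+1}$: the desired leading term $(b'b^i)^{1-1/d}$; a critical cross term of the form $b^{1-1/(d-1+\beta)} \cdot (b'b^{i-1})^{1-1/d} \cdot \log r$; and lower-order pieces (from $b \log m$ and from $b^{1-1/(d-1+\beta)} \log^{O(1)} m \cdot \log r$) that are each $O(\log^{O(1)} m)$ because $b$ itself is polylogarithmic.

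The main obstacle, and the reason for the choice $b = \log^\rho r$ with $\rho$ sufficiently large, is taming that critical cross term. It factors as $(b'b^i)^{1-1/d} \cdot b^{-c} \log r$, where $c = 1/(d-1+\beta) - 1/d > 0$ is a positive constant depending only on $d$ and $\beta$. With $b = \log^\rho r$ and $\rho \ge 1/c$, we get $b^{-c} \log r \le 1$, so this contribution is absorbed into the leading term and the induction closes. The tension here between the $b^{1-1/(d-1+\beta)} \log t$ blow-up factor from the refinement and the $b^{1-1/d}$ gain between consecutive levels is precisely what forces $b$ to be polylogarithmic rather than constant, and hence why the resulting tree has logarithmic degree rather than constant degree.

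For the running-time bound, each level's refinement cost is dominated by the cuttings built inside Lemma~\ref{lem:partition} via the cutting lemma (Lemma~\ref{lem:cutting}). I will sum the cutting cost $O(K_\Delta (r_i/m)^{d-1} + m r_i^{d-2+\beta})$ over all cells $\Delta$ at a level, using $\sum_\Delta K_\Delta \le O(m^d)$ and the fact that the local cutting parameter $r_i$ is at most $O(b^{1/(d-1+\beta)})$. Aggregating across the $k+1 = O(\log r / \log\log r)$ levels, adding $O(n \log n)$ for propagating the $n$ points down the tree, and accounting for the final level (where the total number of cells is $\Theta(r)$ and each has $O(n/r)$ points) produces the stated time bound. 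This bookkeeping is routine; no new ideas beyond the cutting lemma and the recursive structure already laid down are required.
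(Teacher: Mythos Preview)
Your overall plan---iterate Lemma~\ref{lem:partition}, use $b=\log^\rho r$ with $\rho$ large enough so that the factor $b^{-c}\log r$ (where $c=1/(d-1+\beta)-1/d$) is at most~$1$---is exactly the paper's approach. But your induction for property~5 does not close as written. You take the hypothesis $\kappa_i \le C\bigl(t_i^{1-1/d} + \log^{C'} m\bigr)$ with \emph{fixed} constants $C,C'$, and then dismiss the contribution $b^{1-1/(d-1+\beta)}\cdot \log^{C'}m\cdot \log r$ as ``$O(\log^{O(1)}m)$ because $b$ is polylogarithmic.'' That is true for one step, but the exponent on $\log m$ (or equivalently the constant in front) increases: this piece is $\approx \log^{\rho(1-1/(d-1+\beta))+1+C'}m$, which exceeds $C\log^{C'}m$ for any fixed $C$. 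Since the amplification factor $b^{1-1/(d-1+\beta)}\log t_i$ is strictly greater than~$1$ and is applied $k\approx \log r/\log\log r$ times, the additive term blows up polynomially in~$r$, not polylogarithmically. The paper repairs exactly this by using the stronger hypothesis $\kappa(t)\le c\bigl(t^{1-1/d}+t^{\,1-1/(d-1+\beta)+\delta}\cdot b\log m\bigr)$: the second summand now grows with~$t$ at precisely the rate needed so that one refinement multiplies it by $b^{1-1/(d-1+\beta)}\log t/b^{1-1/(d-1+\beta)+\delta}=\log t/b^{\delta}<1$, and afterward one observes that this refined bound collapses to $O(t^{1-1/d}+\log^{O(1)}m)$ because the first term dominates once $t$ exceeds a fixed power of $\log m$.

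A second, smaller gap: the running-time part is less routine than you suggest. Inside each iteration of Lemma~\ref{lem:partition} one must compare quantities like $N_{\Delta}(H_i)$ and $|H_i(\Delta)|$ whose multiplicities are of the form $(1+1/b)^{\lambda}$ with $\lambda$ as large as $\Theta(bt)$, i.e., exponentially large numbers that cannot be manipulated directly. The paper spends its implementation subsection showing how to maintain $O(\kappa)$-bit binary approximations to these quantities and how to compute the weighted cutting from Lemma~\ref{lem:cutting} when the weights are only known up to a factor of~$2$; this is where the $m^d\kappa^2$ and related terms in the time bound actually come from. Your sketch (``sum the cutting cost, add $O(n\log n)$ for propagating points'') does not account for this and would not by itself produce the stated bound.
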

\begin{proof}
We first assume that $r$ is a power of $b$. Then, $r=b^k$ and $b'=1$.

We apply Lemma~\ref{lem:partition} with $H$ to construct the sets $\Pi'_{t}$ iteratively for $t=1,b,b^2,\ldots$ until $b^k=r$.
More specifically, $\Pi'_1$ consists of a single cell that is $\bbR^d$, and $\Pi'_{bt}$ is obtained by applying Lemma~\ref{lem:partition} on all cells of $\Pi'_t$.
We then let $\Pi_0=\Pi_1=\Pi'_1$ and $\Pi_i=\Pi'_{b^{i-1}}$. By Lemma~\ref{lem:partition}, the properties (1)-(4) hold. In what follows, we prove Property (5), i.e., any hyperplane in $H$ crosses at most $O((b^i)^{1-1/d}+\log^{O(1)}m)$ cells of $\Pi'_{b^i}$.

Let $\kappa(t)$ denote the maximum number of cells of $\Pi'_t$ crossed by any hyperplane of $H$.  Since $\Pi'_1$ (which is $\Pi_0$) has a single cell, $\kappa(1)=1$. By \eqref{equ:bound}, we have
\begin{equation}\label{equ:30}
\begin{split}
\kappa(bt) &= O((bt)^{1-1/d}+b^{1-1/(d-1+\beta)}\cdot \kappa(t)\cdot \log t+b\cdot \log m)\\
&\leq c_0\cdot (bt)^{1-1/d}+c_0\cdot b^{1-1/(d-1+\beta)}\cdot \kappa(t)\cdot \log t+c_0\cdot b\cdot \log m
\end{split}
\end{equation}
for some constant $c_0$.

Fix $\delta>0$ to be an arbitrarily small constant. We prove the following by induction over $t=1,b,b^2,\ldots, b^k$.
\begin{align}\label{equ:40}
\kappa(t)\leq c \cdot (t^{1-1/d}+t^{1-1/(d-1+\beta)+\delta}\cdot b\cdot \log m)
\end{align}
for some sufficiently large constant $c$ depending on $\rho$ and $c_0$.
Clearly, $\kappa(1)=1$ satisfies~\eqref{equ:40} for any $c\geq 1$.
To prove \eqref{equ:40} for any $t$, in light of \eqref{equ:30}, it suffices to prove the following
\begin{equation}\label{equ:50}
\begin{split}
&c_0 \cdot (bt)^{1-1/d}+c_0\cdot b^{1-1/(d-1+\beta)}\cdot \log t\cdot \left[c \cdot (t^{1-1/d}+t^{1-1/(d-1+\beta)+\delta}\cdot b\cdot \log m)\right]+ c_0\cdot b\cdot \log m \\
&\leq c \cdot (bt)^{1-1/d}+c\cdot (bt)^{1-1/(d-1+\beta)+\delta}\cdot b\cdot \log m
\end{split}
\end{equation}
The LHS of \eqref{equ:50} is equal to
\begin{equation}\label{equ:60}
    \begin{split}
    &c_0\cdot (bt)^{1-1/d} + c_0\cdot c\cdot b^{1-1/(d-1+\beta)} \cdot t^{1-1/d} \cdot \log t \\
    &+ c_0\cdot c\cdot b^{1-1/(d-1+\beta)} \cdot t^{1-1/(d-1+\beta)+\delta}\cdot b\cdot \log t\cdot \log m + c_0\cdot b\cdot \log m\\
    &= c_0\cdot \left[1+\frac{c\cdot \log t}{b^{1/(d-1+\beta)-1/d}} \right]\cdot (bt)^{1-1/d}  + c_0\cdot \frac{c}{b^{\delta}}\cdot (bt)^{1-1/(d-1+\beta)+\delta} \cdot b\cdot \log t\cdot \log m + c_0\cdot b\cdot \log m\\
    \end{split}
\end{equation}
In the following, we show that we can find a sufficiently large constant $c$ such that the first term in the RHS of \eqref{equ:60} is smaller than the first term of the RHS of \eqref{equ:50} and the second (resp., third) term in the RHS of \eqref{equ:60} is smaller than half of the second term of the RHS of \eqref{equ:50}; this will prove \eqref{equ:50}.

\begin{enumerate}
    \item We first prove that we can find a sufficiently large constant $c$ such that the first term in the RHS of \eqref{equ:60} is smaller than or equal to the first term of the RHS of \eqref{equ:50}. It suffices to prove the following $$c_0\cdot \left[1+\frac{c\cdot \log t}{b^{1/(d-1+\beta)-1/d}}\right]\leq c.$$
    Indeed, recall that $b=\log^\rho r$ and $t\leq r$. Since $\beta$ is an arbitrarily small constant, $d-1+\beta<d$. Hence, we can set $\rho$ to a sufficiently large constant so that $\rho\cdot (1/(d-1+\beta)-1/d)\geq 2$. Consequently, we have
    $$c_0\cdot \left[1+\frac{c\cdot \log t}{b^{1/(d-1+\beta)-1/d}}\right]\leq c_0\cdot \left[1+\frac{c}{\log r}\right],$$ which is smaller than a sufficiently large constant $c$ since $r$ is larger than a sufficiently large constant $r_0$.
    \item
    We then prove that the second term in the RHS of \eqref{equ:60} is smaller than half of the second term of the RHS of \eqref{equ:50}. Indeed, it suffices to prove
    $$c_0\cdot \frac{1}{b^{\delta}} \cdot \log t\leq \frac{1}{2}.$$
    Since $b=\log^\rho r$ and $t\leq r$, we can make the above hold by choosing a sufficiently large constant $\rho$.
    \item
    We finally prove that the third term in the RHS of \eqref{equ:60} is smaller than half of the second term of the RHS of \eqref{equ:50}. Indeed, it suffices to prove
    $$c_0\leq \frac{c}{2}\cdot (bt)^{1-1/(d-1+\beta)+\delta},$$
    which clearly holds for a sufficiently large $c$.
\end{enumerate}

This proves \eqref{equ:50} and thus also \eqref{equ:40}. Note that the first term of \eqref{equ:40} dominates when $t$ exceeds $\log^{c_1} m$ for a sufficiently large constant $c_1$. Hence, we can write $\kappa(t)=O(t^{1-1/d}+\log^{O(1)}m)$. This proves the lemma statement (5) for the case where $r$ is a power of $b$.

If $r$ is not a power of $b$, then $b'=\lceil r/b^k\rceil\leq b$. We apply Lemma~\ref{lem:partition} with $H$ and $b'$ to the single cell of $\Pi_0$ to obtain $\Pi_1$. Then, we apply Lemma~\ref{lem:partition} with $H$ with $t=b',b'\cdot b,b'\cdot b^2,\ldots$ until $b'\cdot b^k$ to obtain $\Pi_2,\Pi_3,\cdots\Pi_k$.
For $t=b'$, by \eqref{equ:bound}, we have $\kappa(t) = O(t^{1-1/d}+t^{1-1/(d-1+\beta)} + b\log m)$, which satisfies \eqref{equ:40}. As such, following the same analysis leads to $\kappa(t)=O(t^{1-1/d}+\log^{O(1)}m)$.

This proves the properties (1)-(5). The algorithm implementation for constructing all collections of cells and the time analysis will be discussed in the next subsection.
\end{proof}

\subsection{The algorithm implementation for Lemma~\ref{lem:partitiontree}}

We now give an efficient implementation for the algorithm of Lemma~\ref{lem:partitiontree} and prove the time complexity as stated in Lemma~\ref{lem:partitiontree}. We again first assume that $r$ is a power of $b$, and thus $r=b^k$ and $b'=1$.

Recall that we have a set $S$ of $t$ cells for $t=1,b,b^2,\ldots,b^k$. For each $t$, we call it the {\em round-$t$} of the algorithm, which has $t$ iterations. In what follows, we describe the rount-$t$ of the algorihtm.

Consider the $i$-th iteration with $i$ running from $t$ down to $1$. In the beginning of the iteration, cells $\Delta_t,\Delta_{t-1},\ldots,\Delta_{i+1}$ have already been processed. Recall that $S_i$ is the subset of cells of $S$ that are not processed. In the $i$-th iteration, we need to first determine cell $\Delta_i\in S_i$ and then subdivide it into $O(b)$ subcells. Also recall that in the beginning the iteration we have a multiset $H_i$.
Let $\kappa$ be the maximum number of cells of $S$ crossed by any hyperplane of $H$. We have proved that $\kappa=O(t^{1-1/d}+\log^{O(1)}m)$. For notational convenience, we let $\kappa=\Theta(t^{1-1/d}+\log^{O(1)}m)$.

\subsubsection{Determining $\boldsymbol{\Delta_i}$}

We first need to determine $\Delta_i\in S_i$. This depends on the values $A_{\Delta}$ and $B_{\Delta}$ for all cells $\Delta \in S_i$, which in turn depend on the values $|H_i(\Delta)|$ and $N_{\Delta}(H_i)$. In what follows, we first discuss how to ``estimate'' $|H_i(\Delta)|$ and $N_{\Delta}(H_i)$, and then determine $\Delta_i$.

\paragraph{Estimating $\boldsymbol{|H_i(\Delta)|}$.}
For each each hyperplane $h\in H$, we attempt to maintain a weight $w(h)$ that is the multiplicity of $h$ in $H_i$. According to our algorithm, we have $w(h)=C\cdot (1+1/b)^{\lambda(h)}$, where $\lambda(h)$ is the cells crossed by $h$ (among the cells that have been produced since the beginning of the round-$t$ of the algorithm). Clearly, $\lambda(h)=O(bt)$. Our algorithm will maintain $\lambda(h)$ (i.e., we consider $\lambda$ as an array indexed by $h$) and we will discuss later about how to update $\lambda(h)$ once it changes. Define $I(h)=\lceil\log C+\lambda(h)\cdot \log(1+1/b)\rceil$ and $w'(h)=2^{I(h)}$. As such, $w(h)\in (w'(h)/2,w'(h)]$.

For each cell $\Delta\in S_i$, we use the following way to ``estimate'' $|H_i(\Delta)|$. Define $H'_i(\Delta)$ (resp., $H_i'$) as the set of distinct hyperplanes of $H_i(\Delta)$ (resp.,  $H_i$). Note that $|H_i(\Delta)|=\sum_{h\in H'_i(\Delta)}w(h)$. Define $W'(\Delta)=\sum_{h\in H'_i(\Delta)}w'(h)$. Since $w(h)\in (w'(h)/2,w'(h)]$, it holds that $|H_i(\Delta)|\in (W'(\Delta)/2,W'(\Delta)]$. We have the following observation on the value of $W'(\Delta)$.
\begin{observation}\label{obser:10}
$\log W'(\Delta)=O(\kappa)$, and thus $\log |H_i(\Delta)|=O(\kappa)$.
\end{observation}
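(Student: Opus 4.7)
The plan is to bound $\log W'(\Delta)$ by first controlling each individual weight $w'(h)$ via the counter $\lambda(h)$, and then summing over the at most $m$ distinct hyperplanes that can appear in $H'_i(\Delta)$.

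The first step is to bound $\lambda(h)$ globally for the current round-$t$. By construction, the counter $\lambda(h)$ only increases when a cell $\Delta_j \in S$ that $h$ actually crosses is processed, and on each such processing step it increases by $\lambda_j(h)$, the number of subcells of $\Delta_j$ crossed by $h$. Since every processed cell is subdivided into only $O(b)$ subcells, we trivially have $\lambda_j(h) = O(b)$; since $h$ crosses at most $\kappa$ cells of $S$, summing gives $\lambda(h) \le O(\kappa b)$. Substituting into the definition $I(h) = \lceil \log C + \lambda(h)\log(1+1/b) \rceil$ and using $\log(1+1/b) \le 1/b$, we obtain
\[
\log w'(h) \;=\; I(h) \;\le\; \log C + \lambda(h)/b + 1 \;=\; O(\log C + \kappa).
\]

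The second step is to aggregate. Because $H'_i(\Delta)$ consists of distinct hyperplanes drawn from $H$, we have $|H'_i(\Delta)| \le m$, and therefore
\[
W'(\Delta) \;=\; \sum_{h \in H'_i(\Delta)} w'(h) \;\le\; m \cdot \max_{h} w'(h) \;=\; 2^{\,O(\log m + \log C + \kappa)}.
\]
Taking logs gives $\log W'(\Delta) = O(\log m + \log C + \kappa)$. Here $m = n^{O(1)}$ and, by the definition fixed earlier in the round, $\kappa = \Theta(t^{1-1/d} + \log^{O(1)}m)$ with the hidden exponent chosen sufficiently large (this is exactly what guarantees the bound in property (5) of Lemma~\ref{lem:partitiontree}); hence $\log m = O(\kappa)$. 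Choosing $C$ as the prescribed sufficiently large power of $b$ needed to keep all multiplicities integral, one checks that $\log C$ is likewise $O(\log m) = O(\kappa)$, so $\log W'(\Delta) = O(\kappa)$, which is the first claim.

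The second claim follows immediately: since $w(h) \le w'(h)$ for every $h$, we have $|H_i(\Delta)| = \sum_{h \in H'_i(\Delta)} w(h) \le W'(\Delta)$, so $\log |H_i(\Delta)| \le \log W'(\Delta) = O(\kappa)$. I do not expect any real obstacle here — the entire argument is a short chain of algebraic estimates. The only subtle point is accounting for $C$, but because $\log C$ enters $I(h)$ additively rather than multiplicatively, it contributes only an additive $O(\log C)$ that is easily absorbed into $O(\kappa)$ once $C$ is fixed at the size needed for the integrality of multiplicities.
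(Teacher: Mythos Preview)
Your argument is correct and takes a more elementary route than the paper. The paper bounds $|H_i(\Delta)|$ via the chain $|H_i(\Delta)| \le |H_0| \le C m \exp\bigl(O(t^{1-1/d}/b^{1/d} + \kappa \log t / b^{1/(d-1+\beta)})\bigr)$ from inequality~\eqref{equ:25}, and then invokes the extra condition $\rho/(d-1+\beta) \ge 1$ on the exponent of $b = \log^\rho r$ to force $\log t / b^{1/(d-1+\beta)} \le 1$. You instead bound each $\lambda(h)$ directly by the crude per-hyperplane count $O(\kappa b)$ (at most $\kappa$ cells of $S$ are crossed by $h$, each contributing $O(b)$ subcells), which gives $I(h) = \log C + O(\kappa)$ and hence $\log W'(\Delta) = O(\log m + \log C + \kappa)$ without appealing to the refined $\sum_i \alpha_i$ analysis behind~\eqref{equ:25} or to any constraint on $\rho$. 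The paper's route gives a sharper intermediate estimate, but that sharpness is unnecessary here; your argument is shorter and more self-contained.

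One caveat: the step ``one checks that $\log C$ is likewise $O(\log m)$'' is not actually justified by the paper's stated requirement on $C$ --- a power of $b$ large enough to keep all multiplicities integral would need $b^{\lambda_{\max}} \mid C$, forcing $\log C = \Omega(\kappa b \log b)$, not $O(\log m)$. The paper's own proof also silently drops $\log C$ when passing from~\eqref{equ:25} to its displayed bound, so this is not a flaw unique to you; the real resolution in both cases is that $C$ is a conceptual device that cancels in every comparison the algorithm performs, and in the implementation one may as well take $C = 1$.
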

\begin{proof}
Clearly, $|H_i(\Delta)|\leq |H_0|$. Recall that we have proved in Lemma~\ref{lem:partition} that $|H_0|$ is bounded by \eqref{equ:25}. Hence,
$$\log |H_i(\Delta)|=O\left(\log m+\frac{t^{1-1/d}}{b^{1/d}}+\frac{\kappa\cdot \log t}{b^{1/(d-1+\beta)}}\right).$$
Recall that $t\leq r$ and $b=\log^{\rho}r$ for a sufficiently large constant $\rho$. If we make $\rho$ large enough so that $\rho/(d-1+\beta)\geq 1$, then we have $\log t/b^{1/(d-1+\beta)}\leq 1$.
Since $\kappa=\Theta(t^{1-1/d}+\log^{O(1)}m)$,
we obtain that $\log |H_i(\Delta)|=O(\kappa)$. As $|H_i(\Delta)|\in (W'(\Delta)/2,W'(\Delta)]$, it follows that $\log W'(\Delta)=O(\kappa)$.
\end{proof}

In light of Observation~\ref{obser:10}, we use an array $E_{\Delta}$ of size $O(\kappa)$ to calculate the binary code of $W'(\Delta)$. Each array element of $E_{\Delta}$ is either zero or one. Initially every element of $E_{\Delta}$ is zero. For each hyperplane $h\in H'_i(\Delta)$, we want to add $w'(h)$ to $E_{\Delta}$ using binary operations.  Recall that $w'(h)=2^{I(h)}$ and $I(h)={\lceil\log C+\lambda(h)\cdot \log(1+1/b)\rceil}$. Thus, adding $w'(h)$ to $E_{\Delta}$ is equivalent to adding one to the $I(h)$-th element of $E_{\Delta}$. Assuming that $\lambda(h)$ is available, $I(h)$ can be calculated in $O(1)$ time. Adding one to the $I(h)$-th element of $E_{\Delta}$ can be easily done in $O(\kappa)$ time as the size of $E_{\Delta}$ is $O(\kappa)$, e.g., by scanning the array. Doing this for all hyperplanes $h\in H'_i(\Delta)$ will compute the binary code of $W'(\Delta)$; the time is bounded by $O(\kappa\cdot |H_i'(\Delta)|$.
As such, the total time for computing $W'(\Delta)$ for all cells $\Delta\in S_i$ is $\kappa\cdot \sum_{\Delta\in S_i}|H_i'(\Delta)|$. Because each hyperplane $h\in H$ crosses at most $\kappa$ cells of $S$, it holds that $\sum_{\Delta\in S_i}|H_i'(\Delta)|\leq \kappa\cdot m$. Therefore, the total time for computing $W'(\Delta)$ for all cells $\Delta\in S_i$ is $O(m\cdot \kappa^2)$.

\paragraph{Estimating $\boldsymbol{|N_{\Delta}(H_i)|}$.}
Let $\calA$ denote the arrangement of $H$.
For each vertex $v\in \calA$, let $w(v)$ denote the multiplicity of $v$, i.e., the multiplication of $w(h)$ of all hyperplanes $h$ through $v$. More specifically, if $H_v$ is the subset of hyperplanes of $H$ through $v$, then $w(v)=C^{|H_v|}\cdot (1+1/b)^{\sum_{h\in H_v}\lambda(h)}$.
Hence, $N_{\Delta}(H_i)=\sum_{v\in \calA(\Delta)}w(v)$, where $\calA(\Delta)$ denotes the portion of $\calA$ inside $\Delta$. Define $I(v)=\lceil\log (C^{|H_v|}\cdot (1+1/b)^{\sum_{h\in H_v}\lambda(h)})\rceil=\lceil |H_v|\cdot C+\sum_{h\in H_v}\lambda(h)\cdot \log(1+1/b)\rceil$. Our algorithm maintains the value $I(v)$ for each vertex $v\in \calA$. Note that once $\lambda(h)$ for a hyperplane $h\in H_v$ is changed, $I(v)$ can be updated in $O(1)$ time. Define $w'(v)=2^{I(v)}$. Hence, $w(v)\in (w'(v)/2,w'(v)]$.  Define $N'_{\Delta}(H_i)=\sum_{v\in \calA(\Delta)}w'(v)$. Clearly, $N_{\Delta}(H_i)\in (N'_{\Delta}(H_i)/2,N'_{\Delta}(H_i)]$. We will use $N'_{\Delta}(H_i)$ to ``estimate'' $N_{\Delta}(H_i)$.

Consider any cell $\Delta\in S_i$.
Recall that $\log |H_i(\Delta)|=O(\kappa)$ due to Observation~\ref{obser:10}. As $N_{\Delta}(H_i)\leq |H_i(\Delta)|^d$, we have $\log N_{\Delta}(H_i)\leq d\log |H_i(\Delta)|=O(\kappa)$. Hence, $\log N'_{\Delta}(H_i)=O(\kappa)$. We use an array $F_{\Delta}$ of size $O(\kappa)$ to calculate the binary code of $N'_{\Delta}(H_i)$. Our algorithm maintains $N'_{\Delta}(H_i)$ for all cells $\Delta\in S$. Whenever $\lambda(h)$ changes for some hyperplane $h$ defining a vertex $v\in \calA(\Delta)$, we first update $I(v)$ in $O(1)$ time and then update $N'_{\Delta}(H_i)$ in $O(\kappa)$ time using the array $F_{\Delta}$ (e.g., we first subtract one from the old $I(v)$-th element of $F_{\Delta}$ and then add one to the updated $I(v)$-th element; both operations can be done in $O(\kappa)$ time as the size of $F_{\Delta}$ is $O(\kappa)$).

\paragraph{Determining $\boldsymbol{\Delta_i}$.}
We now use the arrays $E_{\Delta}$ and $F_{\Delta}$ of $\Delta\in S_i$ to determine $\Delta_i$.
For each $\Delta\in S_i$, we need to first compare
$A_{\Delta}=\left(\frac{N_{\Delta}(H_i)}{b}\right)^{1/d}$ with $B_{\Delta}=\frac{|H_i(\Delta)|}{b^{1/(d-1+\beta)}}$. It is equivalent to compare $b^{(1-\beta)/(d-1+\beta)}\cdot N_{\Delta}(H_i)$ and $|H_i(\Delta)|^d$. Comparing them is challenging as we do not really know the exact values of $N_{\Delta}(H_i)$ and $|H_i(\Delta)|$. Our strategy is to compare $b^{(1-\beta)/(d-1+\beta)}\cdot N'_{\Delta}(H_i)$ and $|W'(\Delta)|^d$ instead. To this end, we use the largest index of $F_{\Delta}$ (resp., $E_{\Delta}$) whose element value is equal to one, denoted by $I_F(\Delta)$ (resp., $I_E(\Delta)$). Then, $I_F(\Delta)$ essentially represents the value $2^{I_F(\Delta)}$, which is within a factor at most $2$ of $N'_{\Delta}(H_i)$ and thus is within a factor at most $4$ of $N_{\Delta}(H_i)$. Similarly, $2^{I_E(\Delta)}$ is within a factor at most $2$ of $W'(\Delta)$ and thus is within a factor at most $4$ of $|H_i(\Delta)|$. Hence, $2^{d\cdot I_E(\Delta)}$ is within a factor at most $4^d$ of $|H_i(\Delta)|^d$.
We compare $b^{(1-\beta)/(d-1+\beta)}\cdot 2^{I_F(\Delta)}$ with $2^{d\cdot I_E(\Delta)}$, which is equivalently to comparing $(1-\beta)/(d-1+\beta)\log b+I_F(\Delta)$ with $d\cdot I_E(\Delta)$. The latter comparison can be done in $O(1)$ time once we have $I_E(\Delta)$ and $I_F(\Delta)$, which can be found from $E_{\Delta}$ and $F_{\Delta}$ in $O(\kappa)$ time, respectively.
We compute $I_E(\Delta)$ and $I_F(\Delta)$ for all cells $\Delta\in S_i$, which takes $O(\kappa\cdot t)$ time.

Instead of using $S_{i1}$ and $S_{i2}$ as defined before, we define two different sets $S'_{i1}$ and $S'_{i2}$. Let $S'_{i1}$ consist of cells $\Delta\in S_i$ with $(1-\beta)/(d-1+\beta)\log b+I_F(\Delta)\geq d\cdot I_E(\Delta)$; let $S'_{i2}=S\setminus S'_{i1}$. We can compute these two sets in $O(t)$ time. We instead use $S'_{i1}$ and $S'_{i2}$ and follow the same algorithm as before, i.e., replacing $S_{i1}$ and $S_{i2}$ in our original algorithm in the proof of Lemma~\ref{lem:partition} by $S'_{i1}$ and $S'_{i2}$, respectively. Specifically, if $|S'_{i1}|\geq i/2$, we let $\Delta_i$ be the cell $\Delta\in S'_{i1}$ with smallest $I_F(\Delta)$; otherwise let $\Delta_i$ be the cell $\Delta\in S'_{i1}$ with smallest $I_E(\Delta)$. Hence, $\Delta_i$ can be determined in $O(t)$ time. If we follow the same algorithm using this new defined $\Delta_i$, then using the same analysis, we can still obtain the same bound as before in the proof of Lemma~\ref{lem:partition} except that the bound for each $\alpha_i$ is a constant factor difference from before. Indeed, this is because $2^{I_F(\Delta)}$ is within a factor at most $4$ of $N_{\Delta}(H_i)$, $2^{d\cdot I_E(\Delta)}$ is within a factor at most $4^d$ of $|H_i(\Delta)|$, and $d$ is a constant. In particular, we can still get the bound~\eqref{equ:bound}.

In summary, determining $\Delta_i$ can be done in $O(m\kappa^2+\kappa\cdot t)$ time. Doing this for all $t$ iterations takes $O(mt\kappa^2+\kappa\cdot t^2)$ time.

\subsubsection{Processing $\boldsymbol{\Delta_i}$}
Recall that we have three steps to process $\Delta_i$. We discuss how to implement them below.

\paragraph{The first step.}
The first step is to compute a $(1/r_i)$-cutting for $H_i$ inside $\Delta_i$. The definition of $r_i$ depends on $|H_i(\Delta_i)|$ and $N_{\Delta_i}(H_i)$. Instead, we also use $2^{I_E(\Delta)}$ and $2^{I_F(\Delta)}$. Specifically, to calculate $|H_i(\Delta_i)|\cdot (\frac{b}{N_{\Delta_i}(H_i)})^{1/d}$, we attempt to calculate $2^{I_E(\Delta)}\cdot (\frac{b}{2^{I_F(\Delta)}})^{1/d}=2^{I_E(\Delta)-I_F(\Delta)/d}\cdot b^{1/d}$. Instead of calculating the value $2^{I_E(\Delta)-I_F(\Delta)/d}\cdot b^{1/d}$ directly (which may be too large), we first compare $2^{I_E(\Delta)-I_F(\Delta)/d}\cdot b^{1/d}$ with $b^{1/(d-1+\beta)}$, which is equivalent to comparing $I_E(\Delta)-I_F(\Delta)/d$ with $(1/(d-1+\beta)-1/d)\cdot \log b$; the latter comparison can be done in $O(1)$ time. If $I_E(\Delta)-I_F(\Delta)/d>(1/(d-1+\beta)-1/d)\cdot \log b$, then we set $r_i=c\cdot b^{1/(d-1+\beta)}$; otherwise, $2^{I_E(\Delta)-I_F(\Delta)/d}\cdot b^{1/d}\leq b^{1/(d-1+\beta)}$ and thus $2^{I_E(\Delta)-I_F(\Delta)/d}$ is small enough and can be computed in $O(\kappa)$ time as both $I_E(\Delta)$ and $I_F(\Delta)$ are $O(\kappa)$. In the latter case, we set $r_i=c\cdot 2^{I_E(\Delta)-I_F(\Delta)/d}\cdot b^{1/d}$. Note that the small constant $c$ can be determined from the analysis. As such, computing this new $r_i$ can be done in $O(\kappa)$ time.
The new $r_i$ is only a constant difference from the originally defined $r_i$ in the proof of Lemma~\ref{lem:partition}, and thus this does not affect the analysis asymptotically.

We next compute a $(1/r_i)$-cutting for $H_i(\Delta_i)$. This is equivalent to computing a $(1/r_i)$-cutting for $H'_i(\Delta_i)$ where each hyperplane $h$ has a weight $w(h)$ as defined above (in the weighted case, the total weight of all hyperplanes crossing any cell of the cutting is required to be no more than the total weight of all hyperplanes of $H_i'(\Delta_i)$ divided by $r_i$). The challenge is that we do not know the value $w(h)$ exactly. Instead, we use $w'(h)$ as the weight. Since $w(h)\in (w'(h)/2,w'(h)]$, the sum of the weights of all hyperplanes in $H_i(\Delta_i)$ is at most a factor of 2 difference from the sum of the old weights using $w(h)$. Therefore, a $(1/(4r_i))$-cutting using the new weights $w'(h)$ must be a $(1/r_i)$-cutting on the old weights $w(h)$. To compute such a cutting, even though we know the values $w'(h)$ exactly, we cannot apply the cutting algorithm~\cite{ref:ChazelleCu93} directly because each $w'(h)$ may be exponentially large. To cope with this issue, we have the following lemma.

\begin{lemma}\label{lem:weightedcutting}
A $(1/(4r_i))$-cutting for $H'_i(\Delta_i)$ using the weights $w'(h)$
can be computed in $O(|H'_i(\Delta_i)|\cdot (\kappa+r_i^{d-1}))$ time.
\end{lemma}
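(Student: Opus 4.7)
The plan is to execute Chazelle's hierarchical cutting algorithm~\cite{ref:ChazelleCu93} on $H'_i(\Delta_i)$ but with the weights $w'(h) = 2^{I(h)}$ handled symbolically, since $I(h) = O(\kappa)$ means the weights may be exponentially large and cannot be materialized as explicit multiplicities. First I would preprocess in $O(|H'_i(\Delta_i)| \cdot \kappa)$ time by reading off the exponent $I(h)$ for each hyperplane $h$ (which is already maintained by the global algorithm) and bucketing the hyperplanes by the value of $I(h)$ into at most $O(\kappa)$ buckets. This preprocessing pass accounts for the additive $\kappa$ term in the stated bound.

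Next I would run Chazelle's algorithm essentially verbatim, except that whenever a threshold test on the ``number'' of hyperplanes crossing a cell $\sigma$ appears, it is replaced by a threshold test on the weighted sum $\sum_{h \in H'_i(\sigma)} w'(h)$. For each cell $\sigma$ produced during the recursion, I would maintain this weighted sum using an $O(\kappa)$-size binary array analogous to $E_\Delta$, so that inserting or deleting a single hyperplane $h$ updates the array in $O(\kappa)$ time via a carry-propagation at bit position $I(h)$, and a threshold comparison can be done in $O(1)$ time using the position of the leading one-bit. The purely geometric work of Chazelle's algorithm (constructing the constant-level sub-cuttings, distributing hyperplanes to sub-cells, computing arrangements) is unchanged from the unweighted case and contributes $O(|H'_i(\Delta_i)| \cdot r_i^{d-1})$, which gives the second term of the bound. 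The weight-array updates can be amortized against the bucketing preprocessing, so no extra $\kappa$ factor attaches to the geometric cost.

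The main obstacle is ensuring that the derandomized construction of the constant-level $(1/r_0)$-cuttings used as building blocks of the hierarchy, typically based on deterministic $\epsilon$-approximations, goes through in the weighted setting. This is handled by observing that a weighted $\epsilon$-approximation can be computed deterministically with only an $O(\kappa)$ additive overhead per hyperplane, again via the array-based weight representation, and that Chazelle's size and crossing-number analysis is driven entirely by sums of weights rather than unweighted counts, so it transfers to the weighted setting without modification. Finally, one must justify that asking for a $(1/(4r_i))$-cutting under $w'$ suffices to obtain a $(1/r_i)$-cutting under the true weights $w$: this follows since $w(h)\in(w'(h)/2,w'(h)]$ implies that weighted crossing sums under $w$ and $w'$ agree up to a factor of $2$, and this is already accounted for in the lemma's statement. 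Combining the $O(|H'_i(\Delta_i)|\cdot\kappa)$ preprocessing with the $O(|H'_i(\Delta_i)|\cdot r_i^{d-1})$ geometric cost yields the claimed time bound.
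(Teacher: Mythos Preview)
Your approach is genuinely different from the paper's, and it has a gap. The paper does \emph{not} open up Chazelle's algorithm and reinterpret every internal step in weighted form. Instead it uses a scaling trick (due to Matou\v{s}ek) to reduce the weighted instance to an \emph{unweighted} one of essentially the same size, and then calls the unweighted cutting algorithm as a black box. Concretely: after spending $O(m_Q\kappa)$ time (with $m_Q=|H'_i(\Delta_i)|$) to locate the highest bit $q$ of the total weight $W'(Q)$, the paper forms a multiset $Q'$ in which hyperplane $h$ gets $2^{p+1+I(h)-q}$ copies (or one copy if that exponent is negative), where $p=\lfloor\log m_Q\rfloor$. A short calculation shows $|Q'|\le 5m_Q$, so an ordinary unweighted $(1/(5r_i))$-cutting for $Q'$ is computable in $O(m_Q r_i^{d-1})$ time, and another short calculation shows this cutting is already a $(1/r_i)$-cutting for $(Q,w')$. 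The two phases give exactly the additive bound $O(m_Q(\kappa+r_i^{d-1}))$.

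The gap in your proposal is the sentence ``the weight-array updates can be amortized against the bucketing preprocessing, so no extra $\kappa$ factor attaches to the geometric cost.'' In Chazelle's hierarchy the total conflict-list size (summed over all cells at all levels) is $\Theta(m_Q r_i^{d-1})$, and each time a hyperplane is charged to a cell you must add $2^{I(h)}$ into that cell's $O(\kappa)$-bit array. You give no argument that these $\Theta(m_Q r_i^{d-1})$ carry-propagating additions cost $O(1)$ amortized rather than $O(\kappa)$ each, and standard binary-counter amortization does not apply because the additions are at arbitrary bit positions $I(h)$ and are spread across many independent arrays. Without such an argument your bound degrades to $O(m_Q r_i^{d-1}\kappa)$. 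The claim that weighted $\epsilon$-approximations go through ``with only an $O(\kappa)$ additive overhead per hyperplane'' is similarly asserted without justification; the deterministic $\epsilon$-approximation constructions inside Chazelle's algorithm involve halving/merging steps whose weighted analogues are not obviously cheap when weights are exponentially large. The paper's reduction sidesteps both issues entirely.
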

\begin{proof}
Let $Q=H_i'(\Delta_i)$ and $m_Q=|Q|$.
Recall that for each hyperplane $h\in Q$, $\lambda(h)$ is available to us. Recall also that $w'(h)=2^{I(h)}$ with $I(h)=\lceil\log C+\lambda(h)\cdot \log(1+1/b)\rceil$. Define $W'(Q)=\sum_{h\in Q}w'(h)$, i.e., $W'(Q)=W'(\Delta_i)$. By Observation~\ref{obser:10}, $\log W'(Q)=O(\kappa)$.


We use an approach similar to that in \cite{ref:WangAl20}, which extends the method suggested by Matou\v{s}ek (in Lemma 3.4~\cite{ref:MatousekEf92}) and the algorithm in Theorem 2.8
of~\cite{ref:MatousekCu91} for computing a cutting for a set
of weighted lines.

The first step is to compute an integer $q$ so that $2^q\leq W'(Q)<2^{q+1}$.
For each hyperplane $h\in Q$, since $w'(h)\leq W'(Q)$, we also have $I(h)=O(\kappa)$. We can compute $q$ in $O(m_{\Delta}\cdot\kappa)$ time as follows.

Create an array $A$ of size $O(\kappa)$, with each element of $A$ equal to either $1$ or $0$.
For analysis purpose, let $val(A)$ denote the value represented by the binary code of
the elements of $A$. Initially, we set every element of $A$ to $0$, and thus $val(A)=0$.
For each hyperplane $h\in Q$, we add $2^{I(h)}$ to $val(A)$ by updating the
array $A$, i.e., add $1$ to the $I(h)$-th element of $A$.
Since the size of $A$ is $O(\kappa)$, this addition operation
can be easily done in $O(\kappa)$ time. The total time for
doing this for all hyperplanes of $Q$ is $O(m_{\Delta}\cdot\kappa)$ time. Finally, if $i$ is the
largest index of $A$ with $A[i]=1$, then we have $q=i$.

Define $p=\lfloor\log m_{Q}\rfloor$. Thus, $2^p\leq m_{Q}< 2^{p+1}$.

We define a multiset $Q'$ of $Q$ as follows. For each hyperplane $h\in Q$, if
$p+1+I(h)-q\geq 0$, then we place $2^{p+1+I(h)-q}$ copies of $h$ into
$Q'$; otherwise, we put just one copy of $h$ into $Q'$. Let $|Q'|$
denote the cardinality of $Q'$, counted with multiplicities. Then, we can derive
\begin{equation*}
\begin{split}
|Q'| & \leq   |Q| + \sum_{h\in Q} 2^{p+1+I(h)-q} = m_Q +
2^{p+1-q} \cdot  \sum_{h\in Q} 2^{I(h)} = m_Q + 2^{p+1-q} \cdot W'(Q)
\\
& \leq  m_Q + 2^{p+1-q} \cdot 2^{q+1} = m_Q + 2^{p+2}\leq m_Q + 4m_Q = 5m_Q.
\end{split}
\end{equation*}

This bound $|Q'|\leq 5m_Q$ also implies that the step of ``place $2^{p+1+I(h)-q}$ copies of $h$ into $Q'$'' for all $h\in Q$ can be done in $O(m_{Q})$ time. As such, producing the multiset $Q'$ takes $O(m_{Q})$ time.

Let $r=r_i$. We now compute a $\frac{1}{5r}$-cutting $\Xi$ for the
unweighted multiset $Q'$ in $O(m_{Q}\cdot r^{d-1})$ time by Lemma~\ref{lem:cutting}. We prove below that $\Xi$ is a $(1/r)$-cutting for the weighted set $(Q,w')$, and therefore
we can simply return $\Xi$ as our solution to the lemma. The total time of the algorithm
is $O((m_{Q}\cdot(\kappa+r^{d-1}))$.

Let $\sigma$ be a cell of $\Xi$. Define $Q_{\sigma}$ to
be the subset of hyperplanes of $Q$ that cross $\sigma$. It suffices to prove $W'(Q_{\sigma})\leq W'(Q)/r$, where $W'(Q_{\sigma})$ is the total weight of all hyperplanes of $Q_{\sigma}$.
Define $Q'_{\sigma}$ as the multiset of hyperplanes of $Q'$ crossing
$\sigma$.
Since $\Xi$ is a $\frac{1}{5r}$-cutting of $Q'$ and $|Q'|\leq 5m_Q$, we have $|Q'_{\sigma}|\leq \frac{|Q'|}{5r}\leq m_Q/r$.
Consequently, we can derive
\begin{equation*}
\begin{split}
W'(Q_{\sigma}) & = \sum_{h\in Q_{\sigma}} w'(h) = \sum_{h\in
Q_{\sigma}} 2^{I(h)} = \frac{1}{2^{p+1-q}} \cdot \sum_{h\in
Q_{\sigma}} 2^{p+1+I(h)-q} \leq  \frac{1}{2^{p+1-q}} \cdot
|Q'_{\sigma}| \\
& \leq \frac{m_Q}{2^{p+1-q}\cdot r} = \frac{2^q\cdot m_Q}{2^{p+1}\cdot r}
\leq \frac{W'(Q)\cdot m_Q}{2^{p+1}\cdot r} \leq \frac{W'(Q)}{r}.\\
\end{split}
\end{equation*}

This proves that $\Xi$ is a $\frac{1}{r}$-cutting for $(Q,w')$.
\end{proof}

Since $r_i=O(b^{1/(d-1+\beta)})$, the time of the algorithm in the above lemma is bounded by $O(|H'_i(\Delta_i)|\cdot (\kappa+b))$.
The total time of computing the cuttings for all $t$ iterations is $O(\sum_{\Delta\in S}|H'_i(\Delta)|\cdot (\kappa+b))$. Since each hyperplane of $H$ crosses $O(\kappa)$ cells of $S$, we have $\sum_{\Delta\in S}|H'_i(\Delta_i)|=O(m\kappa)$, and thus the total time of the cutting algorithm for all $t$ iterations is $O(m\kappa(\kappa+b))$.

\paragraph{The second step.}
The second step of the algorithm is to make vertical cuts so that each subcell of the cutting has at most $n/(bt)$ points of $P$. For this, we presort all points of $P$ in the beginning of the entire algorithm. Assuming that the at most $n/t$ points of $P$ in $\Delta_i$ have been sorted. Then, for each point in the sorted order of the points of $P$ inside $\Delta_i$, we first determine the subcell of the cutting containing the point, which takes $O(\log r_i)=O(\log b)$ time by point location~\cite{ref:ChazelleCu93}. In this way, we obtain the sorted list of the points of $P$ in each subcell of the cutting. The vertical cuts can be made by traversing the sorted list in linear time. Hence this step can be done in $O(n/t\cdot \log b)$ time. The total time of this step for all $t$ iterations is $O(n\log b)$, excluding the presorting step that takes $O(n\log n)$ time.

\paragraph{The third step.}
In the third step, we need to count for each hyperplane $h\in H_i'(\Delta_i)$ the number of subcells of $\Delta_i$ crossed by $h$. To this end, we use a simple brute-force algorithm. For each hyperplane $h\in H'_i(\Delta_i)$, for each subcell of $\Delta_i$, we check whether $h$ crosses the subcell. As $\Delta_i$ has $O(b)$ cells, the time of this step is bounded by $O(|H'_i(\Delta_i)|\cdot b)$.
The total time of this step for all $t$ iterations is $\sum_{\Delta_i\in S}O(|H'_i(\Delta_i)|\cdot b)$, which is $O(m\cdot \kappa\cdot b)$ as $\sum_{\Delta_i\in S}|H'_i(\Delta_i)|=O(m\kappa)$.

After this step, we update $\lambda(h)$ for each $h\in H'_i(\Delta_i)$. We also need to update the arrays $E_{\Delta}$ and $F_{\Delta}$ for cells $\Delta\in S_{i-1}=S_{i}\setminus\Delta_i$. To facilitate the update, before the first iteration of the algorithm (i.e., before $\Delta_t$ is processed), we construct the arrangement $\calA$ of $H$ and for each hyperplane $h\in H$, we maintain a list of cells of $S$ crossed by $h$ and a list of vertices $v$ of $\calA$ that are on $h$ as well as the cells of $S$ containing $v$. In this way, we can use the above two lists associated with $h$ to update the corresponding $E_{\Delta}$ and $F_{\Delta}$. More specifically, for each $h\in H'_i(\Delta_i)$, if $\lambda(h)$ is updated, then using the first list, we find those cells  $\Delta\in S_{i-1}$ crossed by $h$ and update $E_{\Delta}$ in $O(\kappa)$ time as discussed before; using the second list, we find the vertices $v$ of $\calA$ on $h$, and for each such $v$, we update $I(v)$ in $O(1)$ time and also update $F_{\Delta}$ in $O(\kappa)$ time, where $\Delta$ is the cell of $S$ containing $v$. Since each hyperplane of $H$ crosses $O(\kappa)$ cells of $S$, the total time of these updates in all $t$ iterations is $O(m^d\cdot \kappa^2)$.

Therefore, processing $\Delta_i$ for all $t$ iterations takes $O(m^d\cdot \kappa^2+m\kappa(\kappa+b)+n\log b)$ time, excluding the time for sorting $P$.

\subsubsection{Summary}

In summary, for each $t=1,b,b^2,\cdots,b^k$, the runtime of the round-$t$ of the algorithm is bounded by $O(m^d\cdot \kappa^2+m\kappa(\kappa+b)+mt\kappa^2+\kappa\cdot t^2+n\log b)$, with $\kappa=O(t^{1-1/d}+\log^{O(1)}m)$ and $b=\log^{\rho} r$, excluding the time for sorting $P$.
Since $r\leq n$ and $m=n^{O(1)}$, the total time of the overall algorithm for all $t=1,b,b^2,\cdots,b^k$ is $O(m^d\cdot (r^{2-2/d}+\log^{O(1)}n)+n\log n+r^{3-1/d}+r^2\cdot \log^{O(1)}n+m\cdot r^{3-2/d}+m\cdot r^{1-1/d}\cdot \log^{O(1)}n)$. Note that $m\cdot r^{1-1/d}\cdot \log^{O(1)}n=O(m^d\log^{O(1)}n+r^2\cdot \log^{O(1)}n)$. Indeed, if $m\leq r$, then $m\cdot r^{1-1/d}\cdot \log^{O(1)}n=O(r^2\cdot \log^{O(1)}n)$; otherwise, $m\cdot r^{1-1/d}\cdot \log^{O(1)}n=O(m^d\log^{O(1)}n)$ since $d\geq 2$.
Therefore, the runtime is bounded by $O(m^d\cdot (r^{2-2/d}+\log^{O(1)}n)+n\log n+r^{3-1/d}+r^2\cdot \log^{O(1)}n+m\cdot r^{3-2/d})$, as stated in Lemma~\ref{lem:partitiontree}.

If $b'\neq 1$, then the only difference is that in the first round we use parameter $b'\leq b$. The algorithm and analysis are similar; the runtime is asymptotically the same as above.
This proves Lemma~\ref{lem:partitiontree}.


\subsection{The hierarchical partition tree}

Lemma~\ref{lem:partitiontree} guarantees the crossing numbers for the hyperplanes in $H$. To make it work for any hyperplane in $\bbR^d$, we use the following lemma, which was known before, e.g.,~\cite{ref:ChanOp12}.


\begin{lemma}\label{lem:test20}{\em(Test Set Lemma~\cite{ref:ChanOp12})}
For a set $P$ of $n$ points in $\bbR^d$, we can compute in $O(n^d)$ time a set $H$ (called {\em test set}) of $O(n^d)$ hyperplanes with the following property: for any set of cells each containing at least one point of $P$, if $\kappa$ is the maximum number of cells crossed by any hyperplane of $H$, then the maximum number of cells crossed by any hyperplane is $O(\kappa)$.
\end{lemma}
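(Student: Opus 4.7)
The plan is to take $H$ to be the set of all hyperplanes passing through $d$-tuples of points of $P$. There are $\binom{n}{d}=O(n^d)$ such $d$-tuples, so $|H|=O(n^d)$, and $H$ can be enumerated in $O(n^d)$ time by iterating over all $d$-tuples and computing the affine hull of each (breaking ties / handling degenerate tuples in any consistent way). The nontrivial part is establishing the crossing-number property: for any collection $\Sigma$ of cells each containing at least one point of $P$ and any hyperplane $h\in\bbR^d$, the number of cells of $\Sigma$ crossed by $h$ is at most a constant factor times the maximum number crossed by any $h'\in H$.

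The strategy for the crossing bound is a continuous deformation argument that turns an arbitrary $h$ into some $h^*\in H$ while preserving a constant fraction of its crossings. Given $h$, let $X$ be the set of cells of $\Sigma$ crossed by $h$, and for each $\Delta\in X$ pick a witness point $p_\Delta\in P\cap\Delta$. Partition $X$ into $X^+,X^-,X^0$ according to whether $p_\Delta$ lies strictly above, strictly below, or on $h$; by pigeonhole one of $X^+$ or $X^-$, say $X^+$, has size at least $|X|/3$. I would then translate $h$ continuously in the direction of the normal pointing into $X^+$, stopping at the first time $t_1$ at which the translate $h_{t_1}$ meets a point $p_1\in P$. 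A cell $\Delta\in X^+$ can fail to be crossed by $h_{t_1}$ only if $h_{t_1}$ moves past the ``lower envelope'' of $\Delta$ entirely; but while this happens $p_\Delta$ must remain strictly above $h_{t_1}$ (otherwise the sweep would have stopped earlier). I would use this observation together with a charging argument against witness points to show that $h_{t_1}$ still crosses $\Omega(|X^+|)$ cells.

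Having fixed the incidence at $p_1$, I would iterate: rotate $h_{t_1}$ about $p_1$ within the $(d{-}1)$-dimensional pencil of hyperplanes through $p_1$, in a direction chosen analogously (again partition the remaining crossed cells by the side of their witness point and sweep toward the majority side), until the hyperplane meets a second point $p_2\in P$. Continue inside the flag of affine subspaces of decreasing dimension until the hyperplane passes through $d$ points of $P$ in general position, yielding some $h^*\in H$. Since $d$ is a constant, the accumulated loss over the $d$ stages is a constant factor, so $h^*$ crosses $\Omega(|X|)$ cells of $\Sigma$, giving $|X|\le O(\kappa)$.

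The main obstacle I expect is making the one-stage charging argument airtight, in particular bounding the cells of $X^+$ that are ``lost'' not because the sweep hits their witness point but because the sweep moves past the cell geometrically. The cleanest handling is likely to charge each lost cell to the witness point $p_\Delta$ together with the event that $h_t$ sweeps across $p_\Delta$ in a later stage of the deformation (or, symmetrically, to separate the sweep direction and the point incidence by a genericity argument so that each lost cell can be paired with a distinct point of $P$ on $h^*$). Once this is established for one stage, iterating it through the $d$ stages and assembling the final bound is routine, as is verifying that all $O(n^d)$ hyperplanes in $H$ can be listed in the claimed time.
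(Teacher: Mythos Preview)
Your choice of $H$ and the translate-then-rotate strategy are exactly the standard argument the paper is invoking; the paper itself does not spell it out but simply cites Chan~\cite{ref:ChanOp12} for the crossing property and adds only the computational remark that the hyperplanes of $H$ are the primal duals of the vertices of the arrangement of the dual hyperplanes $\{p^*:p\in P\}$, so $H$ can alternatively be produced by building that arrangement in $O(n^d)$ time~\cite{ref:EdelsbrunnerCo86}. Your direct enumeration of $d$-tuples gives the same bound.

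Your ``main obstacle'' is not actually an obstacle. Take $\Delta\in X^+$ with witness $p_\Delta$ strictly above $h$, and translate $h$ upward until it first meets a point of $P$, obtaining $h_{t_1}$. Since $\Delta$ was crossed by $h$, it contains a point strictly below $h$, hence strictly below $h_{t_1}$. Since the sweep stopped at the \emph{first} point of $P$, the witness $p_\Delta$ is still on or above $h_{t_1}$; if it is strictly above, $h_{t_1}$ separates two points of $\Delta$ and therefore crosses it. Thus the only cells of $X^+$ that can be lost are those whose witness equals the stopping point, which is at most one cell when the witnesses are distinct (as they are whenever the cells are pairwise disjoint, which holds in every use of the lemma here). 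There is no separate ``geometric'' loss mechanism: the sweep cannot move past $\Delta$ while $p_\Delta$ remains strictly above it. The rotation stages are handled by the same reasoning after reducing dimension (view hyperplanes through $p_1$ as hyperplanes in a $(d{-}1)$-dimensional quotient, e.g., via central projection from $p_1$); over all $d$ stages you lose only an additive $O(d)=O(1)$ cells and a multiplicative $2^{O(d)}=O(1)$ factor, completing the argument.
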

\begin{proof}
As discussed in \cite{ref:ChanOp12}, we can take all $O(n^d)$ hyperplanes that contain all $d$-tuple of points of $P$, which has the desired property of the lemma. These hyperplanes are dual to the vertices of the arrangement of all dual hyperplanes of the points of $P$. As the arrangement can be computed in $O(n^d)$ time~\cite{ref:EdelsbrunnerCo86}, the lemma follows. \end{proof}

Lemmas \ref{lem:partitiontree} and \ref{lem:test20} together lead to the following result.

\begin{theorem}\label{theo:partition}
{\em (Hierarchical Partition Theorem)}
Given a set $P$ of $n$ points in $\bbR^d$ and a parameter $r$ with $r_0\leq r\leq n$ for a sufficiently large constant $r_0$, for $b=\log^\rho r$ for a sufficiently large constant $\rho>0$, there exists a sequence  $\Pi_0,\Pi_1,\ldots,\Pi_{k+1}$ with $k=\lfloor \log_b r\rfloor$ and $b'=\lceil r/b^{k}\rceil$, where each $\Pi_i$ is a collection of disjoint simplicial cells (i.e., each cell is a simplex) whose union covers $P$ with the following properties:
\begin{enumerate}
  \item $\Pi_0$ has only one cell that is $\bbR^d$ and the number of cells of $\Pi_i$ is $O(b'\cdot b^{i-1})$ for $i\geq 1$ (in particular, $\Pi_{k+1}$ has $O(r)$ cells; the total number of cells in all collections is also $O(r)$).
  \item Each cell of $\Pi_i$ contains at least one point and at most $2n/(b'\cdot b^{i-1})$ points of $P$ for all $i\geq 0$ (in particular, each cell of $\Pi_{k+1}$ contains at most $2n/r$ points).
  \item Each cell in $\Pi_{i+1}$ is contained in a single cell of $\Pi_i$.
  \item Each cell of $\Pi_i$ contains $O(b)$ cells of $\Pi_{i+1}$ for $i\geq 1$ and the cell of $\Pi_0$ contains $O(b')$ cells of $\Pi_1$.
  \item Any hyperplane crosses at most $O((b'\cdot b^{i-1})^{1-1/d}+\log^{O(1)}n)$ cells of $\Pi_i$ for all $i\geq 1$.
\end{enumerate}
All collections of cells can be constructed in $O(n^{d^2}\cdot (r^{2-2/d}+\log^{O(1)}n))$ time.
\end{theorem}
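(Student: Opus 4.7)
The plan is to combine Lemma~\ref{lem:partitiontree} with the Test Set Lemma (Lemma~\ref{lem:test20}); the theorem follows by a direct composition once we verify that the crossing bound transfers from the test set to arbitrary hyperplanes and that the time bound reduces to the stated expression under $m=O(n^d)$.

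First I would invoke Lemma~\ref{lem:test20} on $P$ to compute in $O(n^d)$ time a test set $H$ of $m=O(n^d)$ hyperplanes. Then I would apply Lemma~\ref{lem:partitiontree} with this $H$ and the given parameter $r$, which produces the sequence $\Pi_0,\Pi_1,\ldots,\Pi_{k+1}$. Properties (1), (3), and (4) are inherited verbatim. Property (2) for $i\geq 1$ is also inherited; the extension to $i=0$ is trivial since $\Pi_0=\{\bbR^d\}$ contains all $n$ points and the bound $2n/(b'\cdot b^{-1})=2nb/b'\geq 2n$ holds because $b'\leq b$.

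The key content is property (5). Lemma~\ref{lem:partitiontree} guarantees that every hyperplane of $H$ crosses at most $O((b'\cdot b^{i-1})^{1-1/d}+\log^{O(1)}m)$ cells of $\Pi_i$ for each $i\geq 1$, and since $m=n^{O(1)}$ we have $\log^{O(1)}m=\log^{O(1)}n$. Because property (2) ensures that every cell of $\Pi_i$ (for $i\geq 1$) contains at least one point of $P$, the hypothesis of Lemma~\ref{lem:test20} is satisfied, and therefore the maximum number of cells of $\Pi_i$ crossed by \emph{any} hyperplane in $\bbR^d$ is at most a constant factor larger than the bound above. This gives property (5).

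For the running time, I would substitute $m=O(n^d)$ into the time bound of Lemma~\ref{lem:partitiontree}, namely
$$O\!\left(m^d\cdot(r^{2-2/d}+\log^{O(1)}n)+n\log n+r^{3-1/d}+r^2\log^{O(1)}n+m\cdot r^{3-2/d}\right).$$
The leading term becomes $O(n^{d^2}\cdot(r^{2-2/d}+\log^{O(1)}n))$. Using $r\leq n$ and $d\geq 2$, the other terms are absorbed: $r^{3-1/d}\leq n^{3-1/d}\leq n^{d^2}$; $r^2\log^{O(1)}n$ is dominated by $n^{d^2}\log^{O(1)}n$; and $m\cdot r^{3-2/d}=n^d\cdot r^{3-2/d}\leq n^{d+1}\cdot r^{2-2/d}\leq n^{d^2}\cdot r^{2-2/d}$ (using $d+1\leq d^2$ for $d\geq 2$). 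The additive $O(n^d)$ for constructing $H$ is likewise subsumed. This yields the claimed $O(n^{d^2}\cdot(r^{2-2/d}+\log^{O(1)}n))$ bound.

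There is no real obstacle here beyond bookkeeping: the derandomized partition tree of Lemma~\ref{lem:partitiontree} was designed precisely so that replacing $H$ with a polynomial-size test set triggers Lemma~\ref{lem:test20} to bootstrap the crossing guarantee to all hyperplanes. The only place where one must be mildly careful is verifying that the $i=0$ cell satisfies property (2) and that every term in the time bound is dominated by $n^{d^2}\cdot r^{2-2/d}$, both of which are routine.
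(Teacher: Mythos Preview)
Your proposal is correct and follows exactly the approach the paper takes: invoke the Test Set Lemma to obtain $H$ of size $m=O(n^d)$, feed it into Lemma~\ref{lem:partitiontree}, use Lemma~\ref{lem:test20} to upgrade property~(5) from $H$ to all hyperplanes, and absorb the remaining time terms using $r\le n$ and $d\ge 2$. The paper's proof is a two-line sketch of this; your version simply spells out the bookkeeping (the $i=0$ case of property~(2) and the domination of the auxiliary time terms) that the paper leaves implicit.
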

\begin{proof}
We just apply Lemma~\ref{lem:partitiontree} with $H$ as the test set given in Lemma~\ref{lem:test20}, which can be computed in $O(n^d)$ time. Since $m=O(n^d)$ and $r\leq n$, we obtain the theorem from Lemma~\ref{lem:partitiontree}.
\end{proof}

The collections of cells of Theorem~\ref{theo:partition} naturally form a tree structure, called a {\em hierarchical partition tree}, in which each node corresponds to a cell. More specifically, the only cell in $\Pi_0$ is the root. Cells of $\Pi_{k+1}$ are the leaves. If a cell $\Delta\in \Pi_i$ contains another cell $\Delta'\in \Pi_{i+1}$, then $\Delta$ is the parent of $\Delta'$ and $\Delta'$ is a child of $\Delta$. As such, each node of the tree has $O(b)$ children (the root has $O(b')$ children).
Although the construction time of Theorem~\ref{theo:partition} is relatively large, it can usually be reduced (e.g., to $O(n^{1+\epsilon})$ time) in applications by constructing an ``upper'' partition tree using other techniques (e.g., \cite{ref:MatousekEf92}) and then processing the leaves (each containing a small number of points) using Theorem~\ref{theo:partition}, as will be demonstrated in the subsequent sections.


\section{Simplex range counting}
\label{sec:simcount}

Given a set $P$ of $n$ points in $\bbR^d$, the problem is to construct a data structure so that the number of points of $P$ inside a query simplex can be quickly computed.
If we set $r=n$ in Theorem~\ref{theo:partition}, we can have a data structure of $O(n)$ space and $O(b\cdot n^{1-1/d})$ query time, which is not $O(n^{1-1/d})$ as $b=\Theta(\log n)$.
In the following, we reduce the query time to $O(n^{1-1/d}/\log^{\Omega(1)} n)$.

For any region $R$ in the plane, let $P(R)$ denote the subset of points of $P$ in $R$.

Setting $r=n/b^{2d/(d-1)}$ with $b=\log^\rho n$, we apply Theorem~\ref{theo:partition} to obtain a partition tree $T$ with collections $\Pi_i$, $0\leq i\leq k+1$. The total number of cells is $O(r)$. For each cell $\Delta$, we store $|P(\Delta)|$. By Theorem~\ref{theo:partition}, each cell in $\Pi_{k+1}$ contains $O(n/r)=O(b^{2d/(d-1)})=O(\log^{2d\rho/(d-1)} n)$ points of $P$. The runtime to construct $T$ is $O(n^{d^2+2})$.

Given a query simplex $\sigma$, starting from the root of $T$, for each cell $\Delta$ of $T$, if $\Delta$ is contained in $\sigma$, then we add $|P(\Delta)|$ to a total count. If $\Delta$ is completely outside $\sigma$, then we ignore it. Otherwise, a bounding halfplane of $\sigma$ must cross $\Delta$ and we proceed on all children of $\Delta$ unless $\Delta$ is a leaf. In this way, we obtain a set $V$ of $O(r^{1-1/d})$ leaf cells that are crossed by the bounding halfplanes of $\sigma$. The runtime is $O((r/b)^{1-1/d}\cdot  b)$, which is $O(r^{1-1/d}\cdot  b^{1/d})$.

We now build the data structure recursively on the points in $P(\Delta)$ for each leaf cell $\Delta$ of $T$.
If $Q(n)$ is the query time, we have the following recurrence relation:
\begin{equation}\label{equ:new10}
    Q(n)  = O(r^{1-1/d}\cdot b^{1/d}) + O(r^{1-1/d})\cdot Q_1(n/r),
\end{equation}
where $Q_1(\cdot)$ is the query time for each leaf cell $\Delta$, which contains $O(n/r)=O(\log^{2d\rho/(d-1)} n)$ points of $P$.

To solve $Q_1(n/r)$, we preprocess $P(\Delta)$ for each leaf cell $\Delta\in T$ recursively as above by using different parameters. Specifically, let $n_1=|P(\Delta)|$, which is $O(n/r)$. Let $\tau>0$ be an arbitrarily small constant to be set later. Setting $r_1=n_1/\log^{\tau}n$ with $b_1=\log^{\rho}n_1$, we apply Theorem~\ref{theo:partition} to construct a partition tree $T(\Delta)$ in $O(n_1^{d^2+2})$ time. The total time for constructing $T(\Delta)$ for all leaf cells $\Delta$ of $T$ is bounded by $O(n^{d^2+2})$.  Using $T(\Delta)$ to handle queries on $P(\Delta)$ and following the same analysis as above, we obtain
\begin{equation}\label{equ:new20}
    Q_1(n_1) = O(r_1^{1-1/d}\cdot b_1^{1/d}) + O(r_1^{1-1/d})\cdot Q_2(n_1/r_1),
\end{equation}
where $Q_2(\cdot)$ is the query time for each leaf cell of $T(\Delta)$, which contains $O(n_1/r_1)$ points of $P$.

Combining \eqref{equ:new10} and \eqref{equ:new20} leads to
\begin{equation*}
    \begin{split}
        Q(n) & = O(r^{1-1/d}\cdot b^{1/d}) + O(r^{1-1/d})\cdot Q_1(n/r)\\
             & = O(r^{1-1/d}\cdot b^{1/d}) + O(r^{1-1/d}\cdot r_1^{1-1/d}\cdot b_1^{1/d}) + O(r^{1-1/d}\cdot r_1^{1-1/d})\cdot Q_2(n_1/r_1)\\
             & = O(r^{1-1/d}\cdot b^{1/d}) + O\left(r^{1-1/d}\cdot \left(\frac{n_1}{\log^{\tau} n}\right)^{1-1/d}\cdot b_1^{1/d}\right) + O\left(r^{1-1/d}\cdot \left(\frac{n_1}{\log^{\tau} n}\right)^{1-1/d}\right)\cdot Q_2(n_1/r_1)\\
             & = O(r^{1-1/d}\cdot b^{1/d}) + O\left(r^{1-1/d}\cdot \left(\frac{n}{r\cdot \log^{\tau} n}\right)^{1-1/d}\cdot b_1^{1/d}\right) + O\left(r^{1-1/d}\cdot \left(\frac{n}{r\cdot \log^{\tau} n}\right)^{1-1/d}\right)\cdot Q_2(n_1/r_1)\\
             & = O(r^{1-1/d}\cdot b^{1/d}) + O\left(\left(\frac{n}{\log^{\tau} n}\right)^{1-1/d}\cdot b_1^{1/d}\right) + O\left(\left(\frac{n}{\log^{\tau} n}\right)^{1-1/d}\right)\cdot Q_2(n_1/r_1)\\
             & =  O\left(\frac{n^{1-1/d}}{\log^{(2-1/d)\cdot \rho} n}\right) + O\left(\left(\frac{n}{\log^{\tau} n}\right)^{1-1/d}\cdot \log^{\rho/d}\log n\right) + O\left(\left(\frac{n}{\log^{\tau} n}\right)^{1-1/d}\right)\cdot Q_2(\log^{\tau}n)\\
             & =  O\left(\frac{n^{1-1/d}}{\log^{\Omega(1)} n}\right) + O\left(\left(\frac{n}{t}\right)^{1-1/d}\right)\cdot Q_2(t),\\
    \end{split}
\end{equation*}
where $t=\log^{\tau}n$.


In summary, the above first builds a partition tree $T$ and then builds a partition tree $T(\Delta)$ for each leaf $\Delta$ of $T$ (so there are two recursive steps but with different parameters). For notational convenience, we still use $T$ to refer to the entire tree (by attaching $T(\Delta)$ for all leaves $\Delta$), which has $O(n/t)$ leaves, each containing $O(t)$ points. The total space is bounded by $O(n)$ since there are only two recursive steps. In Section~\ref{sec:simcountsub}, by using the property that $t$ is very small, we show that after $O(n)$ space and $O(n\log n)$ time preprocessing, each simplex range counting query on any leaf cell of $T$ can be answered in $O(\log t)$ time (i.e., $Q_2(t)=O(\log t)$, which is $O(\log\log n)$; we consider the query on each leaf cell a {\em subproblem}). As such, we obtain that $Q(n)=O(n^{1-1/d}/\log^{\Omega(1)} n)$. The total preprocessing time is $O(n^{d^2+2})$, dominated by the time for constructing $T$. We thus have the following result.

\begin{lemma}\label{lem:counting}
Given a set $P$ of $n$ points in $\bbR^d$, there is a data structure of $O(n)$ space that can compute the number of points of $P$ in any query simplex in $O(n^{1-1/d}/\log^{\Omega(1)} n)$ time. The data structure can be built in $O(n^{d^2+2})$ time.
\end{lemma}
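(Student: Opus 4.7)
The plan is to assemble the two-level partition tree sketched just above the lemma statement, and then isolate the leaf subproblem as the single piece that still requires a dedicated argument (to be handled in Section~\ref{sec:simcountsub}).

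First I would set $r = n/b^{2d/(d-1)}$ with $b = \log^{\rho} n$ and invoke Theorem~\ref{theo:partition} to build the outer partition tree $T$, storing $|P(\Delta)|$ at every node. This takes $O(r) = O(n)$ space, and the crossing-number guarantee of Theorem~\ref{theo:partition} ensures that any hyperplane crosses only $O(r^{1-1/d})$ cells of the bottom collection. The query is then the standard top-down descent: at each visited cell $\Delta$, classify as contained in the query simplex $\sigma$ (add the stored count), disjoint from $\sigma$ (discard), or crossed by a bounding hyperplane of $\sigma$ (recurse on the $O(b)$ children). The cost of the traversal is $O(r^{1-1/d}\cdot b^{1/d})$ plus the cost of answering the $O(r^{1-1/d})$ subproblems sitting at the leaf cells reached.

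Next I would recurse once more inside each leaf $\Delta$ of $T$, applying Theorem~\ref{theo:partition} to $P(\Delta)$ with the smaller parameters $r_1 = n_1/\log^{\tau} n$ and $b_1 = \log^{\rho} n_1$, where $n_1 = |P(\Delta)| = O(n/r)$. The resulting inner trees $T(\Delta)$ together still occupy $O(n)$ space, their leaves contain only $t = \log^{\tau} n$ points, and the outer partition tree together with all $T(\Delta)$'s forms a combined tree with $O(n/t)$ leaves of size $O(t)$. Following the two-step recurrence computed in the text just above the lemma, the total query time collapses to
\[
Q(n) \;=\; O\!\left(\frac{n^{1-1/d}}{\log^{\Omega(1)} n}\right) \;+\; O\!\left(\left(\frac{n}{t}\right)^{1-1/d}\right)\cdot Q_2(t),
\]
where $Q_2(t)$ is the time to answer a simplex range counting query on a single leaf containing $t$ points.

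The main obstacle, which I would defer to Section~\ref{sec:simcountsub}, is proving that $Q_2(t) = O(\log t) = O(\log\log n)$ using only $O(n)$ space overall across all leaves. The intended strategy is to exploit the fact that $t = \log^{\tau} n$ is extremely small: one tabulates answers for all ``combinatorially distinct'' query configurations against a set of $t$ points, which for constant $d$ and sufficiently small $\tau$ yields a lookup table of sublinear size that can be shared across leaves and searched in $O(\log t)$ time. Once this subroutine is in place, choosing $\tau$ small enough makes the factor $Q_2(t) = O(\log\log n)$ be absorbed into $O(n^{1-1/d}/\log^{\Omega(1)} n)$, giving the claimed query bound. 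The space is $O(n)$ because only two recursive levels are used, and the preprocessing time is dominated by the outer application of Theorem~\ref{theo:partition}, contributing $O(n^{d^2+2})$, which matches the lemma.
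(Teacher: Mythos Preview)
Your proposal is correct and follows essentially the same approach as the paper: the two-level application of Theorem~\ref{theo:partition} with the stated parameters, the resulting recurrence $Q(n)=O(n^{1-1/d}/\log^{\Omega(1)}n)+O((n/t)^{1-1/d})\cdot Q_2(t)$, and the deferral of the $Q_2(t)=O(\log t)$ leaf subproblem to Section~\ref{sec:simcountsub} via tabulation of configurations. The only minor point to sharpen is that the inner applications of Theorem~\ref{theo:partition} over all leaves also take $O(n^{d^2+2})$ time in total (not just the outer one), which the paper notes explicitly.
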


We will further reduce the preprocessing time to $O(n^{1+\epsilon})$ in Section~\ref{sec:simcountpretime}.

\subsection{Solving the subproblems}
\label{sec:simcountsub}

We first consider halfspace range counting queries and then extend the technique to the simplex case.

\paragraph{A basic data structure.}
Let $A$ be a set of $t$ points in $\bbR^d$. We first build a straightforward data structure (called a {\em basic data structure}) of $O(t^d)$ space in $O(t^{d+1})$ time that can answer each halfspace range counting query on $A$ in $O(\log t)$ time.

Let $H$ be the set of dual hyperplanes of $A$. We compute the arrangement $\calA$ of $H$ in $O(t^d)$ time and space~\cite{ref:EdelsbrunnerCo86}. We then build a point location data structure on $\calA$, which can be done in $O(t^d)$ time and supports $O(\log t)$-time point location queries~\cite{ref:ChazelleCu93}. In addition, for each face $f$ of $\calA$, we compute the number of hyperplanes above $f$ (resp., below $f$) and store these two numbers at $f$, e.g., by checking every hyperplane of $A$. This finishes our preprocessing, which can be done in $O(t^{d+1})$ time and uses $O(t^d)$ space. The preprocessing time can be reduced to $O(t^d)$, e.g., by taking an Eulerian tour of the dual graph of the arrangement.

Given a query halfspace $\sigma$, the goal is to compute the number of points of $A$ inside $\sigma$. Without loss of generality, we assume that $\sigma$ is an upper halfspace.
Then, it is equivalent to computing the number of hyperplanes of $H$ below $p$, where $p$ is the dual point of the bounding hyperplane of $\sigma$. Using the point location data structure, we find the face $f$ of $\calA$ that contains $p$; $f$ stores the number of hyperplanes of $H$ below it and we return that number as our answer to the query. The query time is thus $O(\log t)$.

\paragraph{Handling halfspace queries.}
Next, we show that after $O(n\log n)$ time and $O(n)$ space preprorcessing we can answer each halfspace range counting query  in $O(\log t)$ time on $P(\Delta)$ for each leaf cell $\Delta$ of our partition tree $T$.

We build an algebraic decision tree $T_D$ for the arrangement construction algorithm~\cite{ref:EdelsbrunnerCo86} on a set of $t$ hyperplanes in $\bbR^d$ so that each node of $T_D$ corresponds to a comparison in the algorithm. The height of $T_D$ is $O(t^d)$ and $T_D$ has $2^{O(t^d)}$ leaves. Each leaf $v$ of $T_D$ corresponds to a ``configuration'' of a set of $t$ hyperplanes in the following sense. Let $A$ be a set of $t$ hyperplanes with indices from $1$ to $t$. Following the decision tree $T_D$ in a top-down manner, we can reach a leaf $v$ such that all comparisons of the nodes in the path of $T_D$ from the root to $v$ are consistent with $A$; we say that $A$ has the same configuration as $v$. Let $A$ and $B$ are two sets of $t$ hyperplanes each. Let $\calA_A$ and $\calA_B$ be the arrangements of $A$ and $B$, respectively.
If the configurations of $A$ and $B$ are both the same as a leaf $v$ of $T_D$, then
there is a one-to-one correspondence between faces of $\calA_A$ and faces of $\calA_B$ such that if a face $f$ of $\calA_A$ corresponds to a face $f'$ of $\calA_B$, then the $i$-th hyperplane of $A$ is above $f$ if and only if the $i$-th hyperplane of $B$ is above $f'$.

In light of the above observation, we do the following preprocessing. For each leaf $v$ of $T_D$, let $A_v$ be a set of $t$ hyperplanes whose configuration corresponds to that of $v$ (note that the sets $A_v$'s for all leaves $v$ can be constructed in $t^d\cdot 2^{O(t^d)}$ time by following $T_D$, which basically enumuerates all possible configurations for the arrangements of a set of $t$ hyperplanes). We construct the above basic data structure on $A_v$, denoted by $\calD_v$. Doing this for all leaves of $T_D$ takes $t^d\cdot 2^{O(t^d)}$ time and space, which is bounded by $O(n)$ since $t=\log^{\tau}n$ if we choose a small enough $\tau$.

For each leaf cell $\Delta$ of our partition tree $T$, recall that $|P(\Delta)|\leq t$ (if $|P(\Delta)|< t$, we can add $t-|P(\Delta)|$ dummy points to $P(\Delta)$ so that $P(\Delta)$ has exactly $t$ points). Let $H(\Delta)$ denote the set of dual hyperplanes of the points of $P(\Delta)$. We arbitrarily assign indices to the hyperplanes of $H(\Delta)$. Following the decision tree $T_D$, we find the leaf $v$ of $T_D$ that has the same configuration as $H(\Delta)$, which can be done in time linear in the height of $T_D$, i.e., $O(t^d)$; we associate $v$ with $\Delta$. Since $T$ has $O(n/t)$ leaves, doing this for all leaves of $T$ takes $O(n/t\cdot t^{d})$ time, which is $O(n\log n)$ if $\tau$ is small enough.

Given a query halfspace $\sigma$, suppose we want to compute the number of points of $P(\Delta)$ inside $\sigma$ for a leaf cell $\Delta$ of $T$. This can be done in $O(\log t)$ time as follows.
Let $v$ be the leaf of $T_D$ associated with $\Delta$. Let $p$ be the dual point of the bounding hyperplane of $\sigma$. Without loss of generality, we assume that $\sigma$ is an upper halfspace. Hence it is equivalent to finding the number of hyperplanes of $H(\Delta)$ below $p$. We apply the point location query algorithm using the data structure $\calD_v$ with $p$, but whenever the algorithm attempts to use the $i$-th hyperplane of $A_v$ to make a comparison, we use the $i$-th hyperplane of $H(\Delta)$ instead. The point location algorithm will eventually return a face, which stores the number of hyperplanes below it; we return that number as our answer to the query of $\sigma$.

\paragraph{Handling simplex queries.}
We can easily extend the above idea to simplex queries, by using a multilevel data structure. Specifically, for each leaf $v$ of $T_D$, we perform the following preprocessing on $A_v$. We call the data structure $\calD_v$ we built before the {\em first-level} data structure for a set of hyperplanes. Now suppose we already have the $i$-th level data structure for $i\geq 1$. Then, we build the $(i+1)$-th data structure on a set $A$ of hyperplanes as follows. We construct the arrangement $\calA$ of $A$ and build a point location data structure on $\calA$. Then, for each face $f$ of $\calA$, for the subset $A_f$ of all hyperplanes of $A$ above (resp., below) $f$, we build an $i$-th data structure on $A_f$. In this way, we build a $(d+1)$-th data structure on $A_v$, and we still use $\calD_v$ to denote the data structure. The total preprocessing time is $O(t^{d(d+1)})$. Doing this for all leaves of $T_D$ takes $t^{d(d+1)}\cdot 2^{O(t^d)}$ time and space, which is still $O(n)$.

In the same way as in the halfspace case, for each leaf cell $\Delta$ of our partition tree $T$, following the decision tree $T_D$, we find the leaf $v$ of $T_D$ that has the same configuration as $H(\Delta)$ and associate $v$ with $\Delta$. As before, doing this for all leaves of $T$ takes $O(n/t\cdot t^{d})$ time, which is $O(n\log n)$ if $\tau$ is small enough.

Given a query simplex $\sigma$, suppose we want to compute the number of points of $P(\Delta)$ inside $\sigma$ for a leaf cell $\Delta$ of $T$. In the dual setting, this is equivalent to computing the number of dual hyperplanes of $H(\Delta)$ that are in the ``correct'' sides of all $d+1$ dual points of the bounding hyperplanes of $\sigma$. This can be done in $O(\log t)$ time by using the data structure $\calD_v$. Indeed, let $p_i$ for $1\leq i\leq d+1$ be the dual point of the $i$-th bounding hyperplane of $\sigma$. Using the $(d+1)$-th level data structure of $\calD_v$, i.e., the point location data structure on $\calA_v$, we find the face $f$ of the arrangement $\calA(\Delta)$ of $H(\Delta)$ containing $p_{d+1}$ (again whenever the query algorithm attempts to use the $i$-th hyperplane of $A_v$, we use the $i$-th hyperplane of $H(\Delta)$ instead). Recall that we have two $d$-th level data structures associated with $f$, one on the set of hyperplanes above $f$ and the other below $f$; depending on the correct side of $p_{d+1}$, we recurse on the corresponding $d$-th data structure of $f$. At the first-level data structure, the face containing $p_1$ is associated with a number and we return the number as our answer to the simplex query (the face is actually associated with two numbers and we return the one corresponding to the ``correct'' side of $p_1$). In this way, the query can be answered after $d+1$ point locations, whose total time is $O(\log t)$ as $d=O(1)$.

In summary, with $O(n)$ space and $O(n\log n)$ time preprocessing, a simplex range counting query on $P(\Delta)$ for any leaf cell $\Delta$ of $T$ can be answered in $O(\log t)$ time.


\subsection{Reducing the preprocessing time}
\label{sec:simcountpretime}
We now reduce the preprocessing time of Lemma~\ref{lem:counting} to $O(n^{1+\epsilon})$ (with the same space and query time). The main idea is to build an ``upper partition tree'' of $O(1)$ depth using Matou\v{s}ek's method~\cite{ref:MatousekEf92} so that each leaf has $O(n^{\delta})$ points of $P$ for a small constant $\delta>0$ and then construct our data structure in Lemma~\ref{lem:counting} on each leaf (which form the ``lower'' part of the partition tree). The details are given below.

A {\em simplicial partition} for $P$ is a collection $\Pi = \{(P_1, \sigma_1), (P_2, \sigma_2), \ldots, (P_m, \sigma_m)\}$, where the $P_i$’s are pairwise disjoint subsets forming a partition of $P$, and each $\sigma_i$ is a simplex containing all points of $P_i$. The subsets $P_i$'s are called the {\em classes} and the simplices $\sigma_i$'s are called {\em cells}, which may overlap. The {\em crossing number} of $\Pi$ is the maximum number of cells crossed by any hyperplane.

\begin{lemma}\label{lem:buildsimpar}{\em(\cite{ref:MatousekEf92})}
For a set $P$ of $n$ points in $\bbR^d$ and a parameter $r\leq n^{\epsilon'}$ for any constant $\epsilon'<1$, a simplicial partition $\Pi = \{(P_1, \sigma_1), (P_2, \sigma_2), \ldots, (P_{O(r)}, \sigma_{O(r)})\}$
whose classes satisfy $|P_i|\leq n/r$ and whose crossing number is $O(r^{1-1/d})$ can be constructed in $O(n\log r)$ time.
\end{lemma}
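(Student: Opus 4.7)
The plan is to follow Matou\v{s}ek's classical iterative-extraction strategy with exponential reweighting, which is the standard route to this lemma. First, invoke the Test Set Lemma (Lemma~\ref{lem:test20}) — or, for the time bound, a polynomial-size variant obtained via $\epsilon$-approximations — to reduce the task to controlling crossings only against a finite test set $H$ of hyperplanes. Initialize weights $w(h)=1$ for every $h\in H$ and process $P$ in rounds: in each round we produce one pair $(P_i,\sigma_i)$ with $|P_i|\in[n/(2r),n/r]$, remove $P_i$ from the active point set, and multiply $w(h)$ by $(1+c/r^{1-1/d})$ for every $h\in H$ that crosses $\sigma_i$, for a suitable constant $c$.

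The core subroutine is the single-class construction. Given the current weighted set $H$ of total weight $W$ and the current active point set of size $n_0$, compute a weighted $(1/r_0)$-cutting for $H$ for a suitable constant $r_0$ (using Chazelle's algorithm or Lemma~\ref{lem:cutting}), so that each cell is crossed by hyperplanes of total weight at most $W/r_0^{1-1/d}$. By averaging, some cell contains $\Omega(n_0/r_0^d)$ active points; recursing a constant number of times inside such a cell produces a simplex $\sigma_i$ containing between $n/(2r)$ and $n/r$ active points whose weighted crossing number is at most $W/r^{1-1/d}$, as required. Removing the enclosed points and performing the reweighting closes the round.

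The crossing bound then follows from a potential-function argument. Since each of the $O(r)$ rounds multiplies the total weight by a factor of $1+O(1/r)$, the final total weight satisfies $W_{\text{final}}\le |H|\cdot \exp\bigl(O(r)\cdot O(1/r)\bigr)=O(|H|)$; on the other hand, if a hyperplane $h$ is crossed by $\kappa$ of the simplices $\sigma_i$ then its final weight is at least $(1+c/r^{1-1/d})^{\kappa}$. Combining these inequalities yields $\kappa=O(r^{1-1/d}\log |H|)$, which is $O(r^{1-1/d}\log r)$ because $|H|$ is polynomial in $r$; the spurious $\log r$ factor is absorbed exactly as in Matou\v{s}ek's original analysis, exploiting the restriction $r\le n^{\epsilon'}$ so that the weight update constant $c$ can be tuned to depend on $\epsilon'$ without violating the potential inequality.

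The main obstacle — and the technically hardest part — is delivering the $O(n\log r)$ time bound. Rebuilding the weighted cutting from scratch in every round is much too expensive; the essential idea is to compute the cutting only once at the start (which costs $O(r^d)=O(n^{d\epsilon'})=o(n)$ for $\epsilon'$ small enough), to maintain the point-in-cell assignments incrementally as active points are deleted, and to rebuild the cutting structure only after a constant fraction of the points have been removed, amortizing the rebuild cost against the deletions. Each of the $n$ points then participates in $O(\log r)$ point-location or bucket-update operations over the whole algorithm, yielding the claimed $O(n\log r)$ total running time. The most delicate part of the implementation is organizing the weight updates so that only $O(\log r)$ bookkeeping is paid per point, and verifying that the cutting used inside the single-class subroutine remains a valid weighted $(1/r_0)$-cutting after each batch of reweightings; both are handled by the standard Matou\v{s}ek tricks of bucketed weights and lazy rebuilding.
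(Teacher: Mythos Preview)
The paper does not give its own proof of this lemma: it is stated with a citation to Matou\v{s}ek~\cite{ref:MatousekEf92} and used as a black box, so there is nothing in the paper to compare against. Your sketch does follow the standard iterative-extraction-plus-multiplicative-weights route from that reference, and the overall architecture (test set, weighted cuttings, potential argument, amortized rebuilds for the time bound) is correct.

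One point you gloss over is how the extra $\log r$ factor in the crossing number actually disappears. The raw reweighting argument you describe genuinely yields only $\kappa=O(r^{1-1/d}\log|H|)$; saying the $\log r$ ``is absorbed \ldots\ exploiting the restriction $r\le n^{\epsilon'}$ so that the weight update constant $c$ can be tuned'' is not quite the mechanism. In Matou\v{s}ek's argument the $\log$ is removed by a bootstrapping step: one first builds a coarse partition with crossing number $O(r^{1-1/d}\log r)$ and then refines it (or, equivalently, seeds the reweighting with the cells of a previously constructed fine partition), so that the averaging in the single-class subroutine already gives a cell of weighted crossing number $O(W/r^{1-1/d})$ rather than $O(W\log r/r^{1-1/d})$. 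The restriction $r\le n^{\epsilon'}$ is what makes the \emph{time} bound $O(n\log r)$ achievable (the cutting and test-set constructions then cost $r^{O(1)}=o(n)$), not what removes the logarithmic loss in the crossing number. If you want a self-contained write-up, that bootstrapping step is the piece you would need to spell out.
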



We build a partition tree $T$ by Lemma~\ref{lem:buildsimpar} recursively, until we obtain a partition of $P$ into subsets of sizes $O(n^{\delta})$ for a small enough constant $\delta>0$ to be fixed later, which form the leaves of $T$. Each inner node $v$ of $T$ corresponds to a subset $P_v$ of $P$ as well as a simplicial partition $\Pi_v$ of $P_v$, which form the children of $v$. At each child $u$ of $v$, we store the cell $\sigma_u$ of $\Pi_v$ containing $P_u$ and also store the cardinality $|P_u|$. The simplicial partition $\Pi_v$ is constructed using Lemma~\ref{lem:buildsimpar} with parameter $r=n^{(1-\delta)/k}$ for a constant integer $k$ to be fixed later, i.e., every internal node of $T$ has $O(r)$ children.
If we recurse $k$ times, i.e., the depth of $T$ is $k$, then the subset size of each leaf of $T$ is $O(n^{\delta})$. Hence, the number of leaves of $T$ is $O(r^k)$.
Next, for each leaf $v$ of $T$, we construct the data structure of Lemma~\ref{lem:counting} on $P_v$, denoted by $\calD_v$, in $O(|P_v|^{d^2+2})$ time. Since $|P_v|=O(n^{\delta})$, we can make $\delta$ small enough so that the total time of Lemma~\ref{lem:counting} on all leaves of $T$ is $O(n^{1+\epsilon})$.
This finishes the preprocessing, which takes $O(n^{1+\epsilon})$ time. The space of the data structure is $O(n)$ since the depth of $T$ is $O(1)$.

Given a query simplex $\sigma$, starting from the root of $T$, for each node $v$ of $T$, we check whether $\sigma$ contains the cell $\sigma_v$. If yes, then we add $|P_v|$ to the total count. Otherwise, if the boundary of $\sigma$ crosses $\sigma_v$, then we proceed to the children of $v$.  In this way, we reach a set $V$ of leaves $v$ of $T$ whose cells $\sigma_v$ are crossed by the bounding hyperplanes of $\sigma$.
Since the number of leaves of $T$ is $O(r^k)$ and the depth of $T$ is $k$, which is a constant, the size of $V$ is $O(r^{k\cdot (1-1/d)})$.
If $Q(m)$ is the query time for a subset of size $m$, then we have the following recurrence
$$Q(m)=O(r)+O(r^{1-1/d})\cdot Q(m/r),$$
where $r=n^{(1-\delta)/k}$ with $n$ as the global input size.

Starting with $m=n$ and recursing $k$ times (using the same $r$) gives us
$$Q(n)=O(r^{(k-1)\cdot (1-1/d)+1})+O(r^{k\cdot (1-1/d)})\cdot Q(n/r^k).$$
Using  $r=n^{(1-\delta)/k}$, by setting $k$ to a constant integer larger than $(1/\delta -1)/(d-1)$, we obtain
\begin{equation}\label{equ:counting}
    Q(n)=O(n^{1-1/d-\delta'})+O(r^{k\cdot (1-1/d)})\cdot Q(n/r^k),
\end{equation}
for another small constant $\delta'>0$.

Finally, for each leaf node $v\in V$ (i.e., those subproblems $Q(n/r^k)$ in \eqref{equ:counting}), we use the data structure $\calD_v$ to compute the number of points of $P_v$ in $\sigma$, in $O(|P_v|^{1-1/d}/\log^{\Omega(1)}|P_v|)$ time by Lemma~\ref{lem:counting}, which is $O((n/r^k)^{1-1/d}/\log^{\Omega(1)}n)$ as $|P_v|=O(n/r^k)$. Plugging this into \eqref{equ:counting} gives us $Q(n)=O(n^{1-1/d}/\log^{\Omega(1)} n)$.
We thus conclude as follows.

\begin{theorem}\label{theo:counting}
Given a set of $n$ points in $\bbR^d$, there is a data structure of $O(n)$ space that can compute the number of points in any query simplex in $O(n^{1-1/d}/\log^{\Omega(1)} n)$ time. The data structure can be built in $O(n^{1+\epsilon})$ time for any $\epsilon>0$.
\end{theorem}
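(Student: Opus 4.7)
The plan is to split the preprocessing into a shallow upper partition tree built with Matou\v{s}ek's classical simplicial partition (Lemma~\ref{lem:buildsimpar}), which is cheap to construct, together with the data structure of Lemma~\ref{lem:counting} applied only at its leaves, which are small. Concretely, I would fix a small constant $\delta>0$ and a constant integer $k>(1/\delta-1)/(d-1)$, set $r=n^{(1-\delta)/k}$, and recursively apply Lemma~\ref{lem:buildsimpar} to build a $k$-level upper partition tree $T$ on $P$: every internal node $v$ has a point subset $P_v$ that is split into $O(r)$ simplicial cells, each containing at most $|P_v|/r$ points, so every leaf $v$ satisfies $|P_v|=O(n/r^k)=O(n^{\delta})$. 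Since $k$ is constant, $T$ is built in $O(n\log n)$ time and $O(n)$ space, and at each child cell I store both the simplex and the subset cardinality.

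Next, at each leaf $v$ of $T$ I would invoke Lemma~\ref{lem:counting} on $P_v$ to obtain a secondary structure $\calD_v$ in time $O(|P_v|^{d^2+2})=O(n^{\delta(d^2+2)})$. Summed over the $O(r^k)=O(n^{1-\delta})$ leaves of $T$, this adds up to $O(n^{1+\delta(d^2+1)})$, which is $O(n^{1+\epsilon})$ once $\delta<\epsilon/(d^2+1)$. The space remains $O(n)$ because $T$ has constant depth and every $\calD_v$ uses $O(|P_v|)$ space.

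For a query simplex $\sigma$, I would traverse $T$ top-down: at each visited node, a child cell wholly contained in $\sigma$ contributes its stored cardinality to a running count, a child cell disjoint from $\sigma$ is skipped, and a child cell crossed by the boundary of $\sigma$ is recursed into. Because the boundary of $\sigma$ consists of $O(1)$ hyperplanes and Lemma~\ref{lem:buildsimpar} guarantees crossing number $O(r^{1-1/d})$ at each level, the set of leaves ultimately reached has size $O(r^{k(1-1/d)})$. At each such leaf $v$ I would finish by invoking $\calD_v$ on $\sigma$ in $O(|P_v|^{1-1/d}/\log^{\Omega(1)}|P_v|)$ time, so the total query time works out to
\[
O(r^{k(1-1/d)})\cdot O\!\left(\frac{(n/r^k)^{1-1/d}}{\log^{\Omega(1)}n}\right)=O\!\left(\frac{n^{1-1/d}}{\log^{\Omega(1)}n}\right).
\]

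The main thing to verify is that the two parameter choices are simultaneously feasible: $\delta$ must be small enough that the leaf-level preprocessing fits within the $O(n^{1+\epsilon})$ budget, while $k$ must be large enough (as a function of $\delta$ and $d$) to guarantee both that leaves really contain only $O(n^{\delta})$ points and that the exponent collapse above genuinely yields $n^{1-1/d}$ rather than something larger. A subtle point is that the polylogarithmic savings of Lemma~\ref{lem:counting} have to survive the reduction, which they do because $|P_v|=n^{\Theta(\delta)}$ gives $\log|P_v|=\Theta(\log n)$ and hence $\log^{\Omega(1)}|P_v|=\log^{\Omega(1)}n$.
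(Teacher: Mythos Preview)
Your proposal is correct and follows essentially the same approach as the paper: a constant-depth upper tree via Matou\v{s}ek's simplicial partitions with $r=n^{(1-\delta)/k}$ and $k>(1/\delta-1)/(d-1)$, Lemma~\ref{lem:counting} at the $O(n^{\delta})$-sized leaves, and the same query analysis. The only point you leave implicit is the internal-node traversal cost $O(r^{(k-1)(1-1/d)+1})$, but your stated choice of $k$ is precisely what makes this term $O(n^{1-1/d-\delta'})$, as the paper also verifies.
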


\section{Simplex range stabbing and segment intersection searching}
\label{sec:simstab}

Given a set $S$ of $n$ simplices in $\bbR^d$, the problem is to construct a data structure to compute the number of simplices that contain a query point (we also say that those simplices are {\em stabbed} by the query point).
In their data structure, Chan and Zheng~\cite{ref:ChanSi23} utilized a randomized  hierarchical partition. We follow their approach and instead use our deterministic hierarchical partition in Theorem~\ref{theo:partition}. Extra effort needs to be taken because the degree of the partition tree in \cite{ref:ChanOp12} is $O(1)$ while ours is logarithmic. In the following, we first briefly review Chan and Zheng's approach~\cite{ref:ChanSi23} and then discuss how to make changes.

\paragraph{A review of Chan and Zheng's randomized approach~\cite{ref:ChanSi23}.}
Each simplex is bounded by $d+1$ hyperplanes. A point $p$ stabs a simplex $s$ if $p$ is in the ``correct'' side of every bounding hyperplane of $s$ (to simplify the discussion, we assume these are lower halfplanes). In the dual space, this is equivalent to having the dual hyperplane of $p$ above the dual point of each bounding hyperplane of $s$. Therefore, we have the following problem in the dual space. Given a set $S^*$ of $n$ $(d+1)$-tuples of points in $\bbR^d$, we wish to construct a data structure to compute the number of tuples whose points are all below a query hyperplane $p^*$.
We can solve the problem using a multi-level data structure as follows.

Let $P$ be the set of the $(d+1)$-th points of all tuples of $S^*$.
Apply the simplicial partition of Lemma~\ref{lem:buildsimpar} to $P$ to obtain a partition $\Pi=\{(P_i,\sigma_i)\}$, with a parameter $r$ to be fixed later. Given a query hyperplane $p^*$,
for every cell $\sigma_i$ of $\Pi$ that is crossed by $p^*$, we recurse on the subset of tuples whose $(d+1)$-th points are in $P_i$ (i.e., apply Lemma~\ref{lem:buildsimpar} on $P_i$ recursively). For each cell $\sigma_i$ completely below $p^*$, we recurse on the subset of tuples whose $(d+1)$-th points are in $P_i$ but as a level-$d$ problem (the original problem is a level-$(d+1)$ problem; the recurrence stops at a level-$0$ problem in which we only need to return the size of the subset).
As analyzed in~\cite{ref:ChanOp12}, choosing $r=n^{\epsilon}$ with $n$ as the global input size (i.e., value $r$ is fixed for all levels of the recursion and this makes the recursion depth $O(1)$) can obtain an $O(n)$ space data structure with $O(n^{1-1/d+\epsilon})$ query time, for any $\epsilon>0$. The preprocessing time is $O(n\log n)$. We summarize this result in the lemma below (note that the result is deterministic).

\begin{lemma}\label{lem:basicstab}{\em (\cite{ref:ChanSi23})}
Given a set of $n$ simplices in $\bbR^d$, one can build a data structure of $O(n)$ space in $O(n\log n)$ time so that the number of simplices containing a query point can be computed in $O(n^{1-1/d+\epsilon})$ time, for any $\epsilon>0$.
\end{lemma}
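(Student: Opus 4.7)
The plan is to follow the standard multi-level partition tree template sketched in the paragraphs preceding the lemma statement, and to verify that plugging in Matou\v{s}ek's simplicial partition (Lemma~\ref{lem:buildsimpar}) gives the claimed bounds. First I would pass to the dual: a point $p$ stabs a simplex $s$ iff the dual hyperplane $p^*$ lies on the correct side of each of the $d+1$ dual points obtained from the bounding hyperplanes of $s$. After fixing the orientations, the problem becomes: preprocess $n$ ordered $(d{+}1)$-tuples in $\bbR^d$ so that, given a query hyperplane $p^*$, we can count how many tuples have all $d+1$ of their points below $p^*$.

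Next I would build a $(d{+}1)$-level tree. At the outer level, take the set $P^{(d+1)}$ of the $(d{+}1)$-th coordinate of each tuple and apply Lemma~\ref{lem:buildsimpar} with $r=n^{\epsilon'}$ for a sufficiently small $\epsilon'>0$, obtaining a simplicial partition $\Pi=\{(P_i,\sigma_i)\}$ of size $O(r)$ with $|P_i|\le n/r$ and crossing number $O(r^{1-1/d})$. For every cell $\sigma_i$ lying entirely below $p^*$, the $(d{+}1)$-th constraint is already satisfied for every tuple indexed by $P_i$, so we recurse on those tuples at \emph{level $d$} (dropping one constraint); for every cell $\sigma_i$ crossed by $p^*$, we recurse at the same level $d{+}1$. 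A level-$0$ query is resolved by simply returning the precomputed subset size. Construct the secondary structures recursively in the same fashion using the same parameter $r=n^{\epsilon'}$; since there are only $d+1=O(1)$ levels and the outer recursion depth is $O(1/\epsilon')$, the total number of recursive instances built is polynomial in $r$ and independent of $n$ in its depth.

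For the space bound I would argue that at every internal node each tuple is stored in the recursive structure associated with a unique child, so each level contributes $O(n)$ total storage; since the level depth and within-level recursion depth are both $O(1)$, the overall storage is $O(n)$. The preprocessing time satisfies a recurrence whose dominant term is the cost of building the simplicial partitions, namely $O(n\log r)$ per node, and since the sum of subset sizes at any fixed depth is $O(n)$ and the total depth is $O(1)$, this amounts to $O(n\log n)$ after setting $r=n^{\epsilon'}$.

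The query time obeys a recurrence of the form
\[
Q(n)=O(r)+O(r^{1-1/d})\cdot Q(n/r)+O(r^{1-1/d})\cdot Q'(n/r),
\]
where $Q'$ denotes the query time at the next-lower level (the ``below'' recursion), with $Q$ at level $0$ being $O(1)$. Expanding the recursion through the $O(1/\epsilon')$ levels of the outer tree and the $O(1)$ levels of constraint-drop gives $Q(n)=O(n^{1-1/d}\cdot r^{O(1)})$; choosing $\epsilon'$ so that $r^{O(1)}=n^{O(\epsilon')}\le n^{\epsilon}$ yields the claimed $O(n^{1-1/d+\epsilon})$. The main obstacle I anticipate is bookkeeping rather than mathematics: pinning down that the crossing-number bound $O(r^{1-1/d})$ applies uniformly to every level despite the multi-level reorganization, and verifying that the constants hidden in the nested recursions do not blow up when both the constraint-level depth $d+1$ and the partition-recursion depth $1/\epsilon'$ are treated simultaneously.
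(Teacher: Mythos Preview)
Your approach is exactly the one the paper sketches in the paragraph preceding the lemma: dualize to counting $(d{+}1)$-tuples all below a query hyperplane, build a $(d{+}1)$-level structure using Lemma~\ref{lem:buildsimpar} with a fixed $r=n^{\epsilon'}$ so that both the constraint-level depth and the within-level recursion depth are $O(1)$, and conclude $O(n)$ space, $O(n\log n)$ preprocessing, and $O(n^{1-1/d+\epsilon})$ query time.

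One slip worth fixing: in your recurrence the coefficient on $Q'(n/r)$ should be $O(r)$, not $O(r^{1-1/d})$, since \emph{all} cells (up to $O(r)$ of them) can lie entirely below $p^*$, not just the crossed ones; compare the paper's recurrence~\eqref{equ:stabquerytime}. This does not change the outcome---with the correct coefficient the expansion still yields $Q(n)=O(n^{1-1/d}\cdot r^{O(1)})=O(n^{1-1/d+\epsilon})$ because the total depth is $O(1)$---but the written recurrence as stated underestimates the cost.
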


To improve the query time, Chan and Zheng~\cite{ref:ChanSi23} reduced the problem to a special case where the first $d$ points in each tuple of $S^*$ all lie in $\bbR^{d-1}$ (equivalently, in the primal setting all but one bounding halfplanes of each input simplex are parallel to the $d$-th axis). As such, we can reduce some subproblems in the recurrence to the $(d-1)$-dimensional problem and then solve these subproblems by Lemma~\ref{lem:basicstab} (where $d$ becomes $d-1$). If $Q(n)$ (resp., $Q'(n)$) is the query time for the problem in $\bbR^d$ (resp., $\bbR^{d-1}$), then we have the following recurrence:
\begin{equation}\label{equ:stabquerytime}
Q(n) = O(r^{1-1/d})\cdot Q(n/r) + O(r)\cdot Q'(n/r).
\end{equation}
By Lemma~\ref{lem:basicstab}, $Q'(n)=O(n^{1-1/(d-1)+\epsilon})$. If we choose $r=n^{\delta}$ for a small constant $\delta>0$, the recursion depth is
$O(\log\log n)$ and thus the space is $O(n\log \log n)$. The query time is $O(n^{1-1/d}\log^{O(1)} n)$ due to a constant-factor blowup.

To further get rid of the logarithmic factor from the query time, a randomized hierarchical partition was used in~\cite{ref:ChanSi23} to replace all recursive simplicial partitions in the preprocessing algorithm for Lemma~\ref{lem:basicstab}. Here to achieve a deterministic result, we instead use our  deterministic hierarchical partition in Theorem~\ref{theo:partition} as follows.

\paragraph{Our deterministic result.}
With $r=n/b^c$ for a sufficiently large constant $c$ and $b=\log^{\rho}n$, we apply Theorem~\ref{theo:partition} to $P$ to obtain a partition tree $T$ consisting of $\Pi_j$, $0\leq j\leq k+1$. By Theorem~\ref{theo:partition}, $\Pi_j$ has $O(b'\cdot b^{j-1})$ cells and each cell has at most $2n/(b'\cdot b^{j-1})$ points of $P$. We define a sequence of numbers $t_0<t_1<\cdots <t_l$ as follows.
For each $i$, $0\leq i\leq l$, define $t_i=b'\cdot b^{j_i-1}$ (and thus $|\Pi_{j_i}|=O(t_i)$), with $j_0=1$ and
$$j_i=\left\lceil\log_b\frac{r^{1-(1-\epsilon)^i}}{b'}\right\rceil+1,\ \  1\leq i\leq l,$$ where $l$ is the smallest integer so that $r/b<b'\cdot b^{j_l-1}$. Hence, $l=O(\log\log r)$, which is $O(\log\log n)$. Note that
$b'\cdot b^{j_l-1}\leq r$ always holds.
By definition, $t_0=b'$ and $t_i$ is equal to $r^{1-(1-\epsilon)^i}$ within a factor of $b$ for $i\geq 1$.

We replace the recursive partitions in the preprocessing algorithm of Lemma~\ref{lem:basicstab} with the following sequence of partitions of $T$: $\Pi_0,\Pi_{j_0},\Pi_{j_1},\ldots,\Pi_{j_l}$. As $l=O(\log\log n)$, the space of the data structure is still $O(n\log \log n)$. Constructing $T$ takes  $O(n^{d^2+2})$ time by Theorem~\ref{theo:partition}.
The cells of the last partition $\Pi_{j_l}$ will be further preprocessed later.
The following lemma and corollary analyze the query time.

\begin{lemma}\label{lem:100}
The query time excluding the time spent on the cells of $\Pi_{j_l}$ is bounded by $O(b\cdot r^{1-1/d}\cdot (n/r)^{1-1/(d-1)+\epsilon})$.
\end{lemma}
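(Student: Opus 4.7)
My plan is to adapt recurrence~\eqref{equ:stabquerytime} by following the hierarchy $\Pi_0,\Pi_{j_0},\ldots,\Pi_{j_l}$ in place of the uniform-fanout recursive simplicial partitions used in Lemma~\ref{lem:basicstab}. Given a query hyperplane $p^*$, I would descend level by level: at level $i-1$ I maintain the set of cells of $\Pi_{j_{i-1}}$ that are crossed by $p^*$, and for each such cell I inspect its descendants in $\Pi_{j_i}$. A descendant crossed by $p^*$ becomes active at level $i$ (we recurse at the $d$-th level of the multi-level structure); a descendant lying completely on the ``correct'' side of $p^*$ is handed to the $(d-1)$-dimensional data structure of Lemma~\ref{lem:basicstab} on the tuples whose $d$-th point lies in that cell. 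The cells of $\Pi_{j_l}$ are deferred to the further processing promised after this lemma and are not charged here.

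\textbf{Key estimates.} Write $\alpha:=1-\tfrac{1}{d-1}+\epsilon$, so that $Q'(m)=O(m^{\alpha})$ by Lemma~\ref{lem:basicstab}. Let $C_i$ denote the number of cells of $\Pi_{j_i}$ crossed by $p^*$. By property~(5) of Theorem~\ref{theo:partition}, $C_i=O(t_i^{1-1/d}+\log^{O(1)}n)$, with the trivial fallback $C_i\le t_i$. Since each cell of $\Pi_{j_{i-1}}$ has $O(b^{\,j_i-j_{i-1}})=O(t_i/t_{i-1})$ descendants in $\Pi_{j_i}$, the number $D_i$ of $(d-1)$-dimensional subproblems created at level $i$ satisfies $D_i=O(C_{i-1}\cdot t_i/t_{i-1})$, and each such subproblem has $O(n/t_i)$ tuples. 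The total cost under consideration is therefore bounded by
\[
\sum_{i=1}^{l} D_i\cdot (n/t_i)^{\alpha} \;=\; O\!\left(\sum_{i=1}^{l} t_i^{1-\alpha}\, t_{i-1}^{-1/d}\, n^{\alpha}\right).
\]

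\textbf{Summing the stages.} Using $t_i\asymp r^{s_i}$ with $s_i=1-(1-\epsilon)^i$ (which the definition of $j_i$ provides up to a factor of $b$) and the identity $s_i=\epsilon+(1-\epsilon)s_{i-1}$, the exponent of $r$ in the $i$-th term becomes $\epsilon(1-\alpha)+s_{i-1}K$, where $K:=(1-\epsilon)(1-\alpha)-\tfrac{1}{d}$. For $\epsilon$ sufficiently small one has $K>0$, and since $s_{i-1}\le 1$ this exponent is at most $\epsilon(1-\alpha)+K = 1-\tfrac{1}{d}-\alpha$. Hence every term is $O(r^{1-1/d}(n/r)^{\alpha})$, and summing over the $l=O(\log\log n)=O(b)$ stages yields the claimed bound.

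\textbf{Main obstacle.} The delicate point is the small-$t_{i-1}$ regime in which the additive $\log^{O(1)}n$ in the crossing-number bound dominates $t_{i-1}^{1-1/d}$; here I would use $C_{i-1}\le t_{i-1}$ instead, which replaces $t_{i-1}^{-1/d}$ by $1$ in the $i$-th estimate, so the exponent becomes $s_i(1-\alpha)$. One must then check that $s_i(1-\alpha)$ still lies below $1-\tfrac{1}{d}-\alpha$, which holds because this regime only occurs when $t_{i-1}$, and hence $s_i$, is small. A further minor point is that the tree-navigation overhead (constant work per inspected cell) is dominated by the $(d-1)$-dimensional query costs, since the total number of inspected cells across all levels is $O(t_l)=O(r)$.
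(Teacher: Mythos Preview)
Your approach is essentially the same as the paper's: both set up the sum $\sum_i (t_i/t_{i-1})\cdot t_{i-1}^{1-1/d}\cdot (n/t_i)^{\alpha}$, substitute $t_i\asymp r^{s_i}$ with $s_i=1-(1-\epsilon)^i$, and bound the exponent of $r$ in each term by $1-1/d-\alpha$. Your handling of the small-$t$ regime (falling back to $C_{0}\le t_{0}$) matches the paper's separate treatment of the $Q_1$ part on $\Pi_{j_0}$.

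There is one point of imprecision worth flagging. You write that ``every term is $O(r^{1-1/d}(n/r)^{\alpha})$'' and then attribute the factor $b$ in the claimed bound to the number of stages $l=O(\log\log n)=O(b)$. But the slack $r^{s_i}\le t_i\le b\cdot r^{s_i}$ that you acknowledge actually injects a factor of $b^{1-\alpha}$ into each term, so the per-term bound is really $O(b^{1-\alpha}\cdot r^{1-1/d}(n/r)^{\alpha})$. The paper handles this differently: it carries the $b$-slack explicitly and then shows that the residual sum $\sum_i r^{-(1-\epsilon)^{i}\cdot(1/(d-1)-1/d-\Theta(\epsilon))}$ is geometrically convergent (consecutive ratio at least $b^{\Theta(\epsilon)}$), hence $O(1)$; the $b$ in the final bound therefore comes entirely from the slack, not from the number of stages. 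Your cruder route still lands inside the stated bound because $b^{1-\alpha}\cdot l=O(b)$ (indeed $l=O(\log\log n)\ll b^{\alpha}$ since $\alpha>0$ and $b=\log^{\rho}n$), but the bookkeeping should be made explicit. Also, your justification that navigation overhead is dominated is correct in conclusion but not in argument: the right reason is that the per-cell $(d-1)$-dimensional cost $(n/t_i)^{\alpha}\ge 1$ already absorbs the $O(1)$ navigation per cell, not that the total cell count is $O(r)$ (which by itself would be too large to be absorbed).
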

\begin{proof}
Recall that for each $0\leq i\leq l$, the number of cells of $\Pi_{j_i}$ is $O(t_i)$. By Theorem~\ref{theo:partition}, the number of cells of $\Pi_{j_i}$ crossed by any hyperplane is $O(t_i^{1-1/d})$, the number of points in each cell of $\Pi_{j_i}$ is at most $O(n/t_i)$, and each cell of $\Pi_{j_i}$ contains at most $O(t_{i+1}/t_i)$ cells of $\Pi_{j_i+1}$.

Let $Q$ denote the query time excluding the time we spend on the cells of the last cutting $\Pi_{j_l}$.
For convenience, we further split the time of $Q$ into two portions: Let $Q_1$ be the portion of $Q$ spent on the cells of the partition $\Pi_{j_0}$; let $Q_2=Q-Q_1$. Following the above discussion, the time $Q_2$ is on the order of
\begin{equation}\label{equ:querybound}
\sum_{i=0}^{l-1}\frac{t_{i+1}}{t_i}\cdot t_i^{1-1/d}\cdot \left(\frac{n}{t_{i+1}}\right)^{1-1/(d-1)+\epsilon}
\end{equation}
Let $a=n/r$.
By definition, for each $0\leq i\leq l$, we have
$$r^{1-(1-\epsilon)^i}\leq t_i\leq  b\cdot r^{1-(1-\epsilon)^i}.$$
Using these, we obtain
\begin{equation*}
\begin{split}
& \frac{t_{i+1}}{t_i}\cdot t_i^{1-1/d}\cdot \left(\frac{n}{t_{i+1}}\right)^{1-1/(d-1)+\epsilon} = \frac{t_{i+1}}{t_i^{1/d}}\cdot \left(\frac{r\cdot a}{t_{i+1}}\right)^{1-1/(d-1)+\epsilon} \\
& \leq \frac{b\cdot r^{1-(1-\epsilon)^{i+1}}}{(r^{1-(1-\epsilon)^i})^{1/d}}\cdot \left(\frac{r\cdot a}{r^{1-(1-\epsilon)^{i+1}}}\right)^{1-1/(d-1)+\epsilon} \\
&= \frac{b\cdot r^{1-1/d}}{r^{(1-\epsilon)^{i+1}}\cdot r^{-1/d\cdot (1-\epsilon)^i}}\cdot   \left(\frac{a}{r^{-(1-\epsilon)^{i+1}}}\right)^{1-1/(d-1)+\epsilon}  \\
& = \frac{b\cdot a^{(1-1/(d-1)+\epsilon)}\cdot  r^{1-1/d}}{r^{(1-\epsilon)^{i+1}(1/(d-1)-\epsilon)}\cdot r^{-1/d\cdot (1-\epsilon)^i}} = \frac{b\cdot a^{(1-1/(d-1)+\epsilon)}\cdot  r^{1-1/d}}{r^{(1-\epsilon)^{i}\cdot (1/(d-1)-1/d-\Theta(\epsilon))}} \\
& = \frac{1}{r^{(1-\epsilon)^{i}\cdot (1/(d-1)-1/d-\Theta(\epsilon))}}\cdot b\cdot  a^{(1-1/(d-1)+\epsilon)}\cdot r^{1-1/d}\\
& = \frac{1}{r^{(1-\epsilon)^{i}\cdot (1/(d-1)-1/d-\Theta(\epsilon))}}\cdot
b \cdot \left(\frac{n}{r}\right)^{(1/(d-1)-1/d-\epsilon)}\cdot r^{1-1/d}.
\end{split}
\end{equation*}
Note that $\sum_{i=0}^l\frac{1}{r^{(1-\epsilon)^{i}\cdot (1/(d-1)-1/d-\Theta(\epsilon))}}=O(1)$. Hence, we obtain that \eqref{equ:querybound}, which is $Q_2$, is bounded by $O(b\cdot r^{1-1/d}\cdot (n/r)^{1-1/(d-1)+\epsilon})$.


For $Q_1$, since $j_0=1$ and $\Pi_1$ has $O(b')$ cells, each of which contains $O(n/b')$ points, $Q_1$ is bounded by $O(b'\cdot (n/b')^{1-1/(d-1)+\epsilon})$, which is also bounded by $O(b\cdot r^{1-1/d}\cdot (n/r)^{1-1/(d-1)+\epsilon})$ since $b'\leq b$. To see this, it suffices to show that $b'^{1/(d-1)-\epsilon}\leq b\cdot r^{1/(d-1)-1/d-\epsilon}$, which is apparently true since $b'<b$ and $r\geq 1$.

In summary, we obtain that $Q=Q_1+Q_2=O(b\cdot r^{1-1/d}\cdot (n/r)^{1-1/(d-1)+\epsilon})$. The lemma thus follows.
\end{proof}

\begin{corollary}\label{coro:10}
For a sufficiently large constant $c$, the query time excluding the time spent on the cells of $\Pi_{j_l}$ is bounded by $O(n^{1-1/d}/b)$.
\end{corollary}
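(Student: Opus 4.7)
The plan is to plug the values $r=n/b^c$ and $b=\log^{\rho}n$ into the bound from Lemma~\ref{lem:100} and show that by choosing $c$ large enough, the resulting expression collapses to $O(n^{1-1/d}/b)$. Concretely, with $n/r=b^c$, I would first rewrite
\[
b\cdot r^{1-1/d}\cdot\left(\frac{n}{r}\right)^{1-1/(d-1)+\epsilon}
= b\cdot \frac{n^{1-1/d}}{b^{c(1-1/d)}}\cdot b^{c(1-1/(d-1)+\epsilon)}
= n^{1-1/d}\cdot b^{\,1+c\left(\frac{1}{d}-\frac{1}{d-1}+\epsilon\right)}.
\]

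The next step is to observe that $\tfrac{1}{d}-\tfrac{1}{d-1}=-\tfrac{1}{d(d-1)}$, which is a negative constant for $d\ge 2$. Thus the exponent of $b$ becomes $1-c\bigl(\tfrac{1}{d(d-1)}-\epsilon\bigr)$. Choosing $\epsilon$ smaller than $\tfrac{1}{d(d-1)}$ (which is allowed since $\epsilon$ is an arbitrarily small positive constant independent of $c$) makes the coefficient of $c$ a strictly positive constant, so the exponent tends to $-\infty$ as $c$ grows. Setting $c$ to be any constant exceeding $2/\bigl(\tfrac{1}{d(d-1)}-\epsilon\bigr)$ makes the exponent at most $-1$, giving the bound $O(n^{1-1/d}/b)$ as claimed.

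The argument is essentially a direct calculation once Lemma~\ref{lem:100} is in hand, so I do not expect any significant obstacle. The only subtle point is that the $\epsilon$ in the query time of Lemma~\ref{lem:basicstab} (inherited through Lemma~\ref{lem:100}) must be chosen \emph{before} fixing $c$, so that the relevant combination $\tfrac{1}{d(d-1)}-\epsilon$ remains a positive constant; this is a mild consistency condition on the choice of constants rather than a real difficulty. Once that is settled, the corollary follows.
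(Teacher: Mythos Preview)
Your proposal is correct and takes essentially the same approach as the paper: both reduce the claim to a direct calculation from Lemma~\ref{lem:100}, using $n/r=b^c$ to show that the exponent of $b$ can be pushed below $-1$ by choosing $c$ large enough (the paper phrases this as $b^2=O((n/r)^{1/(d-1)-1/d-\epsilon})$, which is the same condition rearranged). Your explicit handling of the order of choosing $\epsilon$ before $c$ is a welcome clarification that the paper leaves implicit.
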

\begin{proof}
In light of Lemma~\ref{lem:100}, it suffices to show that $b\cdot r^{1-1/d}\cdot (n/r)^{1-1/(d-1)+\epsilon}=O(n^{1-1/d}/b)$, which is equivalent to $b^2\cdot (n/r)^{1-1/(d-1)+\epsilon}=O((n/r)^{1-1/d})$. This in turn is equivalent to $b^2=O((n/r)^{1/(d-1)-1/d-\epsilon})$, which holds for a sufficiently large $c$ since $n/r=b^c$.
\end{proof}

\paragraph{Preprocessing cells of $\boldsymbol{\Pi_{j_l}}$.}
Recall that $\Pi_{j_l}$ has $O(b'\cdot b^{t_l-1})$ cells each containing $2n/(b'\cdot b^{t_l-1})$ points of $P$, and $r/b<b'\cdot b^{t_l-1}\leq r$.
Let $r'=b'\cdot b^{t_l-1}$.
By Corollary~\ref{coro:10}, we obtain the following recurrence on the query time $Q(n)$:
$$Q(n)=O(n^{1-1/d}/b)+O(r'^{1-1/d})\cdot Q_1(n/r'),$$
where $Q_1(n/r')$ is the query time for each subproblem of size $O(n/r')$ since each cell of $\Pi_{j_l}$ contains $O(n/r')$ points of $P$.
As $b=\log^{\rho}n$ and $r/b< r'\leq r$, we can write
\begin{equation}\label{equ:new50}
Q(n)=O(n^{1-1/d}/\log^{\Omega(1)}n)+O((n/n_1)^{1-1/d})\cdot Q_1(n_1),
\end{equation}
with $n_1=\log^{\Theta(1)}n$.

For notational convenience, let $T$ refer to the sequence of partitions $\Pi_0,\Pi_{j_0},\Pi_{j_1},\ldots,\Pi_{j_l}$, and  cells of $\Pi_{j_l}$ form the leaves of $T$.

To solve $Q_1(n_1)$ in \eqref{equ:new50}, we preprocess $P(\Delta)$ for each leaf cell $\Delta\in T$ recursively as above by using different parameters. Specifically, let $\tau>0$ be an arbitrarily small constant to be set later. Setting $r_1=n_1/\log^{\tau}n$, we apply Theorem~\ref{theo:partition} to construct a partition tree $T(\Delta)$ with $b_1=\log^{\rho}n_1$ in $O(n_1^{d^2+2})$ time.
As above (using the new parameters $r_1$ and $b_1$), we only use a subset of partitions of $T(\Delta)$ in our query data structure. For notational convenience, we use $T(\Delta)$ to refer to those partitions. The total time for constructing $T(\Delta)$ for all leaves $\Delta\in T$ is bounded by $O(n^{d^2+2})$ and the total space is still $O(n\log\log n)$.

With the same analysis as Lemma~\ref{lem:100}, we can obtain that the query time on $T(\Delta)$ excluding the time spent on the leaf cells of $T(\Delta)$ is $O(b_1\cdot r_1^{1-1/d}\cdot (n_1/r_1)^{1-1/(d-1)+\epsilon})$. Define $r_1'$ with respect to $r_1$ in the same way as above for $r'$ with respect to $r$, i.e., $T(\Delta)$ has $O(r_1')$ leaf cells and each cell contains $O(n_1/r_1')$ points of $P$. As above, $r_1/b_1<r_1'\leq r_1$ holds.
Consequently, we obtain the following
$$Q_1(n)=O(b_1\cdot r_1^{1-1/d}\cdot (n_1/r_1)^{1-1/(d-1)+\epsilon})+O(r_1'^{1-1/d})\cdot Q_2(n_1/r_1'),$$
where $Q_2(n_1/r_1')$ is the query time for each leaf cell of $T(\Delta)$. Let $t=n_1/r_1'$. We can now write
\begin{equation}\label{equ:new60}
Q_1(n)=O(b_1\cdot r_1^{1-1/d}\cdot (n_1/r_1)^{1-1/(d-1)+\epsilon})+O((n_1/t)^{1-1/d})\cdot Q_2(t).
\end{equation}

Combining \eqref{equ:new50} and \eqref{equ:new60} leads to
\begin{equation*}
    \begin{split}
        Q(n) & = O(n^{1-1/d}/\log^{\Omega(1)}n)+O((n/n_1)^{1-1/d})\cdot Q_1(n_1)\\
             & = O\left(\frac{n^{1-1/d}}{\log^{\Omega(1)}n}\right)+ O\left(\left(\frac{n}{n_1}\right)^{1-1/d}\cdot b_1\cdot r_1^{1-1/d}\cdot \left(\frac{n_1}{r_1}\right)^{1-1/(d-1)+\epsilon}\right) \\
             &  + O\left(\left(\frac{n}{n_1}\right)^{1-1/d}\cdot \left(\frac{n_1}{t}\right)^{1-1/d}\right)\cdot Q_2(t)\\
             & = O\left(\frac{n^{1-1/d}}{\log^{\Omega(1)}n}\right)+ O\left(n^{1-1/d}\cdot \left(\frac{r_1}{n_1}\right)^{1/(d-1)-1/d-\epsilon}\cdot b_1\right) + O\left(\left(\frac{n}{t}\right)^{1-1/d}\right)\cdot Q_2(t)\\
             & = O\left(\frac{n^{1-1/d}}{\log^{\Omega(1)}n}\right)+ O\left(\left(\frac{n^{1-1/d}}{\log^{\tau\cdot (1/(d-1)-1/d-\epsilon)} n}\right)\cdot \log^{\rho}\log n\right) + O\left(\left(\frac{n}{t}\right)^{1-1/d}\right)\cdot Q_2(t)\\
             & = O\left(\frac{n^{1-1/d}}{\log^{\Omega(1)}n}\right)+O\left(\left(\frac{n}{t}\right)^{1-1/d}\right)\cdot Q_2(t).\\
    \end{split}
\end{equation*}
Since $t=n_1/r_1'$ and $r_1/b_1< r_1'\leq r_1$, we have $n/t=n/n_1\cdot r_1'\leq n/n_1\cdot r_1=n/\log^\tau n$. Therefore, we further obtain from the above
\begin{equation}\label{equ:new70}
    \begin{split}
        Q(n) & = O\left(\frac{n^{1-1/d}}{\log^{\Omega(1)}n}\right)+ O\left(\left(\frac{n}{\log^{\tau}n}\right)^{1-1/d}\right)\cdot Q_2(t)\\
        & = O\left(\frac{n^{1-1/d}}{\log^{\Omega(1)}n}\right) + O\left(\frac{n^{1-1/d}}{\log^{\Omega(1)}n}\right)\cdot Q_2(t).
    \end{split}
\end{equation}
Since $t=n_1/r_1'$ and $r_1/b_1< r_1'\leq r_1$, we have $t\leq n_1/r_1\cdot b_1=O(\log^{\tau} n\cdot \log^{\rho}\log n)$. Therefore, we can write $t=O(\log^{\tau'}n)$ for another arbitrarily small constant $\tau'$.

In summary, the above first builds a partition tree $T$ and then builds partition trees $T(\Delta)$ for all leaf cells $\Delta$ of $T$ (using different parameters). For notational convenience, we still use $T$ to refer to the entire tree (i.e., attaching all leaf trees $T(\Delta)$ to $T$), which has $O(n/t)$ leaves, each containing $O(t)$ points. The space is bounded by $O(n\log\log n)$ and the total preprocessing time is $O(n^{d^2+2})$. In Section~\ref{sec:simcountsub}, by using the property that $t$ is very small, we show that after $O(n)$ space and $O(n\log n)$ time preprocessing, each query on any leaf cell of $T$ can be answered in $O(\log t)$ time (i.e., $Q_2(t)=O(\log\log n)$). As such, we obtain $Q(n)=O(n^{1-1/d}/\log^{\Omega(1)} n)$ from \eqref{equ:new70}.
We thus have the following result.

\begin{lemma}\label{lem:simstabcounting}
Given a set of $n$ simplices in $\bbR^d$, one can build a data structure of $O(n\log\log n)$ space so that the number of simplices containing a query point can be computed in $O(n^{1-1/d}/\log^{\Omega(1)} n)$ time. The data structure can be constructed in $O(n^{d^2+2})$ time.
\end{lemma}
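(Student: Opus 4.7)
The plan is to follow Chan and Zheng's multi-level data structure for simplex range stabbing, but to replace every recursive simplicial partition in their scheme by our deterministic hierarchical partition from Theorem~\ref{theo:partition}. As in their reduction, I would dualize: a point $p$ stabs a simplex $s$ iff the dual hyperplane $p^*$ lies above each of the $d+1$ dual points of the bounding hyperplanes of $s$. Building on this, I would construct a $(d+1)$-level nested tree in which, at each level, the current point set is organized by a hierarchical partition; a node whose cell is crossed by the query hyperplane is recursed at the same level, while a node whose cell lies fully on the correct side is handed to the next (lower-dimensional) level. The base of the recursion uses Lemma~\ref{lem:basicstab} to get a $(d-1)$-dimensional stabbing structure with $O(n^{1-1/(d-1)+\epsilon})$ query time.

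Within each level I would not use all partitions $\Pi_j$ of Theorem~\ref{theo:partition}, but only the sparse subsequence $\Pi_0, \Pi_{j_0}, \Pi_{j_1}, \ldots, \Pi_{j_l}$ corresponding to sizes $t_i \approx r^{1-(1-\epsilon)^i}$ with $l=O(\log\log n)$. Since there are $O(1)$ levels and $O(\log\log n)$ sublayers per level, the total space comes to $O(n\log\log n)$. For the query time outside the deepest sublayer, I would invoke exactly the telescoping computation in Lemma~\ref{lem:100} and, by Corollary~\ref{coro:10}, setting the constant $c$ in $r=n/b^c$ large enough collapses the non-leaf cost to $O(n^{1-1/d}/b)=O(n^{1-1/d}/\log^{\Omega(1)}n)$.

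The remaining task is to ensure that the work spent on the leaves of the deepest sublayer $\Pi_{j_l}$ fits inside the same query budget. I would apply the same construction a second time on each such leaf, using smaller parameters $r_1=n_1/\log^{\tau}n$ and $b_1=\log^{\rho}n_1$, to drive the terminal leaf size down to $t=O(\log^{\tau'}n)$ for an arbitrarily small constant $\tau'>0$. Combining the two recurrences as in~\eqref{equ:new50} and~\eqref{equ:new60} yields~\eqref{equ:new70}, so it remains to answer a stabbing-count query on each tiny leaf in $O(\log t)=O(\log\log n)$ time. I would achieve this by the configuration-enumeration technique of Section~\ref{sec:simcountsub}: build an algebraic decision tree enumerating all $2^{O(t^d)}$ possible arrangement configurations of $t$ hyperplanes, precompute a multi-level point-location structure on a representative point set for each configuration, and map each leaf's dualized hyperplane set to the matching configuration. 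Because $t=O(\log^{\tau'}n)$ with $\tau'$ chosen small enough, this precomputation fits in $O(n)$ space and $O(n\log n)$ time, and each leaf query answers in $O(\log t)$ time, giving $Q_2(t)=O(\log\log n)$. Plugging this into~\eqref{equ:new70} produces the claimed $Q(n)=O(n^{1-1/d}/\log^{\Omega(1)}n)$.

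The main obstacle, and the reason this argument is more delicate than Chan and Zheng's, is that our partition tree has $\Theta(\log n)$ degree rather than $O(1)$, so a naive nesting would pick up polylogarithmic blowups per level that swamp the savings. The resolution is the geometric spacing $t_i\approx r^{1-(1-\epsilon)^i}$: it makes the sum~\eqref{equ:querybound} telescope to $r^{1-1/d}(n/r)^{1-1/(d-1)+\epsilon}$, leaving only a single $b$ factor that Corollary~\ref{coro:10} absorbs by choosing $c$ large. The preprocessing bound $O(n^{d^2+2})$ is inherited directly from the cost of Theorem~\ref{theo:partition}, applied once at the top and once per leaf cell; a reduction to $n^{1+\epsilon}$ analogous to Section~\ref{sec:simcountpretime} is not needed here and can be deferred.
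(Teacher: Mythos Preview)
Your proposal is correct and follows essentially the same route as the paper: the dualization, the sparse subsequence $\Pi_{j_i}$ with $t_i\approx r^{1-(1-\epsilon)^i}$, the two-stage recursion on leaf cells, and the configuration-enumeration trick for tiny leaves all match the paper's argument. One small wording issue: you describe using the hierarchical partition ``at each level'' of the $(d+1)$-level structure, but in the paper (as in Chan--Zheng) the hierarchical partition is applied only at the top level on the $(d{+}1)$-th dual points, with cells lying fully on the correct side handed to the $(d{-}1)$-dimensional problem solved entirely by Lemma~\ref{lem:basicstab}; your subsequent invocation of Lemma~\ref{lem:100}, Corollary~\ref{coro:10}, and \eqref{equ:new50}--\eqref{equ:new70} shows you are in fact doing exactly this, so the discrepancy is purely descriptive.
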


We will further reduce the preprocessing time to $O(n^{1+\epsilon})$ in Section~\ref{sec:simstabpretime}.

\subsection{Solving the subproblems}
\label{sec:substabcount}

Let $A$ be a set of $t$ simplices in $\bbR^d$. We first build a basic data structure of $O(t^d)$ space in $O(t^{d+1})$ time that can answer each simplex stabbing counting query in $O(\log t)$ time.

Let $H$ be the set of the bounding hyperplanes of all simplices of $A$. Then, $|H|=(d+1)t$. We construct the arrangement $\calA$ of $H$ in $O(t^d)$ time and space~\cite{ref:EdelsbrunnerCo86}. For each face $f$ of $\calA$, we compute the number of simplices of $S$ that contain $f$, e.g., by checking every simplex of $A$; we associate that number with $f$, denoted by $N_f$. We also build a point location data structure on $\calA$~\cite{ref:ChazelleCu93}, which can be done in $O(t^d)$ time and space and support $O(\log t)$ time queries. This finishes our preprocessing, which takes $O(t^{d+1})$ time and uses $O(t^d)$ space.

Given a query point $p$, using the point location data structure we can find the face $f$ of $\calA$ that contains $p$ in $O(\log t)$ time, and then we return $N_f$ as the answer. The query time is $O(\log t)$.


Next, we show that after $O(n\log n)$ time and $O(n)$ space preprorcessing we can solve each subproblem $Q(t)$ in \eqref{equ:new70} in $O(\log t)$ time.

We build an algebraic decision tree $T_D$ for the arrangement construction algorithm~\cite{ref:EdelsbrunnerCo86} for a set of $(d+1)\cdot t$ hyperplanes. The height of $T_D$ is $O(t^d)$ and $T_D$ has $2^{O(t^d)}$ leaves. Each leaf of $T_D$ corresponds to a configuration of a set of $t$ simplices in the following sense (similar to those discussed in Section~\ref{sec:simcountsub}). Let $A$ and $B$ be two sets of $t$ simplices each. Let $H_A$ (resp., $H_B$) be the set of bounding hyperplanes of all simplices of $A$ (resp., $B$); we assign indices to hyperplanes of $H_A$ (resp., $H_B$) in such a way that indices of bounding hyperplanes from the same simplex are consecutive. Let $\calA_A$ and $\calA_B$ be the arrangements of $H_A$ and $H_B$, respectively. If the configurations of $H_A$ and $H_B$ are both the same as a leaf $v$ of $T_D$, then there is a one-to-one correspondence between faces of $\calA_A$ and $\calA_B$ such that
if a face $f$ of $\calA_A$ corresponds to a face $f'$ of $\calA_B$, then the $i$-th simplex of $A$ contains $f$ if and only if the $i$-th simplex of $B$ contains $f'$.

In light of the above observation, we do the following preprocessing.
For each leaf $v$ of $T_D$, let $A_v$ be a set of $t$ simplices with the same configuration as $v$. We construct the above basic data structure on $A_v$, denoted by $\calD_v$. Doing this for all leaves of $T_D$ takes $t^{d+1}\cdot 2^{O(t^d)}$ time and space, which is bounded by $O(n)$ as $t=\log^{\tau'} n$ for a small enough $\tau'$.

Recall that each subproblem $Q(t)$ in \eqref{equ:new70} is for the subset of tuples of $S^*$ whose $(d+1)$-th point is in a cell of $T$. For each cell $\sigma$ of $T$, let $S^*(\sigma)$ denote the corresponding subset of $S^*$ and let $S(\sigma)$ denote the subset of simplices of $S$ dual to the tuples of $S^*(\sigma)$. Recall that $\sigma$ contains at most $t$ points of $P$ (which is the set of $(d+1)$-th points of all tuples of $S^*$). Hence, $|S(\sigma)|\leq t$ (if $|S(\sigma)|<t$, we can add $t-|S(\sigma)|$ dummy simplices to $S(\sigma)$). Following the decision tree $T_D$, we determine in $O(t^d)$ time the leaf of $T_D$ whose configuration is the same as that of $S(\sigma)$; let $v_{\sigma}$ denote the leaf. The total time for determining $v_{\sigma}$ for all cells $\sigma$ of $T$ is $O(n/t\cdot t^d)=O(n\log n)$ time.

Given a query point $p$, suppose we want to compute the number of simplices of $S(\sigma)$ containing $p$ for a cell $\sigma$ of $T$. We use the data structure $\calD_{v_{\sigma}}$ and apply the query algorithm on $p$, but whenever the algorithm attempts to use the $i$-th simplex of $A_{v_{\sigma}}$ for computation we use the $i$-th simplex of $S(\sigma)$ instead. The query time is thus $O(\log t)$.


\subsection{Reducing the preprocessing time}
\label{sec:simstabpretime}

We now reduce the preprocessing time of Lemma~\ref{lem:simstabcounting} to $O(n^{1+\epsilon})$ using an idea similar to that for simplex range counting in Section~\ref{sec:simcountsub}. More specifically, we apply simplicial partitions of Matou\v{s}ek~\cite{ref:MatousekEf92} recursively but for only $O(1)$ recursive steps, so that each leaf has $O(n^{\delta})$ points of $P$ and then construct our data structure in Lemma~\ref{lem:simstabcounting} on each leaf (we make $\delta$ small enough so that the total preprocessing time on all leaves is bounded by $O(n^{1+\epsilon})$). The details are given below.

We compute a partition tree $T$ by constructing the simplicial partitions of Lemma~\ref{lem:buildsimpar} on $P$ recursively, until we obtain a partition of $P$ into subsets of sizes $O(n^{\delta})$ for a small enough constant $\delta>0$ to be fixed later, which form the leaves of $T$. Each inner node $v$ of $T$ corresponds to a subset $P_v$ of $P$ as well as a simplicial partition $\Pi_v$ of $P_v$, which form the children of $v$. We let $S^*_v$ denote the subset of tuples of $S^*$ whose $(d+1)$-th points are in $P_v$; let $S_v$ denote the subset of simplices of $S$ dual to the tuples of $S^*_v$.
At each child $u$ of $v$, we store the cell $\sigma_u$ of $\Pi_v$ containing $P_u$ and preprocess the corresponding tuple subset $S^*_u$ of $P_u$ using the algorithm of Lemma~\ref{lem:basicstab} for the $(d-1)$-dimensional problem. The simplicial partition $\Pi_v$ is constructed using Lemma~\ref{lem:buildsimpar} with parameter $r=n^{(1-\delta)/k}$ for a constant integer $k$ to be fixed later, i.e., each internal node of $T$ has $O(r)$ children.
If we recurse $k$ times, i.e., the depth of $T$ is $k$, then subset size of each leaf of $T$ is $O(n^{\delta})$. Hence, the number of leaves of $T$ is $O(r^k)$.
Next, for each leaf $v$ of $T$, we construct the data structure of Lemma~\ref{lem:simstabcounting} for $P_v$, denoted by $\calD_v$, in $O(|P_v|^{d^2+2})$ time. Since $|P_v|=O(n^{\delta})$, we can make $\delta$ small enough so that the total time of Lemma~\ref{lem:simstabcounting} on all leaves of $T$ is $O(n^{1+\epsilon})$.
This finishes the preprocessing, which takes $O(n^{1+\epsilon})$ time. The space of the data structure is $O(n)$ as $T$ has $O(1)$ levels.

Given a query point $p$, the query algorithm has two stages. In the first stage, we follow the same query algorithm as before using the partitions of $T$. We still have the recurrence \eqref{equ:stabquerytime} for the query time. The algorithm will find the set $V$ of leaves $v$ of $T$ whose cells $\sigma_v$ are crossed by the dual hyperplane $p^*$ of $p$.
Because the number of leaves of $T$ is $O(r^k)$ and the depth of $T$ is $k$, which is a constant, the size of $V$ is $O(r^{k\cdot (1-1/d)})$. If $Q(m)$ is the query time for a subset of size $m$, then we have the following recurrence $$Q(m)=O(r^{1-1/d})\cdot Q(m/r)+ O(r)\cdot Q'(m/r),$$
where $r=n^{(1-\delta)/k}$ and $Q'(m)=O(m^{1-1/(d-1)+\epsilon'})$ (here we use $\epsilon'$ instead for differentiation from the above $\epsilon$).

Starting with $m=n$ and recusing $k$ times gives us
$$Q(n)=O(r^{(k-1)\cdot (1-1/d)+1}\cdot (\frac{n}{r^k})^{1-1/(d-1)+\epsilon'})+O(r^{k\cdot (1-1/d)})\cdot Q(n/r^k).$$
Using  $r=n^{(1-\delta)/k}$ and setting $k$ to a constant integer larger than $(1/\delta -1)/(1/(d-1)-d\epsilon')$, we can derive
\begin{equation}\label{equ:stabbing}
    Q(n)=O(n^{(1-1/d)-\delta'})+O(r^{k\cdot (1-1/d)})\cdot Q(n/r^k),
\end{equation}
for another small constant $\delta'>0$.

Finally, for each leaf node $v\in V$ (i.e., those subproblems $Q(n/r^k)$ in the above recurrence~\eqref{equ:stabbing}), we use the data structure $\calD_v$ to solve the subproblem (i.e., computing the number of simplices of $S_v$ containing $p$), in $O(|P_v|^{1-1/d}/\log^{\Omega(1)}|P_v|)$ time by Lemma~\ref{lem:simstabcounting}, which is $O((n/r^k)^{1-1/d}/\log^{\Omega(1)} n)$ as $|P_v|=O(n/r^k)$. Plugging this into the above recurrence~\eqref{equ:stabbing} gives us $Q(n)=O(n^{1-1/d}/\log^{\Omega(1)} n)$.

\begin{theorem}\label{theo:simstabcounting}
Given a set of $n$ simplices in $\bbR^d$, one can build a data structure of $O(n\log\log n)$ space so that the number of simplices containing a query point can be computed in $O(n^{1-1/d}/\log^{\Omega} n)$ time. The data structure can be constructed in $O(n^{1+\epsilon})$ time for any $\epsilon>0$.
\end{theorem}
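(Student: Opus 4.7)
The plan is to mirror the two-phase construction used for simplex range counting in Section~\ref{sec:simcountpretime}, replacing the inner counting structure with the stabbing data structure of Lemma~\ref{lem:simstabcounting}. Concretely, I would build an ``upper'' partition tree $T$ of constant depth $k$ on the point set $P$ (the $(d+1)$-th points of the dualized tuples $S^*$) via repeated applications of Matou\v{s}ek's simplicial partition (Lemma~\ref{lem:buildsimpar}) with branching parameter $r = n^{(1-\delta)/k}$ for a small constant $\delta>0$, so that each leaf of $T$ holds $O(n^\delta)$ points of $P$. At each internal child $u$ I would attach the $(d-1)$-dimensional stabbing substructure of Lemma~\ref{lem:basicstab} on the corresponding tuple subset $S^*_u$, and at each leaf $v$ of $T$ I would build the data structure of Lemma~\ref{lem:simstabcounting} on the subset $S_v$ of simplices dual to the tuples whose $(d+1)$-th point lies in $P_v$.

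For preprocessing, the simplicial partition at an internal node $v$ together with its attached substructures costs $O(|P_v|\log|P_v|)$, so the entire upper tree (with $O(1)$ levels) costs $O(n\log n)$. At the leaves, invoking Lemma~\ref{lem:simstabcounting} costs $O(|P_v|^{d^2+2})$, and summing over $O(r^k)=O(n^{1-\delta})$ leaves gives $O(n^{1-\delta+\delta(d^2+2)})$, which is $O(n^{1+\epsilon})$ once $\delta$ is chosen small enough relative to $\epsilon$. The space stays $O(n\log\log n)$ because the upper tree has constant depth and each leaf's inner structure uses $O(|P_v|\log\log n)$ words.

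For the query, I would traverse the upper tree following the standard multilevel recurrence~\eqref{equ:stabquerytime}: at each cell crossed by the dual hyperplane $p^*$, recurse on the full $d$-dimensional structure, and at each cell fully below $p^*$, recurse on the attached $(d-1)$-dimensional substructure. Unwinding $k$ times and using $Q'(m)=O(m^{1-1/(d-1)+\epsilon'})$ for the $(d-1)$-dimensional base case yields, after choosing $k$ larger than $(1/\delta-1)/(1/(d-1)-d\epsilon')$, a bound of the form $Q(n)=O(n^{1-1/d-\delta'})+O(r^{k(1-1/d)})\cdot Q(n/r^k)$ for some $\delta'>0$. Each of the $O(r^{k(1-1/d)})$ reached leaves is then handled by Lemma~\ref{lem:simstabcounting} in $O((n/r^k)^{1-1/d}/\log^{\Omega(1)}(n/r^k))$ time, which equals $O((n/r^k)^{1-1/d}/\log^{\Omega(1)} n)$ since $|P_v|=\Theta(n^\delta)$ is polynomially related to $n$. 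Summing yields total query time $O(n^{1-1/d}/\log^{\Omega(1)} n)$.

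The main obstacle is making sure the polylogarithmic savings of the inner structure survive both the multiplicative $O(r^{k(1-1/d)})$ overhead from the upper tree and the additive $O(n^{1-1/d-\delta'})$ overhead from the $(d-1)$-dimensional substructures. The first is the easier one, since $\log|P_v|=\Theta(\log n)$ makes the $1/\log^{\Omega(1)} n$ factor location-independent. The second is more delicate: $\delta'$ must be made strictly positive by balancing the three parameters $k$, $\delta$, and $\epsilon'$ simultaneously, and the exponent inequality $k>(1/\delta-1)/(1/(d-1)-d\epsilon')$ is precisely the condition that keeps the $(d-1)$-dimensional overhead under $O(n^{1-1/d-\delta'})$. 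The remaining work is routine bookkeeping.
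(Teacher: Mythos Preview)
Your proposal is correct and follows essentially the same approach as the paper: an upper partition tree of constant depth $k$ built from Matou\v{s}ek's simplicial partitions with branching $r=n^{(1-\delta)/k}$, the $(d-1)$-dimensional substructures of Lemma~\ref{lem:basicstab} attached at each child, and Lemma~\ref{lem:simstabcounting} applied at the leaves, with the same recurrence analysis and the same constraint $k>(1/\delta-1)/(1/(d-1)-d\epsilon')$ to force the additive term below $n^{1-1/d-\delta'}$.
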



\subsection{Other related problems}
As studied in~\cite{ref:ChanSi23}, some related problems can be solved by similar techniques. For each of these problems, we basically follow the same high-level algorithmic framework as \cite{ref:ChanSi23} but use our deterministic partition tree instead of the randomized one in \cite{ref:ChanOp12}; this is very similar to the above simplex stabbing counting problem, so we will omit these details. For each problem, however, we still need to come up with a method to answer queries in $O(\log t)$ time for subproblems of tiny sizes $t=O(\log^{O(1)}\log n)$ and we briefly discuss this below.

\paragraph{Simplex range stabbing reporting.}
This is the reporting version of the above simplex stabbing counting problem, i.e., report all simplices of $S$ containing the query point.

We can design a basic data structure that is similar to that for the counting problem in Section~\ref{sec:substabcount}. One difference is that for each face $f$ of the arrangement $\calA$, we explicitly store the list of simplices that contain $f$. This increases the space of the data structure to $O(t^{d+1})$, which does not affect the space of the overall algorithm asymptotically. The rest of the algorithm (e.g., building the decision tree and associating each subproblem with a leaf of the decision tree) is the same as before. We thus have the following result.

\begin{theorem}\label{theo:simstabreport}
Given a set of simplices in $\bbR^d$, one can build a data structure of $O(n\log\log n)$ space so that the simplices containing a query point can be reported in $O(n^{1-1/d}/\log^{\Omega(1)} n+k)$ time, where $k$ is the output size. The data structure can be constructed in $O(n^{1+\epsilon})$ time for any $\epsilon>0$.
\end{theorem}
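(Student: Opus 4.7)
The plan is to follow the same algorithmic framework as Theorem~\ref{theo:simstabcounting}, reusing the deterministic hierarchical partition tree and the Chan--Zheng style multi-level reduction on tuples of bounding hyperplanes. The structural preprocessing (the upper simplicial-partition tree from Lemma~\ref{lem:buildsimpar}, the lower hierarchical partitions from Theorem~\ref{theo:partition}, and the recursion over the $d+1$ coordinate points of each input simplex in the dual) is used verbatim; the only modifications are (i) at each arrangement face of the tiny-size subproblems we must store an explicit list of stabbing simplices rather than just a counter, and (ii) the query must concatenate these lists as it descends, paying an additive $k$.

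For the subproblem on a leaf cell of the partition tree $T$ (containing at most $t = \log^{\tau'} n$ input simplices for an arbitrarily small constant $\tau'$), I would modify the basic data structure of Section~\ref{sec:substabcount} as follows. Construct the arrangement $\mathcal{A}$ of the $(d{+}1)t$ bounding hyperplanes in $O(t^d)$ time, build a point-location structure on $\mathcal{A}$, and at each face $f$ store the list $L_f$ of input simplices containing $f$. Computing all $L_f$'s by a single pass that tests each simplex against each face takes $O(t^{d+1})$ time and space. A point-location query locates the face $f$ containing the query point in $O(\log t)$ time and then outputs $L_f$ in $O(|L_f|)$ time.

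The decision-tree enumeration step from Section~\ref{sec:substabcount} is unchanged: for each leaf $v$ of the algebraic decision tree $T_D$ of the arrangement algorithm, pick a canonical simplex set $A_v$ matching $v$'s configuration and build $\mathcal{D}_v$ with face labels stored as lists of indices into the ordering of $A_v$. The aggregate cost over all leaves of $T_D$ is $t^{d+1} \cdot 2^{O(t^d)}$, which remains $O(n)$ in both time and space provided $\tau'$ is chosen small enough that $\tau' d < 1$. Each leaf cell $\sigma$ of $T$ is associated, in $O(t^d)$ time via $T_D$, with the unique leaf $v_\sigma$ of matching configuration; a query on $\sigma$ then consults $\mathcal{D}_{v_\sigma}$, translates the returned indices back to the actual simplices of the subproblem at $\sigma$, and reports them in $O(\log t + k_\sigma)$ time, where $k_\sigma$ counts outputs in $\sigma$.

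Assembling everything as in Theorem~\ref{theo:simstabcounting}, the recurrence for the query time acquires only an additive output term: the work outside the leaves is $O(n^{1-1/d}/\log^{\Omega(1)} n)$ exactly as before, and summing the per-leaf reporting costs gives $\sum_\sigma (\log t + k_\sigma) = O(n^{1-1/d}/\log^{\Omega(1)} n + k)$ because each reported simplex is charged to a single descent. Space remains $O(n\log\log n)$ since the extra $t$ factor in the basic structure is absorbed by choosing $\tau'$ small, and the preprocessing time is $O(n^{1+\epsilon})$ using the same two-stage construction (outer tree via Lemma~\ref{lem:buildsimpar} with $O(1)$ depth, inner data structures via the modified Lemma~\ref{lem:simstabcounting} analogue). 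The main potential obstacle is verifying that widening face labels from counters to explicit lists does not push the tabulated per-configuration storage $t^{d+1}\cdot 2^{O(t^d)}$ past the $O(n)$ budget; this is resolved by the same small-$\tau'$ choice used in the counting proof, so no new technical difficulty arises.
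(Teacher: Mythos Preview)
Your proposal is correct and takes essentially the same approach as the paper: reuse the counting framework of Theorem~\ref{theo:simstabcounting} wholesale, and at each face of the tiny basic data structure store an explicit list of containing simplices rather than a count, which raises the per-configuration cost to $O(t^{d+1})$ but is still absorbed into the $t^{d+1}\cdot 2^{O(t^d)} = O(n)$ budget for small $\tau'$. The paper's own treatment is briefer but identical in substance.
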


\paragraph{Segment intersection counting and reporting.}
Given a set $S$ of $n$ (possibly intersecting) line segments in the plane, the {\em segment intersection counting} problem is to construct a data structure to compute the number of segments of $S$ intersecting a query segment; the reporting problem is to report these segments.

Consider a subset $A$ of $t$ segments. We can have a basic data structure as follows. Let $P$ denote the set of the endpoints of all segments of $A$. Let $H$ be the set of dual lines of $P$. Let $\calA$ be the arrangement of $H$.
Each segment $s\in A$ is dual to a wedge~\cite{ref:deBergCo08} and a line $\ell$ crosses $s$ if and only if its dual point $\ell^*$ is in the dual wedge of $s$. The dual wedge of $s$ is bounded by two lines, which are dual to the endpoints of $A$. Hence, each face of $\calA$ is either completely inside or outside the dual wedge of $s$.
For each face of $f$ of $\calA$, we find the subset $A_f$ of segments of $S$ whose dual wedges contain $f$, e.g., by checking every segment of $A$. Suppose $s_q$ is a query segment and $\ell_q$ is the supporting line of $s_q$; let $\ell^*_q$ be the dual point of $\ell_q$. If $\ell^*_q$ is in a face $f$ of $\calA$, $A_f$ is exactly the subset of segments of $S$ intersecting $\ell_q$, and thus the segments of $A$ intersecting $s_q$ are all in $A_f$. To determine these segments, we further construct an arrangement of $H_f$, where $H_f$ is the set of supporting lines of the segments of $A_f$; let $\calA_f$ denote the arrangement. Let $f_1$ and $f_2$ be the two faces of $\calA_f$ containing the two endpoints of $s_q$, respectively. Then, a segment of $A_f$ intersects $s_q$ if and only its supporting line is between $f_1$ and $f_2$. Therefore, we do the following preprocessing. For each face $f_1$ of $\calA$, we build an array in which each element is indexed by a face $f_2\in \calA_f$ and the element stores the number of lines between $f_1$ and $f_2$ (for the reporting problem, we also store a list of segments of $A_f$ whose supporting lines are between $f_1$ and $f_2$), e.g., by checking every supporting line. The total preprocessing time and space can be easily bounded by $O(t^5)$. After the preprocessing, each segment intersection counting query can be answered in $O(\log t)$ time (there is an additional factor of the output size for a reporting query).

We build an algebraic decision tree $T_D$ for the above preprocessing algorithm (for a set of $t$ segments). The height of $T_D$ is $O(t^5)$ and $T_D$ has $2^{O(t^5)}$ leaves. Each leaf of $T_D$ corresponds to a configuration of the segments. For each leaf, we build a basic data structure as above on a set of $t$ segments with the same configuration as the leaf.
We then associate each subproblem of our original problem with its corresponding leaf of $T_D$ (i.e., the leaf has the same configuration as the subproblem). Given a query for a subproblem, we use the data structure built for the leaf associated with the subproblem to answer the query in $O(\log t)$ time.
The analysis is similar as before and we omit the details.

\begin{theorem}\label{theo:segquery}
Given a set of $n$ segments in the plane, one can build a data structure of $O(n\log\log n)$ space so that the number of segments intersecting a query segment can be computed in $O(\sqrt{n}/\log^{\Omega(1)} n)$ time (these segments can be reported in additional $O(k)$ time, where $k$ is the output size). The data structure can be constructed in $O(n^{1+\epsilon})$ time for any $\epsilon>0$.
\end{theorem}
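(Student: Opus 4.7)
The plan is to assemble the theorem by plugging the already-built basic subproblem data structure (the one indexed by the $2^{O(t^5)}$ leaves of the algebraic decision tree $T_D$) into the same two-tier framework used for simplex range stabbing in Section~\ref{sec:simstab}, and then to shave the preprocessing to $O(n^{1+\epsilon})$ via Matou\v{s}ek's upper partition tree as in Section~\ref{sec:simstabpretime}. I would not redo the combinatorial and recurrence analysis from scratch; I would instead emphasize that the problem structure fits the same template.

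First, I would express segment intersection counting as a multi-level query in the dual: a query segment $s_q$ intersects an input segment $s\in S$ iff (i) the dual point of the supporting line of $s_q$ lies in the dual wedge of $s$, and (ii) the supporting line of $s$ separates the two endpoints of $s_q$, which in the dual is equivalent to the supporting lines associated with $s$ falling between the two faces of a secondary arrangement that contain the two endpoints. This is exactly the decomposition already used in the basic data structure built above, so each query reduces to $O(1)$ stages, each stage being either a point location in an arrangement or a "count between two faces" operation on a second arrangement.

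Second, I would build the top-level tree by applying Theorem~\ref{theo:partition} to $P$, where $P$ is the set of endpoints (or dual points, depending on the level) relevant to the current stage, and I would truncate the sequence of partitions to $\Pi_0,\Pi_{j_0},\ldots,\Pi_{j_l}$ exactly as in Section~\ref{sec:simstab} so that $l=O(\log\log n)$ and the total space stays $O(n\log\log n)$. Repeating once inside each leaf with smaller parameters (matching \eqref{equ:new50}--\eqref{equ:new70}) drives each leaf down to a tiny subproblem of size $t=\log^{\tau'} n$, which is answered in $O(\log t)=O(\log\log n)$ time by the decision-tree-indexed basic data structure. The same recurrence that produced Lemma~\ref{lem:simstabcounting} then produces query time $O(\sqrt{n}/\log^{\Omega(1)}n)$, with an additive $O(k)$ in the reporting version (the basic structure already stores the output lists explicitly at each arrangement face).

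Third, to reduce preprocessing to $O(n^{1+\epsilon})$, I would mimic Section~\ref{sec:simstabpretime} verbatim: build a constant-depth upper tree on $P$ using Lemma~\ref{lem:buildsimpar} with $r=n^{(1-\delta)/k}$, so each upper leaf holds $O(n^\delta)$ segments; then install the Lemma-\ref{lem:simstabcounting}-style structure (with the multilevel adaptation above) at each upper leaf. The upper-tree recurrence $Q(m)=O(r^{1/2})\cdot Q(m/r)+O(r)\cdot Q'(m/r)$, with $Q'(m)=O(m^{1/2+\epsilon'})$ inherited from the reduced-dimension base case (here a 1D problem solvable trivially in $O(\log n)$ per level), yields $Q(n)=O(n^{1/2-\delta'})+O(r^{k/2})Q(n/r^k)$ by the same algebra as \eqref{equ:stabbing}, and plugging in the leaf-level $O((n/r^k)^{1/2}/\log^{\Omega(1)}n)$ bound from Lemma~\ref{lem:simstabcounting} gives the final $O(\sqrt{n}/\log^{\Omega(1)}n)$ query bound.

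The only genuine technical point to watch, which is also the likeliest source of a hidden difficulty, is the handoff between the two dual stages: the "between-two-faces" test in stage (ii) uses a secondary arrangement $\calA_f$ whose size depends on $|A_f|\le t$, so one must verify that this auxiliary structure is already fully absorbed into the basic data structure at the tiny-subproblem level and does not require an extra level of the hierarchical partition tree. This is indeed the case because both stages are reduced to queries on arrangements of size $O(t)$ built once per decision-tree leaf, so the $O(t^5)$ preprocessing and $O(\log t)$ query bounds of the basic structure already cover both stages simultaneously. Once this is observed, the rest of the proof is a direct transcription of the stabbing argument with $d=2$.
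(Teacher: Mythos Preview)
Your proposal is correct and follows essentially the same route as the paper: reuse the Section~\ref{sec:simstab} hierarchical-partition framework (with the two nested applications of Theorem~\ref{theo:partition} and the truncated sequence $\Pi_{j_0},\ldots,\Pi_{j_l}$), plug the $O(t^5)$ basic segment-intersection structure indexed by leaves of $T_D$ at the bottom, and then bolt on the constant-depth Matou\v{s}ek upper tree from Section~\ref{sec:simstabpretime} to cut preprocessing to $O(n^{1+\epsilon})$. One slip: your stated bound $Q'(m)=O(m^{1/2+\epsilon'})$ for the reduced-dimension subproblem would \emph{not} yield $Q(n)=O(n^{1/2-\delta'})$ in the upper-tree recurrence; the correct bound (and the one your own parenthetical ``1D problem solvable trivially in $O(\log n)$'' actually suggests) is $Q'(m)=O(m^{\epsilon'})$, matching the $d=2$ instance of $O(m^{1-1/(d-1)+\epsilon'})$ from \eqref{equ:stabquerytime}.
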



\section{Segment intersection detection}
\label{sec:segdetect}

Let $S$ be a set of $n$ line segments in the plane. The problem is to build a data structure to determine whether a query line intersects any segment of $S$. Clearly, the problem can be solved by Theorem~\ref{theo:segquery}. In this section, we propose a different method that only needs $O(n)$ space while the query time is the same.

Let $P$ be the set of $2n$ endpoints of the segments of $S$.
With $r=4n/b^3$ and $b=\log^\rho n$, we apply Theorem~\ref{theo:partition} to $P$ to obtain a partition tree $T$ consisting of collections $\Pi_i$, $0\leq i\leq k+1$. By Theorem~\ref{theo:partition}, the total number of cells is $O(r)$, each cell in $\Pi_{k+1}$ contains at most $4n/r=\log^{3\rho} n$ points of $P$, and the runtime to construct $T$ is $O(n^{6})$. For each node $v\in T$, let $\Delta(v)$ denote the corresponding cell of $v$, which is a triangle in $\bbR^2$.

The following is a result from the previous work~\cite{ref:WangAl20} for a special case of the problem. We will use the result as a subroutine in our approach.

\begin{lemma}\label{lem:interdetectspecial}{\em (\cite{ref:WangAl20})}
If all segments of $S$ intersect a given line segment, then one can build a data structure of $O(n)$ space in $O(n \log n)$ time so that whether a query line intersects any segment of $S$ can be determined in $O(\log n)$ time.
\end{lemma}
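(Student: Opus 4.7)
The plan is to reduce the problem to point location in a planar subdivision of linear complexity via point-line duality. Under the standard duality, each segment $s_i$ of $S$, with endpoints $u_i,v_i$, maps to a ``double wedge'' $D_i$ bounded by the dual lines $u_i^*$ and $v_i^*$; a query line $\ell$ crosses $s_i$ iff the dual point $\ell^*$ lies in $D_i$. So the question reduces to: is $\ell^*\in\mathcal{D}:=\bigcup_{i=1}^{n}D_i$? Writing $\ell_0$ for the supporting line of the given common segment $s_0$ and setting $p_0:=\ell_0^*$, the hypothesis that every $s_i$ crosses $s_0$ places the two endpoints of each $s_i$ on opposite sides of $\ell_0$; dually, $u_i^*$ passes below $p_0$ while $v_i^*$ passes above $p_0$, so every $D_i$ contains the common point $p_0$.

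The key structural claim to prove is that, because all $n$ double wedges share the point $p_0$, the boundary $\partial\mathcal{D}$ has only $O(n)$ complexity---in contrast with the $\Omega(n^2)$ boundary size possible for an arbitrary union of double wedges. The intuition is that for every $x$ near the $x$-coordinate of $p_0$ we have $u_i^*(x)<v_i^*(x)$ for all $i$, and all $n$ open intervals $(u_i^*(x),v_i^*(x))$ contain the $y$-coordinate of $p_0$; they therefore merge into the single interval $(\min_i u_i^*(x),\max_i v_i^*(x))$, whose lower and upper boundaries are simply the lower envelope of the $u_i^*$'s and the upper envelope of the $v_i^*$'s. A careful splitting at the ``folding'' abscissas $x_i^c$ where $u_i^*$ and $v_i^*$ cross extends this locally valid description to the whole plane, ultimately expressing $\partial\mathcal{D}$ as a constant number of envelope pieces of $n$ lines each, each of which is $x$-monotone with at most $n-1$ vertices.

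With this structural lemma in hand, the algorithm is straightforward: first compute the requisite upper and lower envelopes in $O(n\log n)$ time by the standard line-envelope sweep, yielding a planar subdivision of size $O(n)$ whose unbounded cell corresponds to the complement of $\mathcal{D}$; next, preprocess that subdivision for planar point location in $O(n)$ space and $O(\log n)$ query time using, for example, Kirkpatrick's hierarchy; finally, answer each query line $\ell$ by dualizing to $\ell^*$ in $O(1)$ time and performing a single point-location query. I expect the main obstacle to be the combinatorial step: rigorously showing that $\partial\mathcal{D}$ decomposes into a constant number of linear-complexity envelope pieces under the common-point assumption, and handling the case analysis at the folding points $x_i^c$ where the upper and lower boundaries of $D_i$ swap roles. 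Once this decomposition is established, everything else is off-the-shelf envelope and point-location machinery.
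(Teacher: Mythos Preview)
This lemma is quoted from \cite{ref:WangAl20} and the present paper does not supply a proof, so there is nothing here to compare your argument against directly.

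Your duality reduction is correct: a line $\ell$ meets some $s_i$ iff $\ell^*\in\mathcal{D}=\bigcup_i D_i$, and the hypothesis indeed forces every $D_i$ to contain the common point $p_0=\ell_0^*$. The gap is in the combinatorial claim that $\partial\mathcal{D}$ has size $O(n)$. Your local description --- at $x$ near $x_0$ the cross-section of $\mathcal D$ is the single interval $(\min_i u_i^*(x),\,\max_i v_i^*(x))$ --- is fine, but it does not persist globally: for a general abscissa the intervals $\bigl(\min(u_i^*(x),v_i^*(x)),\ \max(u_i^*(x),v_i^*(x))\bigr)$ need not overlap, so the vertical slice of $\mathcal D$ can have several components. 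Concretely, take $p_0=(0,0)$ and the two regions $|y|<1$ and $|y-x|<1$; both contain $p_0$, yet at $x=10$ their union has two components. Perturbing the bounding lines so they are non-parallel gives the same behaviour with genuine double wedges, and the corresponding primal segments do all cross a common horizontal segment. Hence $\partial\mathcal D$ is \emph{not} simply the lower envelope of the $u_i^*$ together with the upper envelope of the $v_i^*$, and the sentence ``ultimately expressing $\partial\mathcal D$ as a constant number of envelope pieces'' is precisely the statement that still needs proof. Splitting at the apices $x_i^c$ replaces $n$ double wedges by $2n$ ordinary wedges sharing $p_0$, but the union of such wedges is not obviously linear either; two wedge boundaries can meet in two points, so the pseudo-disc argument does not apply out of the box.

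The $O(n)$ bound may well hold, but you need a genuine argument for it rather than the envelope heuristic. One route that exploits the hypothesis more directly works in the primal: a line missing $s_0$ must have all endpoints on one side and hence miss the convex hull of the $2n$ endpoints, while a line hitting $s_0$ at a point $q$ splits the segments, via their sorted intersection points along $s_0$, into a left block and a right block, each of which must lie entirely in one closed half-plane of $\ell$; this reduces the test to a constant number of convex-hull or envelope queries parameterised by the position of $q$. Whatever argument \cite{ref:WangAl20} actually uses, your write-up should isolate the linear-complexity step as the main lemma and prove it, not assert it.
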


We store the segments of $S$ in the partition tree $T$ as follows (the idea is similar to, but slightly different from that in \cite{ref:WangAl20}). For each segment $s\in S$, starting from the root, for each node $v$ whose cell $\Delta(v)$ contains $s$ (which is true initially when $v$ is the root), if $v$ is a leaf, then we store $s$ at $v$ (let $S_v$ denote the set of all such segments stored at $v$). Otherwise, we check every child of $v$. If $v$ has a child $u$ whose cell $\Delta(u)$ contains $s$, then we proceed on $u$. Otherwise, for each child $u$, if $\Delta(u)$ contains an endpoint of $s$, then since $\Delta(u)$ does not contain $s$, $s$ must intersect an edge $e$ of $\Delta(u)$; we store $s$ at $e$ (let $S_e$ denote the set of segments stored at $e$); note that since $s$ has two endpoints, there are two such edges $e$ but it suffices to store $s$ in one such edge. This finishes the algorithm for storing $s$, which takes $O(b\log n)$ time. Because $s$ is stored at either a leaf or a cell edge, the total space for storing all segments is $O(n)$. The total time is $O(nb\log n)$.

Next, for each edge $e$ of each cell of $T$, since all segments of $S_e$ intersect e, we preprocess $S_e$ using Lemma~\ref{lem:interdetectspecial}. Doing this for all cell edges $e$ of $T$ takes $O(n \log n)$ time and $O(n)$ space. For those segments stored in $S_v$ for all leaves $v$, we will preprocess them into a data structure $\calD_v$. Before discussing $\calD_v$, we describe the query algorithm and analyze the time complexity.

Given a query line $\ell$,
starting from the root of $T$, for each node $v$, assume that $\ell$ intersects the boundary of $\Delta(v)$, which is true initially when $v$ is the root. If $v$ is a leaf, then we call the data structure $\calD_v$ to check whether $\ell$ intersects a segment of $S_v$. Otherwise, for each child $u$ of $v$, for each edge $e$ of $\Delta(u)$, we apply the query algorithm of Lemma~\ref{lem:interdetectspecial} to check whether $\ell$ intersects a segment of $S_e$; further, if $\ell$ crosses $\Delta(u)$, then we proceed on $u$.

\begin{lemma}
The query algorithm works correctly.
\end{lemma}
\begin{proof}
If the query algorithm detects an intersection, then it is obviously true that $\ell$ intersects a segment of $S$. On the other hand, suppose $\ell$ intersects a segment $s$, say, at a point $p$. We argue that the query algorithm must detect an intersection. Indeed, according to our query algorithm, all nodes $u$ of $T$ whose cells $\Delta(u)$ are crossed by $\ell$ will be processed.
If $s$ is stored at a leaf $v$, then $s$ is contained in $\Delta(v)$. Since $\ell$ intersects $s$, $\ell$ must cross $\Delta(v)$, and thus $v$ must be processed and the data structure $\calD_v$ will detect an intersection between $\ell$ and $S_v$.

If $s$ is not stored at a leaf, then there must exist an internal node $v$ such that $s\in \Delta(v)$ and $s$ is not in $\Delta(u)$ for any child $u$ of $v$. According to our preprocessing algorithm, $s$ must be stored in $S_{e'}$ for an edge $e'$ of some cell $\Delta(u)$ of a child $u$ of $v$. Since $p\in s\subseteq \Delta(v)$ and $p\in \ell$, $\ell$ must cross $\Delta(v)$. Therefore, our query algorithm will process $v$ by applying the query algorithm of Lemma~\ref{lem:interdetectspecial} on $S_e$ for every edge $e$ of every child cell of $v$. When it is applied to $S_{e'}$, the intersection will be detected.
\end{proof}

We now analyze the query time. Recall that $T$ has $O(r)$ leaves. By Theorem~\ref{theo:partition}, the total number internal nodes of $T$ whose cell boundaries are crossed by $\ell$ is $O(\sqrt{r/b})$, and for each such node, we need to call the query algorithm of Lemma~\ref{lem:interdetectspecial} $O(b)$ times and each call takes $O(\log n)$ time. As such, the query time other than the time spent on calling $\calD_v$ for those leaves $v$ whose cell boundaries are crossed by $\ell$ (let $V$ be the set of all such cells) is
$O(\sqrt{r/b}\cdot b\cdot \log n)=O(\sqrt{r b}\cdot \log n)$.
By Theorem~\ref{theo:partition}, $|V|=O(\sqrt{r})=O(\sqrt{n/\log^{3\rho}n})$ and $|S_v|\leq 4n/r=\log^{3\rho}n$ for each leaf $v$ of $T$.

Let $Q(n)$ be the query time. Following the above analysis, we obtain the following
\begin{equation}\label{equ:new30}
Q(n)=O(\sqrt{rb}\cdot \log n) + O(\sqrt{r})\cdot Q_1(n/r),
\end{equation}
where $Q_1(\cdot)$ is the query time for each leaf $v\in V$, whose cell contains $O(n/r)$ points of $P$.

To solve $Q_1(n/r)$, we process $S_v$ for each leaf cell $v$ of $T$ as above by using different parameters. Specifically, let $n_1$ be the number of endpoints of $S_v$; hence $n_1=O(n/r)=O(\log^{3\rho}n)$. Let $\tau>0$ be an arbitrarily small constant to be set later. Setting $r_1=n_1/\log^{\tau}n$, we apply Theorem~\ref{theo:partition} to construct a partition tree $T(v)$ with $b_1=\log^{\rho}n_1$ in $O(n_1^{6})$ time. The total time for constructing $T(\Delta)$ for all leaf cells $\Delta$ of $T$ is thus bounded by $O(n^6)$.  Using $T(v)$ to handle queries on $S_v$ and following the above analysis, we obtain
\begin{equation}\label{equ:new40}
    Q(n_1)=O(\sqrt{r_1b_1}\cdot \log n_1) + O(\sqrt{r_1})\cdot Q_2(n_1/r_1),
\end{equation}
where $Q_2(\cdot)$ is the query time for each leaf cell of $T(v)$, which contains $O(n_1/r_1)=O(\log^{\tau} n)$ points of $P$.

Combining \eqref{equ:new30} and \eqref{equ:new40} leads to
\begin{equation}\label{equ:interdect}
    \begin{split}
        Q(n) & = O(\sqrt{rb}\cdot \log n) + O(\sqrt{r})\cdot Q_1(n/r)\\
             & = O(\sqrt{rb}\cdot \log n) + O(\sqrt{r}\cdot \sqrt{r_1b_1}\cdot \log n_1) + O(\sqrt{r}\cdot \sqrt{r_1})\cdot Q_2(n_1/r_1)\\
             & = O(\sqrt{rb}\cdot \log n) + O\left(\sqrt{r}\cdot \sqrt{\frac{nb_1}{r\log^{\tau}n}}\cdot \log n_1\right) + O\left(\sqrt{r}\cdot \sqrt{\frac{n}{r\log^{\tau}n}}\right)\cdot Q_2(n_1/r_1)\\
             & = O(\sqrt{rb}\cdot \log n) + O\left(\sqrt{\frac{nb_1}{\log^{\tau}n}}\cdot \log n_1\right) + O\left(\sqrt{\frac{n}{\log^{\tau}n}}\right)\cdot Q_2(n_1/r_1)\\
             & = O\left(\frac{\sqrt{n}}{b}\cdot \log n\right) + O\left(\sqrt{\frac{n}{\log^{\tau}n}}\cdot\sqrt{b_1}\cdot  \log n_1\right) + O\left(\sqrt{\frac{n}{\log^{\tau}n}}\right)\cdot Q_2(\log^{\tau} n)\\
             & = O\left(\frac{\sqrt{n}}{\log^{\rho-1} n}\right) + O\left(\sqrt{\frac{n}{\log^{\tau}n}}\cdot  \log^{\rho/2+1} \log n\right) + O\left(\sqrt{\frac{n}{\log^{\tau}n}}\right)\cdot Q_2(\log^{\tau} n),\\
             & = O\left(\frac{\sqrt{n}}{\log^{\Omega(1)} n}\right) + O\left(\sqrt{\frac{n}{t}}\right)\cdot Q_2(t),\\
    \end{split}
\end{equation}
where $t=\log^{\tau}n$.


In summary, the above first builds a partition tree $T$ and then builds partition trees $T(v)$ for all leaves $v$ of $T$ (using different parameters). For notational convenience, we use $T$ to refer to the entire tree (by attaching all trees $T(v)$ to $T$), which has $O(n/t)$ leaves, each containing $O(t)$ points. The space is bounded by $O(n)$.
As $t$ is small, we show below in Section~\ref{sec:subsinterdect} that with $O(n\log n)$ additional time and $O(n)$ space preprocessing, each subproblem $Q_2(t)$ in \eqref{equ:interdect} can be solved in $O(\log t)$ time. This makes the total query time $Q(n)$ bounded by $O(\sqrt{n}/\log^{\Omega(1)} n)$. We thus have the following result.

\begin{lemma}\label{lem:segdetect}
Given a set of $n$ segments in the plane, there is a data structure of $O(n)$ space that can determine whether a query line intersects any segment in $O(\sqrt{n}/\log^{\Omega(1)} n)$ time. The data structure can be built in $O(n^6)$ time.
\end{lemma}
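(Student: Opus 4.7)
The plan is to use the deterministic hierarchical partition tree of Theorem~\ref{theo:partition} as a backbone and classify every segment of $S$ into one of two kinds of storage buckets, so that a query reduces to a controlled number of invocations of Lemma~\ref{lem:interdetectspecial} plus tiny base-case queries. First I would build a partition tree $T$ on the endpoint set $P$ of $S$ with $r = \Theta(n/b^{c})$ and $b = \log^{\rho} n$, and route each segment $s$ top-down: $s$ is pushed to the deepest ancestor cell that still contains it; if it reaches a leaf intact it is placed in $S_v$, otherwise it is stored on exactly one edge $e$ of a child cell that it crosses. This classification puts every segment in exactly one bag, so the total storage is $O(n)$.

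The key structural fact is that every segment stored on an edge bag $S_e$ genuinely crosses $e$, which is precisely the hypothesis of Lemma~\ref{lem:interdetectspecial}. Hence I can preprocess each $S_e$ in $O(|S_e|\log n)$ time with $O(|S_e|)$ space so that ``does the query line $\ell$ hit any segment of $S_e$'' is answered in $O(\log n)$. For the leaf bags $S_v$, I would recurse with smaller parameters ($r_1=n_1/\log^{\tau}n$, $b_1=\log^{\rho}n_1$), applying the same classification scheme to the endpoints inside each leaf cell. After this second level of recursion, every remaining sub-leaf holds only $t=\log^{\tau'}n$ points, so I would handle these tiny instances by the configuration-enumeration idea (Section~\ref{sec:subsinterdect}): an algebraic decision tree drives us to one of $2^{\mathrm{poly}\log n}$ combinatorial types of a $t$-segment set, for each of which an answer data structure can be pretabulated, yielding $O(\log t)$ base-case queries within $O(n)$ total space.

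To answer a query line $\ell$, I would descend $T$ visiting only nodes whose cell is crossed by $\ell$; at every visited internal node I invoke the Lemma~\ref{lem:interdetectspecial} structure on $S_e$ for each boundary edge $e$ of every child, and at every visited leaf I invoke the base-case structure. Correctness follows because any segment $s$ intersected by $\ell$ is stored either on an edge $e$ of a child cell of an ancestor crossed by $\ell$ (so its $S_e$-structure is queried) or in some leaf whose cell is crossed by $\ell$. By property~(5) of Theorem~\ref{theo:partition}, any line crosses $O((b'\cdot b^{i-1})^{1/2}+\log^{O(1)}n)$ cells of $\Pi_i$, which gives a top-level cost of $O(\sqrt{rb}\cdot \log n)+O(\sqrt{r})\cdot Q_1(n/r)$ matching~\eqref{equ:new30}; unrolling once more matches~\eqref{equ:new40}, and choosing $\rho$ sufficiently large collapses the telescoped bound to $O(\sqrt{n}/\log^{\Omega(1)}n)$ once the base case contributes $O(\log t)=O(\log\log n)$.

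The main obstacle I anticipate is balancing the parameters $r$, $r_1$, and $t$ so that three conditions hold simultaneously: (i) each segment lands in exactly one bag, preserving linear space; (ii) the extra $\log n$ factor introduced by every call to Lemma~\ref{lem:interdetectspecial} is absorbed by the polylogarithmic savings coming from the logarithmic node degree $b=\log^{\rho}n$, which requires $\rho$ to exceed a constant depending on the target $\Omega(1)$; and (iii) the two-level recursion terminates with $t=\log^{\tau'}n$ small enough that the $2^{O(t^{O(1)})}$-sized configuration table fits in $O(n)$ space. Once these are in place, the $O(n^6)$ preprocessing cost is inherited directly from Theorem~\ref{theo:partition} with $d=2$ and the $m^{d}\cdot r^{2-2/d}$ term of its construction bound, so no further effort is required there.
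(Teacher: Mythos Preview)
Your proposal is correct and follows essentially the same approach as the paper: build the hierarchical partition tree of Theorem~\ref{theo:partition} on the endpoint set, route each segment down to either a leaf bag $S_v$ or an edge bag $S_e$ (where Lemma~\ref{lem:interdetectspecial} applies), recurse once on the leaf bags with smaller parameters, and finish the $t=\log^{\tau}n$-sized sub-leaves via the configuration-enumeration trick of Section~\ref{sec:subsinterdect}. The query algorithm, correctness argument, and the time recurrences you describe match equations~\eqref{equ:new30}--\eqref{equ:interdect} in the paper almost verbatim.
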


We will further reduce the preprocesing time to $O(n^{1+\epsilon})$ in Section~\ref{sec:pretimesegdetect}.

\subsection{Solving subproblems}
\label{sec:subsinterdect}

Let $A$ be a set of $t$ line segments in the plane. We build a basic data structure to answer segment intersection detection queries as follows.
Let $P_A$ denote the set of the endpoints of all segments of $A$.
Let $H_A$ be the set of dual lines of the points of $P_A$. Let $\calA(H_A)$ be the arrangement of $H_A$. Each segment $s$ is dual to a wedge~\cite{ref:deBergCo08} and a line $\ell$ crosses $s$ if and only if its dual point $\ell^*$ is in the dual wedge of $s$. We first construct $\calA(H_A)$ in $O(t^2)$ time~\cite{ref:ChazelleTh85}. Then, for each face of $\calA(H_A)$, we mark it if it is contained in a dual wedge of a segment of $S$, e.g., by checking every segment of $S$.
We also build a point location data structure on $\calA(H_A)$ in $O(t^2)$ time and space~\cite{ref:EdelsbrunnerOp86,ref:KirkpatrickOp83,ref:SarnakPl86}.
This finishes the preprocessing, which takes $O(t^3)$ time and uses $O(t^2)$ space.
Given a query line $\ell$, using the point location data structure we find the face of $\calA(H_A)$ containing $\ell^*$ in $O(\log t)$ time and then check whether the cell is marked or not. As such, the query time is $O(\log t)$.

In what follows, we show that with $O(n\log n)$ time and $O(n)$ space preprocessing we can answer each query in $O(\log t)$ time on $S_v$ for each leaf $v$ of $T$.



For a set of $t$ segments, we build an algebraic decision tree $T_D$ for the preprocessing algorithm of the above basic data structure, excluding the step for constructing the point location data structure. The height of $T_D$ is $O(t^3)$ and $T_D$ has $2^{O(t^3)}$ leaves. Each leaf of $T_D$ corresponds to a configuration of a set of $t$ segments.
For each leaf $v\in T_D$, let $A_v$ be a set of $t$ segments whose configuration corresponds to $v$. We build the above basic data structure $\calD_v$ on $A_v$, which takes $O(t^3)$ preprocessing time and $O(t^2)$ space. Doing this for all leaves $v\in T_D$ takes a total of $t^3\cdot 2^{O(t^3)}$ time and space, which is bounded by $O(n)$ since $t=\log^{\tau} n$ for a small enough $\tau$.
For each leaf $v'$ of our partition tree $T$, we determine the leaf $v$ of $T_D$ that corresponds to the configuration of
$S_{v'}$, in $O(t^2)$ time using the decision tree $T_D$; we associate $v$ with $v'$. Doing this for $S_{v'}$ of all leaves $v'$ of $T$ takes $O(n/t\cdot t^3)$ time, which is $O(n\log n)$ since $t=\log^{\tau} n$ for a small enough $\tau$. This finishes our preprocessing, which takes $O(n \log n)$ time and uses $O(n)$ space.

Given a query line $\ell$, suppose we want to apply the query to $S_{v'}$ for a leaf $v'\in T$. Let $v$ be the leaf of $T_D$ associated with $v'$. We apply the query using the data structure $\calD_v$, but whenever the query algorithm attempts to use the $i$-th segment of $A_v$ we use the $i$-th segment of $S_{v'}$ instead. The query time is thus $O(\log t)$.

\subsection{Reducing the preprocessing time}
\label{sec:pretimesegdetect}

We reduce the preprocessing time of Lemma~\ref{lem:segdetect} to $O(n^{1+\epsilon})$.
Note that we cannot follow the same methods as in Sections~\ref{sec:simcountsub} or \ref{sec:substabcount} by constructing an upper partition tree using simplicial partitions of Lemma~\ref{lem:buildsimpar}. The reason is that cells in the simplicial partition may not be disjoint and consequently we cannot use the same approach as above to store segments of $S$.
Instead, we use an ``enhanced'' version of simplicial partitions proposed by Wang~\cite{ref:WangAl20},
which is the same as the original simplicial partition except that the partition $\Pi=\{(P_i,\sigma_i)\}$ has the following additional {\em weakly overlapped} property: For any cell $\sigma_i$, if $\sigma_i$ contains a point of $P_j$ for $i\neq j$, then all points of $P_i$ are outside $\sigma_j$.
Also, each cell $\sigma_i$ is now either a triangle or a convex quadrilateral.
The following result is from \cite{ref:WangAl20}.

\begin{lemma}\label{lem:enhance}{\em(\cite{ref:WangAl20})}
For a set $P$ of $n$ points in the plane and a parameter $r\leq n$, an enhanced simplicial partition $\Pi = \{(P_1, \sigma_1), (P_2, \sigma_2), \ldots, (P_{O(r)}, \sigma_{O(r)})\}$
whose classes satisfy $|P_i|\leq n/r$ and whose crossing number is $O(\sqrt{r})$ can be constructed in $O(nr + r^{5/2})$ time.
\end{lemma}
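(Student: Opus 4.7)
The plan is to adapt Matou\v{s}ek's iterative construction of simplicial partitions, augmenting each round with a clean-up cut that enforces the weakly overlapped property while keeping every cell either a triangle or a convex quadrilateral. Throughout, I would maintain a weighted test set $H$ of $O(r^{3/2})$ lines (one per suitable pair of points of $P$, or a sparser derandomized version), and build the partition over $O(r)$ rounds, one class per round.

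In round $j$, I would first identify a triangle $\tau_j$ containing exactly $\lceil n/r \rceil$ currently-active points $P_j$ such that the total weight of the test lines crossing $\tau_j$ is an $O(1/r)$ fraction of the current total weight. This is the standard shallow-cell selection, realizable by scanning the dual arrangement or by refining a weighted $(1/r)$-cutting, in $O(n + r^{3/2})$ time per round. Immediately after fixing $\tau_j$, I would double the weight of every test line crossing $\tau_j$; the familiar multiplicative-weights telescoping argument then forces any test line's total crossing count, and by the test-set property any line's total crossing count, to be $O(\sqrt{r})$.

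To turn $\tau_j$ into the final cell $\sigma_j$, I would check each earlier class $P_i$ with $i<j$ for which $\sigma_i$ already contains a point of $P_j$: if $\tau_j$ also contains some point of $P_i$, add a single line cut separating all points of $P_j$ from all intruding points of $P_i$ currently inside $\tau_j$. One cut converts a triangle into a convex quadrilateral, and an appropriate sweep-based ordering of the rounds (for example, top-to-bottom by centroid) would be chosen so that at most one such conflict needs resolving per cell. Since cells now have at most four sides, the crossing number inflates only by a constant factor and remains $O(\sqrt{r})$. The running time breaks into $O(r^{5/2})$ for building and maintaining the test set plus $O(nr)$ for $O(r)$ rounds each scanning $O(n)$ active points and checking $O(r)$ prior cells.

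The main obstacle is justifying that a single line cut per cell is always both sufficient and realizable: one must argue that any overlap with a previous class is directional, so that the intruding points of $P_i$ lie on one side of $P_j$ within $\tau_j$ and can be separated from $P_j$ by one line. This is the geometric content that must be extracted from the sweep order and the localized nature of the shallow-cell selection; showing that the selection step can be made compatible with this separability condition, rather than being obstructed by it, is where the bulk of the technical work lies.
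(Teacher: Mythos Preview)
The paper does not prove this lemma; it is quoted verbatim from \cite{ref:WangAl20} and used as a black box, so there is no in-paper argument to compare against. That said, your sketch has a real gap precisely where you flag it.

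The weakly overlapped property is a two-sided condition on every pair $(i,j)$: if $\sigma_i$ contains a point of $P_j$ then \emph{all} of $P_i$ must lie outside $\sigma_j$, and symmetrically with $i$ and $j$ swapped. Your clean-up step only addresses one direction (you cut $\tau_j$ to push out intruding points of an earlier $P_i$), and even for that direction the claim that at most one earlier class can conflict with $\tau_j$ is not supported by the ``sweep by centroid'' ordering you suggest. Nothing in Matou\v{s}ek's shallow-cell selection prevents $\tau_j$ from overlapping several of the already-fixed cells $\sigma_1,\ldots,\sigma_{j-1}$ simultaneously, and nothing guarantees that the intruding points of $P_i$ inside $\tau_j$ are linearly separable from $P_j$ by a single line. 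If several cuts were needed, $\sigma_j$ would no longer be a triangle or quadrilateral, and the crossing-number bound would degrade. The actual construction in \cite{ref:WangAl20} does not rely on a sweep order to force single-cut separability; you would need to consult that paper (or reproduce its argument) to close this step. The multiplicative-weights part of your outline and the $O(nr+r^{5/2})$ accounting are fine, but the enhancement step as written does not go through.
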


Recall that $P$ is the set of endpoints of all segments of $S$.
We start with constructing an enhanced simplicial partition of $P$: $\Pi=\{(P_1,\sigma_1),\cdots,(P_{O(r)},\sigma_{O(r)})\}$, in $O(nr+r^{5/2})$ time by Lemma~\ref{lem:enhance}, with $r$ to be fixed later.
For each segment $s\in S$, if both endpoints of $s$ are in the same class $P_i$ of $\Pi$, then we store $s$ at $\sigma_i$; let $S_i$ denote the subset of segments stored at $\sigma_i$. If the two endpoints of $s$ are in different classes, say, $P_i$ and $P_j$, then using the weakly overlapped property of $\Pi$ we can prove that $s$ must intersect an edge $e$ of either $\sigma_i$ or $\sigma_j$~\cite{ref:WangAl20}. We store $s$ at $e$ (if $s$ intersects multiple edges of the two cells, then we store $s$ at only one such edge); let $S_e$ be the subset of segments stored at $e$. We preprocess $S_e$ using Lemma~\ref{lem:interdetectspecial}. For each $S_i$, later we will build a data structure $\calD_i$ for it.
Assuming the space of $\calD_i$ is $O(|S_i|)$, the total space of the data structure is $O(n)$. The preprocessing time, excluding the time for constructing $\calD_i$'s, is $O(nr+r^{5/2}+n\log n)$.

Given a query line $\ell$, for each edge $e$ of each cell $\sigma_i$ of $\Pi$, we apply Lemma~\ref{lem:interdetectspecial} to check in $O(\log n)$ time whether $\ell$ intersects a segment of $S_e$. If yes, we can stop the query algorithm. Otherwise, for each cell $\sigma_i$ of $\Pi$ that is crossed by $\ell$, we use the data structure $\calD_i$ to check whether $\ell$ intersects any segment of $S_i$. If we construct $\calD_i$ recursively on $S_i$, then we have the following recurrence on the query time $Q(m)$ for a subset of size $m$:
\begin{equation}\label{equ:querysegdetect}
Q(m) = O(r\log m) + O(\sqrt{r})\cdot Q(m/r).
\end{equation}

We recurse $O(1)$ times and in each recursive step we use a fixed parameter $r$. 
These recursive partitions form a partition tree and we apply Lemma~\ref{lem:segdetect} on each leaf. The idea is similar to those before, e.g., in Section~\ref{sec:simcountsub}. The details are given below.

We build a partition tree $T$ recursively by Lemma~\ref{lem:enhance}, until we obtain a partition of $P$ into subsets of sizes $O(n^{\delta})$ for a small enough constant $\delta>0$ to be fixed later, which form the leaves of $T$. Each inner node $v$ of $T$ corresponds to a subset $P_v$ of $P$ as well as an enhanced simplicial partition $\Pi_v$ of $P_v$, which form the children of $v$. $\Pi_v$ is constructed by Lemma~\ref{lem:enhance} with parameter $r=n^{(1-\delta)/k}$ for a constant integer $k$ to be fixed later, i.e., each internal node of $T$ has $O(r)$ children.
If we recurse $k$ times, i.e., the depth of $T$ is $k$, then subset size of each leaf of $T$ is $O(n^{\delta})$. Hence, the number of leaves of $T$ is $O(r^k)$.
We store the segments of $S$ in $S_e$ for cell edges $e$ as well as in $S_v$ for leaves $v$ in the same way as discussed above. Next, for each leaf $v$ of $T$, we construct the data structure of Lemma~\ref{lem:segdetect} on $P_v$, denoted by $\calD_v$, in $O(|P_v|^{6})$ time. Since $|P_v|=O(n^{\delta})$, we can make $\delta$ small enough so that the total time of Lemma~\ref{lem:segdetect} on all leaves of $T$ is $O(n^{1+\epsilon})$.
This finishes the preprocessing.
The space of the data structure is $O(n)$ since the depth of $T$ is $O(1)$. For the preprocessing time, the time for constructing the data structure of Lemma~\ref{lem:interdetectspecial} on $S_e$ is $O(n\log n)$ as each segment of $S$ is stored in $S_e$ for at most one cell edge $e$.
For the time on constructing the enhanced simplicial partitions, observe that all classes in all partitions in the same level of $T$ form a partition of $P$, implying that the total size of these classes is $O(n)$. As such, since $T$ has $O(r^k)$ nodes and $r=n^{(1-\delta)/k}$, the total time for constructing the enhanced simplicial partitions is $O(nrk + r^{5/2}\cdot r^k)$, which is bounded by $O(n^{1+\epsilon})$ for $r=n^{\delta}$ if we make $k$ large enough.

For the query time, following recurrence \eqref{equ:querysegdetect},
starting with $m=n$ and recursing $k$ times gives us
$$Q(n)=O(r^{(k-1)/2+1}\cdot \log n)+O(r^{k/2})\cdot Q(n/r^k).$$
Using  $r=n^{(1-\delta)/k}$, by setting $k$ to a constant integer larger than $(1/\delta -1)/(d-1)$, we obtain
\begin{equation}\label{equ:querysegdetect10}
    Q(n)=O(n^{1/2-\delta'})+O(r^{k/2})\cdot Q(n/r^k),
\end{equation}
for another small constant $\delta'>0$.

Finally, for each leaf node $v$ corresponding to a subproblem $Q(n/r^k)$ in \eqref{equ:querysegdetect10}, we use the data structure $\calD_v$ to answer the query in $O(\sqrt{|P_v|}/\log^{\Omega(1)}|P_v|)$ time by Lemma~\ref{lem:interdetectspecial}, which is $O(\sqrt{n/r^k}/\log^{\Omega(1)} n)$ as $|P_v|=O(n/r^k)$. Plugging this into  recurrence~\eqref{equ:querysegdetect10} gives us $Q(n)=O(\sqrt{n}/\log^{\Omega(1)} n)$.
We thus conclude as follows.


\begin{theorem}
Given a set of $n$ segments in the plane, there is a data structure of $O(n)$ space that can determine whether a query line intersects any segment in $O(\sqrt{n}/\log^{\Omega(1)} n)$ time. The data structure can be built in $O(n^{1+\epsilon})$ time for any $\epsilon>0$.
\end{theorem}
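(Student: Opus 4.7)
The plan is to build a shallow, $O(1)$-depth upper partition tree $T$ using enhanced simplicial partitions (Lemma~\ref{lem:enhance}) and to invoke Lemma~\ref{lem:segdetect} as a black box on each leaf. This parallels the preprocessing-time reductions carried out in Sections~\ref{sec:simcountpretime} and~\ref{sec:simstabpretime}, with the twist that overlapping simplicial cells would break the segment-storage scheme, so the enhanced partition of Lemma~\ref{lem:enhance} is essential.

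Concretely, I would fix a small constant $\delta>0$ and a constant integer $k$ (to be chosen depending on $\epsilon$ and $\delta$), and set $r=n^{(1-\delta)/k}$. Starting from $P$, the set of endpoints of $S$, I recursively construct $k$ levels of enhanced simplicial partitions, so that every leaf $v$ holds a subset $P_v$ of size $O(n^\delta)$ and $T$ has $O(n^{1-\delta})$ leaves. For each segment $s\in S$, I walk it down $T$: whenever both endpoints of $s$ fall in the same child class I descend into that child, and at the first node where the two endpoints split into distinct classes $P_i,P_j$, the weakly overlapped property of Lemma~\ref{lem:enhance} forces $s$ to cross an edge of $\sigma_i$ or $\sigma_j$, so I store $s$ at that edge in $S_e$. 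Segments reaching a leaf $v$ are placed in $S_v$. On each cell-edge set $S_e$ I build the Lemma~\ref{lem:interdetectspecial} structure, and on each leaf I build the data structure $\calD_v$ of Lemma~\ref{lem:segdetect} in $O(|P_v|^6)$ time.

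For the bookkeeping: classes at a fixed level of $T$ partition $P$, so the enhanced partitions across all $k$ levels cost $O(nrk+r^{5/2+k})$ in total, which is $O(n^{1+\epsilon})$ for suitable $\delta,k$; the $S_e$ structures cost $O(n\log n)$ overall; and the leaf data structures cost $O(n^{1-\delta})\cdot O(n^{6\delta})=O(n^{1+5\delta})$, which is $O(n^{1+\epsilon})$ once $\delta\le\epsilon/5$. Since every segment is stored exactly once, the total space is $O(n)$. For queries, given a line $\ell$ I descend $T$: at each internal node visited I test each edge of each child cell via Lemma~\ref{lem:interdetectspecial} in $O(\log n)$ time and recurse into the children whose cells are crossed by $\ell$. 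The crossing number $O(\sqrt r)$ gives the recurrence $Q(m)=O(r\log m)+O(\sqrt{r})\cdot Q(m/r)$, which after $k$ levels reduces to $O(n^{1/2-\delta'})+O(r^{k/2})\cdot Q_{\textrm{leaf}}(n/r^k)$; applying Lemma~\ref{lem:segdetect} at each leaf contributes $O(\sqrt{n/r^k}/\log^{\Omega(1)}n)$ per leaf, summing to the required $O(\sqrt n/\log^{\Omega(1)}n)$.

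The main obstacle is the same one addressed by Wang~\cite{ref:WangAl20}: ordinary simplicial partitions (Lemma~\ref{lem:buildsimpar}) allow cell overlap, so a segment's endpoints might lie in two classes whose simplices both fully contain the segment, leaving no edge crossing on which to pin $s$. The enhanced partition of Lemma~\ref{lem:enhance} precisely guarantees that whenever two classes both contain endpoints of $s$, at least one of their cell boundaries is crossed by $s$, letting Lemma~\ref{lem:interdetectspecial} absorb it. Once this structural point is in place, the rest is parameter-chasing of the kind already performed in the earlier sections.
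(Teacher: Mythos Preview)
Your proposal is correct and follows essentially the same approach as the paper: build an $O(1)$-depth upper tree via the enhanced simplicial partitions of Lemma~\ref{lem:enhance} with $r=n^{(1-\delta)/k}$, store each segment either at a cell-edge (via the weakly overlapped property, processed by Lemma~\ref{lem:interdetectspecial}) or at a leaf, and apply Lemma~\ref{lem:segdetect} at the leaves. Your cost accounting, query recurrence, and the identification of cell overlap as the reason ordinary simplicial partitions fail all match the paper.
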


\section{Ray-shooting among non-intersecting segments}
\label{sec:rayshoot}

Let $S$ be a set of $n$ line segments in the plane such that no two segments intersect. The problem is to build a data structure to compute the first segment hit by a query ray.

The following is a result from the previous work~\cite{ref:WangAl20} for a special case of the problem. We will use it as a subroutine in our approach.

\begin{lemma}\label{lem:rayshootspecial}{\em (\cite{ref:WangAl20})}
If all segments of $S$ intersect a given line segment, then one can build a data structure of $O(n)$ space in $O(n \log n)$ time so that a ray-shooting query can be answered in $O(\log n)$ time.
\end{lemma}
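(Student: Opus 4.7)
The plan is to exploit the strong structural constraint that every segment of $S$ crosses a common segment $\ell_0$ together with the non-crossing property. Let the crossing points of the segments of $S$ with $\ell_0$ be $p_1, p_2, \ldots, p_n$ sorted along $\ell_0$; this induces a canonical total order $s_1, s_2, \ldots, s_n$ on $S$, computable in $O(n\log n)$ time. Each $s_i$ is split by $\ell_0$ into at most two half-segments, and on each side of $\ell_0$ the half-segments form a family of non-crossing arcs sharing a common transversal --- a ``fan'' structure that is amenable to a logarithmic-depth decomposition.

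First I would handle a case reduction on the query ray $\rho$. If $\rho$ crosses the supporting line of $\ell_0$, I split $\rho$ at the crossing point into at most two sub-rays, answer each in the appropriate half-plane data structure, and return the nearer hit; otherwise $\rho$ lies entirely on one side and only one structure is consulted. This reduces the problem to ray-shooting against a fan of non-crossing arcs, all having one endpoint on $\ell_0$, all lying on one side of (the line through) $\ell_0$.

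Second, I would build a balanced binary search tree $T$ over the sorted half-segments $s_1^+,\ldots,s_n^+$ for the upper fan (symmetrically for the lower one). For each internal node $v$ of $T$ with canonical subset $S_v$, I would attach a secondary structure encoding just enough information to classify, in $O(1)$ amortized time, whether the first segment of $S_v$ hit by $\rho$ lies in the left or right subtree of $v$. A natural choice is to store, at $v$, the two ``extreme'' half-segments of $S_v$ together with a witness curve separating them (essentially the boundary of one side of the median segment relative to $S_v$): because the half-segments are non-crossing and share $\ell_0$, either $\rho$ hits the median first --- in which case the query terminates --- or the side that $\rho$ lies on (decided by one ray-versus-segment test) dictates the recursion.

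The hard part will be ensuring that the root-to-leaf descent really runs in $O(\log n)$ total time rather than $O(\log^2 n)$. A naive implementation would spend $O(\log n)$ per node doing a second binary search inside the secondary structure, and removing this extra log factor requires fractional cascading down the levels of $T$, or equivalently a Chazelle--Guibas style persistent-search-tree / hive-graph layout that threads a single coherent search through the secondary structures. Once this is set up, linearity of space follows because each half-segment appears in $O(\log n)$ canonical subsets but the fractional-cascading bridges add only $O(n)$ extra pointers in total; the $O(n\log n)$ preprocessing bound is dominated by the initial sort and the bottom-up assembly of the secondary structures, and the $O(\log n)$ query bound follows from a standard amortized analysis of the cascaded descent.
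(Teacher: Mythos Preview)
The paper does not prove this lemma; it is quoted verbatim from \cite{ref:WangAl20} and used as a black box. So there is no in-paper proof to compare against, and your task is really to reproduce (or replace) the argument of \cite{ref:WangAl20}.

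Your high-level instincts are sound: the common transversal $\ell_0$ plus pairwise disjointness gives a canonical total order on $S$, and any correct proof will exploit that order. However, the specific descent you propose has a genuine gap. At an internal node you test the median half-segment $s_m^{+}$ and, if $\rho$ misses it, you claim ``the side that $\rho$ lies on \ldots dictates the recursion.'' But $s_m^{+}$ is a \emph{finite} segment with a free endpoint; it does not separate the upper half-plane into two regions, and there is no single ray-versus-segment test that tells you which subtree contains the first hit. Concretely, picture the half-segments as short ``hairs'' of varying lengths along $\ell_0$ and let $s_m^{+}$ be unusually short; a ray travelling roughly parallel to $\ell_0$, just above the tips, can miss $s_m^{+}$ yet still hit half-segments with indices both below and above $m$. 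So the branching rule is not well-defined, and the $O(\log n)$ depth bound collapses.

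There is also an inconsistency in the space argument. You say each half-segment appears in $O(\log n)$ canonical subsets and then appeal to fractional cascading for linear space; but cascading speeds up search, it does not shrink the total size of the secondary structures. Either the per-node secondary structure is $O(1)$ (median plus two extremes), in which case the total space is $O(n)$ but there is nothing to cascade and the branching problem above is fatal, or the secondary structures carry real content, in which case the space is $\Theta(n\log n)$.

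To repair this you need a separator at each node that genuinely partitions the relevant region, not a bare segment. The construction in \cite{ref:WangAl20} builds such separators explicitly from the order along $\ell_0$ and the non-crossing property; an alternative route is to observe that $S\cup\{\ell_0\}$ bounds regions that can be stitched into one or two simple polygons of total complexity $O(n)$, after which Chazelle--Guibas style simple-polygon ray-shooting gives the stated bounds directly. Either way, the missing ingredient is a partitioning curve per node (or a global polygonal structure), not a median segment.
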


We build the partition tree $T$ and store $S$ in $T$ in the same way as for the segment intersection detection problem in Section~\ref{sec:segdetect} except that we use Lemma~\ref{lem:rayshootspecial} to preprocess $S_e$ for each cell edge $e$ of $T$. In addition, for each leaf $v$ of $T$, we will build a data structure $\calD_v$ on $S_v$.

Given a query ray $\rho$, starting from the root of $T$, for each node $v$, assume that $\rho$ intersects the boundary of $\Delta(v)$, which is true initially when $v$ is the root. If $v$ is a leaf, then we use the data structure $\calD_v$ to find the first segment of $S_v$ hit by $\rho$ as our candidate solution segment. Otherwise, for each child $u$ of $v$, for each edge $e$ of $\Delta(u)$, apply the query algorithm of Lemma~\ref{lem:rayshootspecial} to find the first ray of $S_e$ hit by $\rho$ as a candidate; further, if $\rho$ crosses $\Delta(u)$, then we proceed on $u$.
Finally, among all candidate segments, we return the one whose intersection with $\rho$ is closest to the origin of $\rho$.

\begin{lemma}
The query algorithm works correctly.
\end{lemma}
\begin{proof}
Suppose $s$ is the first segment of $S$ hit by $\rho$, say, at a point $p$. Note that $s$ is unique as all segments of $S$ are pairwise disjoint. We argue that the query algorithm will return $s$ as the solution. Indeed, observe that all candidate segments found by the query algorithm are hit by $\rho$. Hence, it suffices to show that $s$ is one of the candidate segments found by the query algorithm. Also observe that all nodes $v$ of $T$ whose cells $\Delta(v)$ are crossed by $\rho$ will be visited by the algorithm.

If $s$ is stored at a leaf $v$, then $p$ is in $\Delta(v)$ and thus $\rho$ must cross the boundary of $\Delta(v)$. Hence, our query algorithm will visit $v$ and thus call $\calD_v$ to find the first ray in $S_v$ hit by $\rho$ as a candiate solution. Since $s$ is the ray first hit by $\rho$ in $S$, $s$ must also be the ray first hit by $\rho$ in $S_v$. Hence, the data structure $\calD_v$ will return $s$ as a candidate solution.

If $s$ is not stored at a leaf, then there must be an internal node $v$ such that $s\in \Delta(v)$ and $s$ is not in $\Delta(u)$ for any child $u$ of $v$. According to our preprocessing algorithm, $s$ must be stored in $S_{e'}$ for an edge $e'$ of some cell $\Delta(u)$ of a child $u$ of $v$. Since $p\in s\subseteq \Delta(v)$ and $p\in \rho$, $\rho$ must cross the boundary of $\Delta(v)$. Therefore, our query algorithm will visit $v$ by applying the query algorithm of Lemma~\ref{lem:rayshootspecial} to $S_e$ for every edge $e$ of each child cell of $v$. When it is applied to $S_{e'}$, the segment $s$ will be found as a candidate segment.
\end{proof}

As in Section~\ref{sec:segdetect}, for each leaf $v$ of $T$, we construct the data structure $\calD_v$ by preprocessing $S_v$ recursively once.
The query time analysis follows exactly the same method as in Section~\ref{sec:segdetect} since the query time of Lemma~\ref{lem:rayshootspecial} is the same as that of Lemma~\ref{lem:interdetectspecial}, and thus we can also obtain the recurrence~\eqref{equ:interdect} with the same value of $t$. As $t$ is small, we show in Section~\ref{sec:subrayshoot} that with additional $O(n\log n)$ time and $O(n)$ space preprocessing, each subproblem $Q(t)$ in \eqref{equ:interdect} can be solved in $O(\log t)$ time. This makes the total query time $Q(n)$ bounded by $O(\sqrt{n}/\log^{\Omega(1)} n)$. We thus have the following result.

\begin{lemma}\label{lem:rayshoot}
Given a set of $n$ segments in the plane, there is a data structure of $O(n)$ space that can compute the first segment hit by a query ray in $O(\sqrt{n}/\log^{\Omega(1)} n)$ time. The data structure can be built in $O(n^6)$ time.
\end{lemma}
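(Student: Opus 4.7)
The plan is to follow the template of Lemma~\ref{lem:segdetect} essentially verbatim. Everything above the statement already applies to ray-shooting: the partition tree $T$ from Theorem~\ref{theo:partition}, the placement of segments into the edge lists $S_e$ (now preprocessed by Lemma~\ref{lem:rayshootspecial} instead of Lemma~\ref{lem:interdetectspecial}) and the leaf lists $S_v$, the second recursive step, the query algorithm, its correctness, and the recurrence~\eqref{equ:interdect} for $Q(n)$ are already established, and the query time of Lemma~\ref{lem:rayshootspecial} matches that of Lemma~\ref{lem:interdetectspecial}, so the same constants go through. What remains is to exhibit, for each leaf $v'$ of the two-level tree $T$, a data structure $\calD_{v'}$ on $O(t)$ pairwise non-intersecting segments (with $t=\log^{\tau} n$ for a sufficiently small constant $\tau$) that answers ray-shooting queries in $O(\log t)$ time, using $O(n)$ total space and $O(n\log n)$ total preprocessing.

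Mimicking Section~\ref{sec:subsinterdect}, I would first design a \emph{basic} ray-shooting structure for an arbitrary set $A$ of $t$ non-intersecting segments that uses $t^{O(1)}$ time and space with $O(\log t)$ query time. One realization is to work in the three-dimensional parameter space $(x,y,\theta)$ of rays and compute the cell subdivision induced by the algebraic surfaces on which the first-hit answer changes---rays through a segment endpoint, and rays simultaneously grazing a pair of endpoints---yielding $t^{O(1)}$ surfaces and $t^{O(1)}$ labeled cells, over which a standard vertical-decomposition-based point-location structure supports $O(\log t)$-time queries. Then build an algebraic decision tree $T_D$ for this construction: $T_D$ has height $t^{O(1)}$ and $2^{t^{O(1)}}$ leaves, each leaf representing a combinatorial configuration of $t$ non-intersecting segments; attach to every leaf $v$ the basic structure $\calD_v$ built on a representative set $A_v$. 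For $\tau$ chosen small enough, $t^{O(1)}\cdot 2^{t^{O(1)}}=O(n)$, keeping the global store within $O(n)$. Route each $S_{v'}$ through $T_D$ in $t^{O(1)}$ time to identify the matching leaf; summed over the $O(n/t)$ leaves of $T$ this is $O(n\log n)$.

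Given a query ray $\rho$ at leaf $v'$, run the $\calD_v$ query but substitute the $i$-th segment of $S_{v'}$ for the $i$-th segment of $A_v$ in every primitive comparison, returning the first segment of $S_{v'}$ hit by $\rho$ in $O(\log t)$ time; this yields $Q_2(t)=O(\log t)$ in \eqref{equ:interdect} and hence $Q(n)=O(\sqrt n/\log^{\Omega(1)} n)$. The main obstacle is the basic structure itself: unlike intersection detection, where a single point location in a planar dual arrangement plus a ``marked face'' test suffices, ray-shooting depends on all three ray parameters, so the cell decomposition is genuinely three-dimensional. Bounding its combinatorial complexity by $t^{O(1)}$ (rather than exponential) and equipping it with an $O(\log t)$-time point-location structure both rely crucially on the non-intersection of the input segments to control the complexity of the ``first hit'' function.
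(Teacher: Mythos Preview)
Your high-level plan is correct and matches the paper exactly: reuse the two-level partition tree $T$ from Section~\ref{sec:segdetect}, swap Lemma~\ref{lem:interdetectspecial} for Lemma~\ref{lem:rayshootspecial}, obtain the same recurrence~\eqref{equ:interdect}, and then supply a basic $O(\log t)$-query structure on $t$ disjoint segments that can be shared across leaves via an algebraic decision tree. The only substantive divergence is in how that basic structure is built.

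The paper avoids a three-dimensional ray space entirely. Its basic structure for a set $A$ of $t$ disjoint segments is purely two-dimensional: first locate the dual point of the supporting line of the query ray $\rho$ in the arrangement $\calA(H_A)$ of the $2t$ dual lines of the endpoints; the containing face $F$ determines exactly the subset $A_F\subseteq A$ of segments crossed by the supporting line of $\rho$. The key observation is that for $s\in A_F$ the ray $\rho$ hits $s$ if and only if $\rho$ hits the supporting line of $s$, and at the same point. Hence ray-shooting in $A_F$ reduces to ray-shooting among the \emph{lines} supporting $A_F$: locate the origin of $\rho$ in a second planar arrangement $\calA(A_F)$ and binary-search the convex face boundary for the first edge hit. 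This is two planar point locations plus one binary search, for $O(\log t)$ query time and $O(t^4)$ preprocessing, and the decision tree $T_D$ is built only for these planar arrangement constructions.

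Your three-dimensional parameter-space approach is workable in principle but carries more baggage than you acknowledge. First, your list of boundary surfaces is off: ``rays simultaneously grazing a pair of endpoints'' is already a subset of the ``ray through a single endpoint'' surfaces, whereas you omit the surfaces $\{(x,y,\theta):(x,y)\in s\}$ where the ray origin lies on a segment, which are needed to keep the set of segments hit combinatorially fixed within a cell (disjointness then pins down the \emph{first} hit, as you note). Second, $O(\log t)$ point location in a 3D arrangement of curved surfaces, and turning the whole construction into a bounded-height algebraic decision tree, both require real justification that a planar argument does not. The paper's two-level planar reduction sidesteps all of this and keeps the exponent in $t^{O(1)}$ explicit.
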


We will further reduce the preprocesing time to $O(n^{1+\epsilon})$ in Section~\ref{sec:pretimerayshoot}.

\subsection{Solving subproblems}
\label{sec:subrayshoot}

Let $A$ be a set of $t$ pairwise disjoint line segments in the plane. We build a basic data structure to answer ray-shooting queries as follows.
Let $P_A$ denote the set of the endpoints of all segments of $A$.
Let $H_A$ be the set of dual lines of the points of $P_A$. Let $\calA(H_A)$ be the arrangement of $H_A$.
After constructing the arrangement $\calA(H_A)$ and building a point location data structure on $\calA(H_A)$~\cite{ref:EdelsbrunnerOp86,ref:KirkpatrickOp83,ref:SarnakPl86}, for each face $F$ of $\calA(H_A)$, we find all segments of $A$ whose dual wedges contain $F$, e.g., by checking every segment of $A$; let $A_F$ denote the subset of such segments. We have the following observation.

\begin{observation}\label{obser:20}
Suppose $F$ is the face of $\calA(H_A)$ containing the dual point of the supporting line of a query ray $\rho$. Then, $A_F$ is the subset of segments of $A$ intersecting the supporting line of $\rho$; further, for each segment $s\in A_F$, $\rho$ hits the supporting line of $s$ if and only if $\rho$ hits $s$.
\end{observation}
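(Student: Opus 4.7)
The plan is to prove the two claims of the observation separately, both via elementary point-line duality.

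For the first claim, I would invoke the standard fact (already used earlier in the segment intersection counting discussion) that in the dual plane the set of lines crossing a segment $s$ corresponds precisely to the double wedge bounded by the dual lines of its two endpoints. Since every such bounding line belongs to $H_A$, each face $F$ of $\calA(H_A)$ lies either entirely inside or entirely outside the dual wedge of each $s \in A$; no dual-wedge boundary can cross the interior of $F$, because every such boundary is a line of $H_A$. Consequently, the set $A_F$, defined as the segments of $A$ whose dual wedges contain $F$, is well-defined, and for the specific face $F$ containing $\ell_\rho^*$ (where $\ell_\rho$ is the supporting line of $\rho$), $A_F$ coincides with the set of segments of $A$ whose dual wedges contain $\ell_\rho^*$. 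By duality this is exactly the set of segments of $A$ intersecting $\ell_\rho$.

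For the second claim, I would fix any $s \in A_F$. Since $s$ crosses $\ell_\rho$, and under general position $s$ is not parallel to $\ell_\rho$, the two supporting lines $\ell_s$ and $\ell_\rho$ meet at a unique point $p$, and this $p$ is necessarily the unique point of $s \cap \ell_\rho$. Because $\rho \subset \ell_\rho$, both intersections $\rho \cap s$ and $\rho \cap \ell_s$ reduce to $\{p\} \cap \rho$; hence $\rho$ hits $s$ if and only if $p$ lies on $\rho$, if and only if $\rho$ hits $\ell_s$.

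There is no serious obstacle here; the main subtlety is to make the ``either entirely inside or entirely outside'' statement explicit, which is immediate once one observes that all wedge boundaries are in $H_A$. The only other point of care is the degenerate case where $s$ is parallel to (or collinear with) $\ell_\rho$, which can be handled either by a standard general-position assumption or by noting that such a segment does not contribute to $A_F$ unless $s \subset \ell_\rho$, a case in which ``$\rho$ hits $\ell_s$'' and ``$\rho$ hits $s$'' can be checked directly by a one-dimensional comparison along $\ell_\rho$.
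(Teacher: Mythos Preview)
Your proof is correct. The paper does not actually supply a proof of this observation---it is stated without justification and treated as self-evident---so there is nothing to compare against; your argument via the standard point-line duality (first claim) and the observation that $s\cap\ell_\rho=\ell_s\cap\ell_\rho=\{p\}$ forces $\rho\cap s$ and $\rho\cap\ell_s$ to coincide (second claim) is exactly the natural justification one would give.
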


For each face $F$ of $\calA(H_A)$, we compute the arrangement $\calA(A_F)$ of the supporting lines of the segments of $A_F$ and build a point location data structure on $\calA(A_F)$.
For each face $f$ of $\calA(A_F)$, which is a convex polygon, we store the edges of its boundary in a data structure (e.g., an array or a balanced binary search tree) so that given a ray with origin inside $f$, the edge of $f$ hit by the ray can be found in $O(\log |f|)$ time by binary search, where $|f|$ is the number of edges of $f$. Also, for each edge $e$ of $f$, we associate $s$ with $e$, where $s$ is the segment of $A_F$ whose supporting line contains $e$.

This finishes our preprocessing, which takes $O(t^4)$ time and space. Given a query ray $\rho$, we first locate the face $F$ of $\calA(H_A)$ containing the dual point of the supporting line of $\rho$. Then, we locate the face $f$ of $\calA(A_F)$ containing the origin of $\rho$. Finally, we find the edge $e$ of $f$ hits by the ray by binary search and return the segment of $S$ associated with $e$ as our answer to the ray-shooting query of $\rho$. The total query time is $O(\log t)$.

In what follows, we show that with $O(n\log n)$ time and $O(n)$ space preprocessing we can answer each query in $O(\log t)$ time on $S_v$ for each leaf $v$ of the partition tree $T$.

We construct the algebraic decision tree $T_D$ for the above preprocessing algorithm (excluding the steps for constructing the point location data structures; the steps for constructing the binary search data structures for the edges of faces $f$ of $\calA(A_F)$ are also not needed). The height of $T_D$ is $O(t^4)$ and $T_D$ has $2^{O(t^4)}$ leaves. Each leaf $v$ corresponds to a configuration of a set of $t$ segments. Let $A_v$ be a set of $t$ segments whose configuration is the same as $v$. We construct the above basic data structure $\calD_v$ on $A_v$. Doing this for all leaves takes $O(t^4\cdot 2^{O(t^4)})$ time and space, which is $O(n)$ since $t=\log^{\tau} n$ for a small enough $\tau$. For each leaf $v'$ of our partition tree $T$, following the decision tree $T_D$, we determine the leaf $v$ of $T_D$ that corresponds to the configuration of $S_{v'}$ and we associate $v$ with $v'$. This takes $O(t^4)$ time as the height of $T_D$ is $O(t^4)$. As $T$ has $O(n/t)$ leaves, doing this for all leaves of $T$ takes $O(n/t\cdot t^4)$ time, which is $O(n\log n)$ since $t=\log^{\tau} n$ for a small enough $\tau$.

Given a query ray $\rho$, suppose we want to find the first segment of $S_{v'}$ hit by $\rho$ for some leaf $v'$ of $T$. Then, let $v$ be the leaf of $T_D$ associated with $v'$. Using the data structure $\calD_v$ built on $A_v$, we apply the query algorithm of $\calD_v$, but whenever the query algorithm attempts to use the $i$-th segment of $A_v$, we use the $i$-th segment of $S_{v'}$ instead. The query time is thus $O(\log t)$.

\subsection{Reducing the preprocessing time}
\label{sec:pretimerayshoot}

We can basically follow the same approach as in Section~\ref{sec:pretimesegdetect} for the segment intersection detection problem. More specifically, we first build a partition tree $T$ using the enhanced simplicial partitions~\cite{ref:WangAl20}. Then we store segments of $S$ in $S_e$ for edges $e$ of cells of $T$ as well as in $S_v$ for leaves $v$ of $T$. We preprocess $S_e$ using Lemma~\ref{lem:rayshootspecial} and preprocess $S_v$ using Lemma~\ref{lem:rayshoot}. Following the same analysis as in Section~\ref{sec:pretimesegdetect}, the preprocessing time and space are $O(n^{1+\epsilon})$ and $O(n)$, respectively.
The query algorithm and time analysis are also similar. We conclude with the following theorem.

\begin{theorem}
Given a set of $n$ segments in the plane, there is a data structure of $O(n)$ space that can compute the first segment hit by a query ray in $O(\sqrt{n}/\log^{\Omega(1)} n)$ time. The data structure can be built in $O(n^{1+\epsilon})$ time for any $\epsilon>0$.
\end{theorem}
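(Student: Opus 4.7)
The plan is to mirror the derandomization-and-speedup of the preprocessing performed in Section~\ref{sec:pretimesegdetect} for segment intersection detection, substituting Lemma~\ref{lem:rayshootspecial} wherever Lemma~\ref{lem:interdetectspecial} was used and Lemma~\ref{lem:rayshoot} wherever Lemma~\ref{lem:segdetect} was used. The main difficulty in Section~\ref{sec:pretimesegdetect} was that an ordinary simplicial partition from Lemma~\ref{lem:buildsimpar} allows cells to overlap, which breaks the mechanism of routing each segment either entirely into one cell or onto a single cell-edge; the same difficulty is present here, and the same remedy applies, namely Wang's enhanced simplicial partition (Lemma~\ref{lem:enhance}) whose weakly overlapped property guarantees that any cross-class segment must intersect an edge of one of the two cells containing its endpoints.

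First I would construct an upper partition tree $T$ by applying Lemma~\ref{lem:enhance} recursively with parameter $r=n^{(1-\delta)/k}$ for a small constant $\delta>0$ and an integer constant $k$ to be fixed later, stopping once each leaf cell $v$ holds a subset $P_v$ of size $O(n^{\delta})$. For each segment $s\in S$ whose two endpoints belong to the same class $P_i$ of a partition, I store $s$ at $\sigma_i$ and push it down the recursion; for a segment whose endpoints belong to different classes I store it at a single edge $e$ it crosses (which exists by the weakly overlapped property). For each cell edge $e$ of $T$ I preprocess $S_e$ using Lemma~\ref{lem:rayshootspecial} in $O(|S_e|\log|S_e|)$ time and $O(|S_e|)$ space; for each leaf $v$ I preprocess $S_v$ using Lemma~\ref{lem:rayshoot} in $O(|S_v|^6)$ time and $O(|S_v|)$ space. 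Since $|S_v|=O(n^{\delta})$, choosing $\delta$ sufficiently small makes the total leaf-preprocessing time at most $O(n^{1+\epsilon})$; since each segment appears at exactly one edge $S_e$, the preprocessing on the edge sets is $O(n\log n)$ in total; and since the depth of $T$ is the constant $k$, the total cost of constructing the enhanced simplicial partitions is $O(nrk+r^{5/2}\cdot r^k)=O(n^{1+\epsilon})$ for $r=n^{(1-\delta)/k}$ with $k$ chosen large enough. The total space is $O(n)$ because each segment is stored at exactly one node or edge and the leaf structures use linear space.

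For queries on a ray $\rho$, I would traverse $T$ top-down: at every visited node, for each edge $e$ of each child cell that $\rho$ crosses I obtain one candidate first-hit segment from the Lemma~\ref{lem:rayshootspecial} structure on $S_e$ in $O(\log n)$ time, and for each child cell itself crossed by $\rho$ I recurse; at each reached leaf $v$ I query $\calD_v$ (the Lemma~\ref{lem:rayshoot} structure on $S_v$) to obtain one more candidate. Finally I return the candidate whose intersection with $\rho$ is closest to the ray's origin. Correctness is exactly the argument already used for Lemma~\ref{lem:rayshoot}: the unique first-hit segment $s$ is stored either at some leaf whose cell $\rho$ crosses, or at an edge of a child of an ancestor whose cell $\rho$ crosses, and in either case the corresponding substructure reports $s$. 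For the query time, with crossing number $O(\sqrt{r})$ per partition, the recurrence
\[
Q(m)=O(r\log m)+O(\sqrt{r})\cdot Q(m/r)
\]
unrolled $k$ times with $r=n^{(1-\delta)/k}$ gives $Q(n)=O(n^{1/2-\delta'})+O(r^{k/2})\cdot Q(n/r^k)$ for some $\delta'>0$, exactly as in~\eqref{equ:querysegdetect10}. Plugging the Lemma~\ref{lem:rayshoot} bound $Q(n/r^k)=O(\sqrt{n/r^k}/\log^{\Omega(1)}n)$ at the leaves yields $Q(n)=O(\sqrt{n}/\log^{\Omega(1)}n)$.

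The main obstacle I anticipate is purely bookkeeping: verifying that the enhanced-partition routing still works when we recurse $k$ times (each segment must be assigned to exactly one storage location at each level so that the total space stays $O(n)$), and verifying that the candidate-segment mechanism in the query still returns the globally first-hit segment across all the levels of recursion and across the lower Lemma~\ref{lem:rayshoot} structure. Both go through because segments of $S$ are pairwise disjoint, so ``first hit'' is unambiguous, and because the preprocessing argument guarantees that every segment of $S$ ends up in exactly one of the sets $\{S_e\}\cup\{S_v:v\text{ leaf}\}$ at every recursive layer. Once this is in place, the time and space bounds stated in the theorem follow immediately by assembling the three contributions ($O(n^{1+\epsilon})$ from the enhanced partitions, $O(n\log n)$ from Lemma~\ref{lem:rayshootspecial} on edges, and $O(n^{1+\epsilon})$ from Lemma~\ref{lem:rayshoot} on leaves).
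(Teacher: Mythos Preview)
Your proposal is correct and follows essentially the same approach as the paper: build an upper tree from enhanced simplicial partitions (Lemma~\ref{lem:enhance}), route segments to cell edges or leaves via the weakly-overlapped property, preprocess edge sets with Lemma~\ref{lem:rayshootspecial} and leaf sets with Lemma~\ref{lem:rayshoot}, and reuse the recurrence~\eqref{equ:querysegdetect}--\eqref{equ:querysegdetect10} for the query time. One small wording issue: at a visited node you must query $S_e$ for \emph{every} edge of \emph{every} child cell (not only those crossed by $\rho$), since the first-hit segment may be stored on an edge of a child cell that $\rho$ does not cross; your recurrence $Q(m)=O(r\log m)+O(\sqrt{r})\,Q(m/r)$ already reflects this, so the analysis is fine.
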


\paragraph{Acknowledgment.} The author would like to thank Timothy Chan for the discussions on derandomizing his partition tree.



 \bibliographystyle{plainurl}
\bibliography{reference}

%
%

\end{document}